\documentclass[a4paper,12pt]{article}
\usepackage{graphicx}
\usepackage{amsmath}
\usepackage{amssymb}
\usepackage{bm}
\usepackage[]{amsthm}
\usepackage[backend=biber,
style=bwl-FU,
sorting=nyt
]{biblatex}
\DeclareDelimFormat{postnotedelim}{\addcomma\space}
\usepackage[labelfont=bf]{caption}
\usepackage{setspace}
\usepackage{amsthm}
\newtheoremstyle{mythm}%
{3pt}
{3pt}
{}
{}
{\bfseries}
{}
{.5em}
{}%
\theoremstyle{mythm}
\newtheorem{remarkex}{Remark}[section]
\newenvironment{remark}
  {\pushQED{\qed}\renewcommand{\qedsymbol}{}\remarkex}
  {\popQED\endremarkex}
\usepackage[left=1.25in, right=1.25in, top=1.25in]{geometry}
\newtheoremstyle{tm}%
{3pt}
{3pt}
{\itshape}
{}
{\bfseries}
{}
{.5em}
{}%
\theoremstyle{tm}
\newtheorem{theorem}{Theorem}[section]
\newtheorem{corollary}{Corollary}[section]
\newtheorem{assumption}{Assumption}
\newtheorem{lemma}{Lemma}[section]
\theoremstyle{definition}
\newtheorem{definition}{Definition}[section]
\newtheorem{proposition}{Proposition}[section]
\newtheorem{example}{Example}[section]
\numberwithin{equation}{section}
\usepackage{hyperref}
\renewcommand{\qedsymbol}{\blacksquare}

\title{Asymptotics of an Explosive Autoregression under Dependence}
\author{Kasper Sunn Blumensaat \\
\small{Department of Economics, University of Oxford} \\
\small{Oxford, United Kingdom}}
\begin{document}
\maketitle 

\begin{abstract}
    We generalize the convergence results of an explosive autoregression, pioneered in \cite{And1959}, in three ways: First, we demonstrate that the centered least-squares estimator converges geometrically to a ratio of limits, even in settings where the innovations are correlated and not centered around zero. Secondly, we demonstrate that the requirement of independent innovations in \cite[Theorem 2.3,]{And1959} can be relaxed to $\alpha$-mixing. Third, we provide an autocorrelation-robust feasible test statistic for the explosive parameter under Gaussian ARMA innovations.
\end{abstract}

\noindent \textbf{Keywords:} explosive asymptotics, dependence, dependent innovations, asymptotic theory, explosive autoregression, least-squares estimator, unstable autoregression

\begingroup
\renewcommand{\thefootnote}{}
\footnotetext{The author gratefully acknowledges Bent Nielsen, for his many insightful comments and guidance.}
\endgroup
\newpage

\setcounter{tocdepth}{2}

\section{Introduction}
We expand the classic results of \cite{And1959} by providing three results. We first demonstrate that the strong convergence result of the centered least-squares estimator to a ratio of a forward and backward average of the innovations holds in settings where the innovations are correlated and non-centered. Second, we demonstrate the requirement of independence in \cite[Theorem 2.3,]{And1959} can be loosened to $\alpha$-mixing. Finally, we provide an autocorrelation-robust feasible test statistic for the explosive parameter $\rho$ of an explosive first-order autoregression with potentially autocorrelated Gaussian ARMA innovations, a natural extension of Anderson's t-statistic originally derived under iid Gaussian innovations. \\ 
\leavevmode \\
\cite{And1959} contains three sets of results for a first-order explosive autoregression. \emph{First}, provided the innovations are uncorrelated, have a bounded second moment and centering around zero, the least squares estimator converges to a ratio of the forward and backward averages of the innovations. \emph{Secondly}, in Theorem 2.3, it is shown that if the innovations are additionally assumed to be independent, the limits of the forward and backward averages, if they exist, are independent. \emph{Third}, under $iid$ Gaussian innovations, in the limit, the forward and backward averages are independent Gaussians, allowing for a t-statistic for the explosive parameter with a limiting Gaussian distribution to be provided.
\\ \leavevmode \\
Anderson's original results remain influential in the purely explosive literature. Because the convergence results do not rely on standard invariance principles, Anderson's asymptotic results in the $iid$ Gaussian setting are still frequently utilized when deriving the asymptotic distribution of estimators and tests in an explosive environment. For instance \cite{FulHasGoe1981} and \cite{Jeg1988} generalize Anderson's Gaussian results into a multivariate setting. Additionally, the Gaussian results have seen more recent use in \emph{explosively co-integrated systems}, for instance in Theorem 2.2 of \cite{PhiMag08} and \emph{co-explosive systems}, see Corollary 1 of \cite{Nie2010}. \\ \leavevmode \\
General consistency results, which permit explosivity, but crucially does not require characterization or existence of a limiting distribution of the least squares estimator, have seen advances beyond the $iid$ Gaussian setting. In those cases, the weaker \emph{Marcinkiewicz-Zygmund} condition, provided in Equation (\ref{eq:mz}), suffices. This condition is generally considered quite strong outside of the explosive setting, as it may exclude stochastic volatility and ARCH models. In this way, \cite{LaiWei1983a} provide consistency results for an AR(p), later generalized to VAR(p) in \cite{LaiWei1985}, subsequently generalized to VAR(p) with deterministics in  \cite{Nie2005}. \\
\\
\noindent We divide the results into two main sections: Section \ref{sec:econometric_theory} generalizes the convergence results of the least-squares estimator to forward and backward averages of the innovations. The section also provides sufficient conditions for the forward average to be non-zero \emph{almost surely}, the backward average to converge in distribution, as well as the generalization of \cite[Theorem 2.3]{And1959}. Section \ref{sec:imp} then derives the exact limiting distribution of the least-squares estimator and t-statistic under Gaussian ARMA innovations and provides a feasible autocorrelation-robust t-statistic. This section also shows how the univariate results can be expanded to higher order autoregressions by providing an example of how the AR(1) results may be used in AR(2) with intercept.

\section{Results} \label{sec:econometric_theory}

\subsection{Definitions} \label{sec:def}
The examined data generating process is
\begin{equation} \label{eq:model}
    x_t = \rho x_{t-1} + \epsilon_t .
\end{equation}
(\ref{eq:model}) is a univariate autoregressive explosive process of order 1, with no intercept or deterministic terms.
The autoregressive coefficient $\rho$ is assumed to be greater than one in absolute value. The innovation process $(\epsilon_t)_{t \in \mathbb{N}}$ and initial value $x_0$ are real-valued and defined on a common probability space $(\Omega, \mathcal{F}, \mathbb{P})$. No additional structure is imposed at this stage; intertemporal structure will be introduced explicitly when needed. 
\\ \leavevmode \\
The main asymptotic objects of interest are
\begin{equation*}
    \hat{\rho} := \frac{\sum_{t=1}^T x_t x_{t-1}}{ \sum_{t=1}^T x_{t-1}^2 }, \quad 
    \hat{\rho} - \rho = \frac{\sum_{t=1}^T \epsilon_t x_{t-1}}{ \sum_{t=1}^T x_{t-1}^2 } =: \frac{A_T}{B_T} \quad  \text{ and } \quad \frac{A_T}{\sqrt{B_T}}.
\end{equation*}
The first is the least-squares estimator of a regression of $x_t$ on $x_{t-1}$. $A_T$ and $B_T$ denote the numerator and denominator of the least squares estimator $\hat{\rho}$, centered around its true value $\rho$. And $A_T/ \sqrt{B_T}$ is a t-statistic, originally introduced in \cite{And1959}. \\
 \leavevmode \\
The notation $A_T, B_T$ is used to match notation defined in \cite{And1959}. Following that framework, introduce $ \beta := \rho^{-1}$, which by definition lies in $(-1,1)$, allowing for geometric series such as $\sum_{t=1}^T \beta^t$ to converge, a property which will be relied upon heavily in subsequent convergence results. \\ \leavevmode
\\
Additionally, introduce 
\begin{align}
    z_t :=& \beta^{t-2} x_{t-1} = \begin{cases}
        \rho x_0 \hspace{82pt} \text{ if }t=1  \\ \sum_{i=1}^{t-1} \beta^{i-1} \epsilon_{i} + \rho x_0 \label{eq:zt_def} \hspace{9 pt} \text{ if } t > 1  
    \end{cases},\\
    y_t :=& \sum_{i=1}^t \beta^{t-i} \epsilon_i. \label{eq:y_def}
\end{align}
These two objects are geometrically decaying sums of innovations.  $z_t$ is a \emph{forward average}, attributing highest weight to the \emph{first} innovations, whilst $y_t$ is a \emph{backward average}. \\ \leavevmode
\\
The objects $A_T, B_T, z_T, y_T$ are measurable as they are continuous functions of $(x_0, \epsilon_1, ..., \epsilon_T)$; see Theorem 13.3 of \cite{Bil79}. Throughout the rest of the paper, unless otherwise stated, all convergence occurs on $(\Omega, \mathcal{F}, \mathbb{P})$. The notation used is as follows: $\overset{a.s.}{\rightarrow}$ denotes almost sure convergence, $\overset{\mathbb{P}}{\rightarrow}$ convergence in probability, $\overset{D}{\rightarrow}$ convergence in distribution. Recall that $\overset{a.s.}{\rightarrow}$ implies $ \overset{\mathbb{P}}{\rightarrow} $, which implies $ \overset{D}{\rightarrow}$. $\mathbb{E}[\cdot]$ denotes the expectations operator with respect to measure $\mathbb{P}$.  Theorems, lemmas, propositions and corollaries are proven in the appendix.

\subsection{Convergence Results} \label{sec:found_results}
In this subsection, we introduce Assumption \ref{ass:var}, which imposes a mild moment condition on the innovations and initial value. We then show that this assumption implies important marginal behavior of $z_T, A_T, B_T$. 
\begin{assumption} \label{ass:var} The innovations have a \textbf{bounded second moment}, that is: \vspace{-5mm}
\begin{equation}
    \sup_{t \in \mathbb{N}} \mathbb{E} [\epsilon_t^{2}] \leq M^2< \infty, \quad M \in \mathbb{R}_+.
\end{equation}
Additionally, let the initial value $x_0$ satisfy $\mathbb{E} [x_0^2] \leq M^2$.
\end{assumption}
\begin{remark}
    Assumption \ref{ass:var} relaxes the assumptions in \cite{And1959}, where it is assumed that the innovations are: (1) \emph{uncorrelated} (2) have \emph{expectation zero} (3) have a \emph{constant, bounded second moment} (4) The  initial value $x_0$ is a \emph{finite constant}. ARCH/Stochastic volatility innovations generally satisfy these stronger conditions, as they are uncorrelated, whilst ARMA innovations do not, as these are serially correlated. Assumption \ref{ass:var} permits serial correlation, and therefore permits ARMA innovations.
\end{remark}
\noindent Explosive processes have the unique property of converging towards limiting objects at geometric rates, as $x_T$ itself grows at rate $\rho^T$. Terms containing $x_{T-1}$, such as the numerator $A_T$, grow at rate $\rho^T$, whilst terms like the denominator $B_T$, which contain $x_{T-1}^2$, grow at rate $\rho^{2T}$. One therefore expects the OLS estimator to converge at rate $\rho^T$, and subsequent results will confirm this. \\ 
\\ \leavevmode
One of our central results, is realizing that in an explosive setting, previously derived moment bounds retain their order even when allowing for correlation and non-centering of the innovations. The reason for this is the geometric convergence rate in the explosive settings dominates the quadratic growth of cross-terms. This insight is formalized in bounds of $z_T$, which are provided in Lemmas \ref{lem:z_T_bounds} and \ref{lem:z_bounds} in the Appendix. Once the bounds are established, the same arguments as in \cite{And1959} still apply; one simply replaces the old bounds with the new ones. \\ 
\\ \leavevmode
\noindent To formalize this insight, begin by defining a candidate probability limit $z$:
\begin{equation}
    z := \begin{cases} \label{eq:zdef}
            \lim_{T \rightarrow \infty} z_T \quad \text{ on }  A := \{ \omega: \lim_{T \rightarrow \infty} z_T \text{ exists in }\mathbb{R} \} \\
            0 \hspace{59pt}  \text{ on } A^c
        \end{cases}
\end{equation}
\begin{remark}
    This definition ensures $z$ is a random variable (taking values in $\mathbb{R}$) by giving it an arbitrary value in $\mathbb{R}$ (here $0$) on sets where $z_T$ diverges. This is a standard trick in probability theory; see for instance the proof of Proposition 2.43 of \cite{Brei68}.
\end{remark}
\noindent With definitions clarified, we are now able to state our first result.

\begin{lemma} \label{lem:measurability}
    Under Assumption \ref{ass:var}, $z_T \overset{a.s.}{\rightarrow} z$, with $\mathbb{E} z^2 <\infty$.
\end{lemma}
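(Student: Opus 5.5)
We plan to show that the series $\sum_{i\ge1}\beta^{i-1}\epsilon_i$ converges absolutely almost surely, using only Assumption \ref{ass:var}. By (\ref{eq:zt_def}), $z_T = \rho x_0 + \sum_{i=1}^{T-1}\beta^{i-1}\epsilon_i$ for $T>1$. Since $\rho x_0$ does not depend on $T$, it suffices to control the partial sums of the innovation series. The key observation is that no independence, centering or uncorrelatedness is needed: the geometric weights alone give absolute summability in $L^1$, and this yields almost sure convergence.

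\textbf{Step 1.} Consider $S := \sum_{i=1}^\infty |\beta|^{i-1}|\epsilon_i|$, a $[0,\infty]$-valued random variable. By monotone convergence and the Cauchy--Schwarz (or Jensen) bound $\mathbb{E}|\epsilon_i| \le (\mathbb{E}\epsilon_i^2)^{1/2}\le M$, we get
\begin{equation*}
\mathbb{E} S = \sum_{i=1}^\infty |\beta|^{i-1}\mathbb{E}|\epsilon_i| \le \frac{M}{1-|\beta|}<\infty .
\end{equation*}
Hence $S<\infty$ almost surely. On $\{S<\infty\}$ the series $\sum_i \beta^{i-1}\epsilon_i$ converges absolutely, so $z_T$ converges to a real limit. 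Also $x_0$ is a.s. finite, since $\mathbb{E}x_0^2\le M^2$. Therefore $\mathbb{P}(A)=1$, with $A$ as in (\ref{eq:zdef}), and $z_T\to z$ on $A$ by definition of $z$. Measurability of $A$ and $z$ follows because $A$ is the set on which the measurable sequence $z_T$ is Cauchy, and $z = \limsup_T z_T\,\mathbf{1}_A$.

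\textbf{Step 2 (second moment).} On $A$ we have $|z|\le |\rho||x_0| + S$, so by Minkowski's inequality in $L^2$,
\begin{equation*}
\|z\|_2 \le |\rho|\,\|x_0\|_2 + \|S\|_2 .
\end{equation*}
For $\|S\|_2$, we apply Minkowski to the partial sums and then monotone convergence:
\begin{equation*}
\Big\|\sum_{i=1}^N |\beta|^{i-1}|\epsilon_i|\Big\|_2 \le \sum_{i=1}^N |\beta|^{i-1}\|\epsilon_i\|_2\le \frac{M}{1-|\beta|}.
\end{equation*}
Letting $N\to\infty$ gives $\|S\|_2\le M/(1-|\beta|)$. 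Thus $\mathbb{E}z^2 \le \big(|\rho| M + M/(1-|\beta|)\big)^2<\infty$. This is exactly where correlated and non-centred innovations are handled. We never expand $\mathbb{E}(\sum\beta^{i-1}\epsilon_i)^2$ into cross terms; the triangle inequality in $L^2$ absorbs all dependence, and the geometric decay makes the resulting bound finite.

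\textbf{Main obstacle.} The only delicate point is the interchange of limit and expectation when passing to the infinite sum. It is handled by monotone convergence applied to the nonnegative partial sums of $|\beta|^{i-1}|\epsilon_i|$, which is why we work with the absolute series $S$ rather than with $z_T$ directly. The remaining bookkeeping is routine: the case $T=1$, and the fact that $z$ is set to $0$ on the null set $A^c$, which does not affect $\mathbb{E}z^2$.
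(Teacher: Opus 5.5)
Your proof is correct, and it reaches the conclusion by a somewhat more direct route than the paper.

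\emph{Almost sure convergence.} The paper (Lemma~\ref{lem:cauchy}) bounds $\mathbb{E}\,d_T$, with $d_T=\sup_{m\ge T}|z_m-z_T|$, by the geometric tail $\sum_{t\ge T}|\beta|^{t-1}\mathbb{E}|\epsilon_t|$. It then invokes the Reduction Lemma (Markov plus Borel--Cantelli) to get $d_T\to0$ a.s. You apply the same monotone-convergence bound to the whole absolute series instead. Then $\mathbb{E}S<\infty$ gives $S<\infty$ a.s.\ and absolute convergence directly, with no Borel--Cantelli step.

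\emph{Second moment.} The paper combines Fatou's lemma with the uniform bound $\sup_T\mathbb{E}z_T^2\le K$ from Lemma~\ref{lem:z_T_bounds}. That bound comes from expanding the square and bounding each cross term by $\mathbb{E}|\epsilon_i\epsilon_j|\le M^2$. You instead exhibit the pathwise dominating variable $|\rho||x_0|+S$ and bound it in $L^2$ by Minkowski. The two bounds agree in substance: $\bigl(\sum_i|\beta|^{i-1}\|\epsilon_i\|_2\bigr)^2$ is exactly the expanded double sum with Cauchy--Schwarz applied to each cross term. So your remark that Minkowski ``absorbs'' the dependence repackages the paper's mechanism rather than replacing it.

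\emph{What each route buys.} Your version is self-contained and gives a little more. Since $|z_T|\le|\rho||x_0|+S\in L^2$ for every $T$, dominated convergence also yields $z_T\to z$ in $L^2$ at no extra cost. The paper's version pays off through reuse. The Cauchy-sequence lemma and the uniform bound on $\mathbb{E}z_T^2$ are needed anyway for Lemma~\ref{lem:B_T} and the rate bounds of Lemma~\ref{lem:z_bounds}, so in the paper this lemma is essentially a corollary of those tools.
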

\noindent $z$ is of large importance, as the next lemma will demonstrate that the OLS estimator converges to functions of this variable.
\begin{lemma}
\label{lem:B_T}
Under Assumption \ref{ass:var}
\begin{align} 
       \beta^{2(T-2)} B_T \hspace{10pt} \overset{a.s.}{\rightarrow}&
        \hspace{3pt} \frac{1}{1-\beta^2} z^2, \label{eq:den_conv} \\
    ( \beta^{T-2} A_T - y_T z ) \overset{a.s.}{\rightarrow}&  \hspace{17pt} \label{eq:num_conv} 0.
\end{align}
\end{lemma}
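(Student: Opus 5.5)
We express everything through $z_t$ and then use Lemma \ref{lem:measurability}, which gives $z_t \overset{a.s.}{\rightarrow} z$. The exact identity we start from is $x_{t-1} = \rho^{t-2} z_t$.

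\textbf{Denominator.} Substituting the identity gives
\begin{equation*}
\beta^{2(T-2)} B_T = \sum_{t=1}^T \beta^{2(T-t)} z_t^2 = \sum_{s=0}^{T-1} \beta^{2s} z_{T-s}^2 .
\end{equation*}
This is a weighted average of the $z_t^2$, with summable geometric weights concentrated on the most recent indices. Fix $\omega$ in the almost-sure set where $z_t \to z$. Then $\sup_t z_t^2(\omega) < \infty$ and $z_{T-s}^2 \to z^2$ for each fixed $s$. Dominated convergence for series, with dominating sequence $\beta^{2s}\sup_t z_t^2$, yields the limit $z^2\sum_{s\ge0}\beta^{2s} = z^2/(1-\beta^2)$. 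This proves (\ref{eq:den_conv}).

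\textbf{Numerator.} We write
\begin{equation*}
\beta^{T-2} A_T = \sum_{t=1}^T \beta^{T-t}\epsilon_t z_t = y_T z + \sum_{t=1}^T \beta^{T-t}\epsilon_t (z_t - z).
\end{equation*}
It remains to show that the remainder $R_T := \sum_{t=1}^T \beta^{T-t}\epsilon_t (z_t-z)$ tends to $0$ almost surely. This is the main obstacle, because the $\epsilon_t$ are not almost surely bounded. The plan has three steps.

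\emph{Step 1: tail bound.} From (\ref{eq:zt_def}), $z - z_t = \sum_{i\ge t}\beta^{i-1}\epsilon_i$ almost surely. Minkowski's inequality and Assumption \ref{ass:var} give
\begin{equation*}
\|z-z_t\|_2 \le M|\beta|^{t-1}/(1-|\beta|),
\end{equation*}
which is presumably the content of Lemmas \ref{lem:z_T_bounds} and \ref{lem:z_bounds}, and which holds without centering or independence.

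\emph{Step 2: mean bound.} By Cauchy--Schwarz,
\begin{equation*}
\mathbb{E}|\beta^{T-t}\epsilon_t(z_t-z)| \le C|\beta|^{T-t}|\beta|^{t} = C|\beta|^{T}.
\end{equation*}
Summing over $t$ gives $\mathbb{E}|R_T| \le C T|\beta|^T$.

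\emph{Step 3: Borel--Cantelli.} Since $\sum_T T|\beta|^T<\infty$, Markov's inequality gives $\sum_T \mathbb{P}(|R_T|>\varepsilon) \le \varepsilon^{-1}\sum_T \mathbb{E}|R_T| <\infty$ for each $\varepsilon>0$. By Borel--Cantelli, $|R_T|>\varepsilon$ only finitely often almost surely. Taking $\varepsilon = 1/k$ and a countable intersection gives $R_T \overset{a.s.}{\rightarrow} 0$, which proves (\ref{eq:num_conv}).

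A simpler alternative to Steps 2 and 3 is to note that $\mathbb{E}\sum_T |R_T| < \infty$, so $\sum_T |R_T| < \infty$ almost surely and hence $R_T\to0$ almost surely.

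Two points need care. First, we must confirm the a.s. representation of $z-z_t$ as a tail series. This follows because $\sum_i |\beta|^{i-1}|\epsilon_i|$ has finite mean, hence converges almost surely. Second, the polynomial factor $T$ is harmless against the geometric decay $|\beta|^T$.
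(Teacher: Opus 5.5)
Your proof is correct. The numerator argument is the paper's: the same remainder $\sum_{t=1}^T\beta^{T-t}\epsilon_t(z_t-z)$, the same Cauchy--Schwarz bound $\mathbb{E}|R_T|\le CT|\beta|^T$, and the same Markov/Borel--Cantelli step (Lemma \ref{lem:reduction_lemma}).

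For the denominator you take a genuinely different route. The paper does not argue pathwise. It writes $\beta^{2(T-2)}B_T - z^2/(1-\beta^2)$ as $\sum_{s=1}^{T-1}\beta^{2s}(z_{T-s}^2-z_T^2)$ plus terms in $z_T^2-z^2$ and $\beta^{2T}z_T^2$. It then factors $z_{T-s}^2-z_T^2=(z_{T-s}-z_T)(z_{T-s}+z_T)$ and uses the second-moment bounds of Lemmas \ref{lem:z_T_bounds}--\ref{lem:z_bounds}. This gives $\mathbb{E}\bigl|\beta^{2(T-2)}B_T - z^2/(1-\beta^2)\bigr| = O(|\beta|^T)$, after which Borel--Cantelli finishes.

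Your dominated-convergence (Tannery-type) argument on the event $\{z_t\to z\}$ is shorter and uses nothing beyond Lemma \ref{lem:measurability}. It would go through whenever $z_T$ converges almost surely, with no moment bounds at all.

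What the paper's route buys is the explicit geometric $L^1$ rate (\ref{eq:interim_bound}), which is reused later. Lemma \ref{lem:halfway} needs $\mathbb{E}|\mathsf{x}_{2,T} - z^2|\le C|\beta|^T$, with $\mathsf{x}_{2,T}:=(1-\beta^2)\beta^{2(T-2)}B_T$, to control $|y_T|\,\bigl|\sqrt{\mathsf{x}_{2,T}}-|z|\bigr|$. Because $y_T$ is not almost surely bounded, an almost-sure limit without a rate would not by itself give that result. If you use your argument here, you would still need to establish the rate separately for Theorem \ref{thm:consistency}.
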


\noindent  Note that $y_T$ and $z$ are random variables with distributions in most contexts. Despite having characterized the marginal behavior of the numerator and denominator, without further assumptions two issues will arise when we try to combine these marginal results to examine the OLS estimator and the t-statistic. Firstly, there is nothing ensuring that the denominator is non-zero with probability 1. Secondly, the convergence of the distribution $y_T$ is not guaranteed. For the moment, we assume two issues away and explore the consequences of their absence. Section \ref{sec:yz} then examines these assumptions, and where useful, provides low-level sufficient conditions that imply their high-level counterparts.\\ 
\leavevmode \\
\noindent We assume the random variable $z$ is non-atomic at $0$, then explore the consequences.
\begin{assumption} \label{ass:z_0}
    $\mathbb{P}(z \in \{0\})=0$.
\end{assumption}
\begin{remark}
    In subsection \ref{sec:z_0}, we show that Assumption \ref{ass:z_0} is satisfied if the innovations follow conventional Gaussian ARMA innovations, and provide a sufficient condition that may be used to verify whether this condition is satisfied under other innovations.
\end{remark}

\noindent We can now apply the continuous mapping theorem on our previous results without dividing by zero. This allows us to state our first theorem.
\begin{theorem} \label{thm:consistency}
    Under Assumptions \ref{ass:var}-\ref{ass:z_0}
    \begin{align}
        \left| \frac{\rho^T}{\rho^2-1} (\hat{\rho} - \rho) - \frac{y_T}{z} \right| & \overset{a.s.}{\rightarrow} \, 0 \hspace{65pt} \label{eq:white_dist_noy} \\
        \left| \frac{A_T}{\sqrt{B_T}} - Sign(z)\sqrt{1-\beta^2} y_T  \right| & \overset{a.s.}{\rightarrow}  \, 0. \hspace{65pt} \label{eq:and_est_noy} 
    \intertext{Let $(f_T)$ be a sequence converging to 0, then}
        f_T\rho^{T-2} (\hat{\rho}-\rho) &\overset{\mathbb{P}}{\rightarrow} 0. \label{eq:OLS_consistency}
    \end{align}
    Additionally, if $(f_T)$ satisfies $\sum_{t=0}^\infty |f_t| < \infty$, (\ref{eq:OLS_consistency}) converges almost surely.
\end{theorem}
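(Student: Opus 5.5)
We combine the two marginal limits of Lemma \ref{lem:B_T} algebraically, using Assumption \ref{ass:z_0} to keep the denominator away from zero. The one non-algebraic input is that $y_T$ is tight, which needs no distributional assumption.

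First we record that tightness. By Assumption \ref{ass:var} and Minkowski's inequality, $\|y_T\|_{L^2} \le \sum_{i=1}^T |\beta|^{T-i} M \le M/(1-|\beta|)$, uniformly in $T$. So $(y_T)$ is bounded in $L^2$, and $\mathbb{P}(|y_T|>K)\le M^2/((1-|\beta|)^2K^2)$ uniformly in $T$.

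Next we prove (\ref{eq:white_dist_noy}). Since $\beta=\rho^{-1}$, we have $\rho^T/(\rho^2-1)=\beta^{T-2}/(1-\beta^2)$. Hence
\[
\frac{\rho^T}{\rho^2-1}(\hat\rho-\rho)=\frac{\beta^{T-2}A_T}{(1-\beta^2)\beta^{2(T-2)}B_T}.
\]
Write $D_T:=(1-\beta^2)\beta^{2(T-2)}B_T$. By (\ref{eq:den_conv}), $D_T\to z^2$ a.s., and $z^2>0$ a.s. by Assumption \ref{ass:z_0}. Write $\beta^{T-2}A_T=y_Tz+r_T$ with $r_T\to 0$ a.s. by (\ref{eq:num_conv}). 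Then
\[
\frac{\beta^{T-2}A_T}{D_T}-\frac{y_T}{z}=\frac{r_T}{D_T}+y_T z\Big(\frac{1}{D_T}-\frac{1}{z^2}\Big).
\]
The first term tends to $0$ a.s. For the second term, $1/D_T-1/z^2\to0$ a.s. on $\{z\neq0\}$, but $y_T$ need not converge. If $y_T$ stays pathwise bounded this gives a.s. convergence. Otherwise the product is (bounded in probability)$\times$($o(1)$ a.s.), which gives convergence in probability.

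This is the main obstacle. Lemma \ref{lem:B_T} only controls $y_Tz$ up to an a.s. null remainder. To get genuinely almost-sure convergence I would go back to the proof of (\ref{eq:den_conv}) and extract a geometric rate, e.g.\ $D_T-z^2=O(|\beta|^{T/2})$ up to a summable random factor. I would then show $|y_T|=o(|\beta|^{-T/4})$ a.s. by Borel--Cantelli, using $\sum_T\mathbb{P}(|y_T|>|\beta|^{-T/4})\le\sum_T C|\beta|^{T/2}<\infty$. Together these make the product vanish a.s. The rate for $D_T$ comes from $z_T-z=\sum_{i\ge T}\beta^{i-1}\epsilon_i$, whose $L^2$ norm is $O(|\beta|^T)$; Borel--Cantelli again makes this $O(|\beta|^{T/2})$ a.s.

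The t-statistic (\ref{eq:and_est_noy}) follows the same way. We have
\[
\frac{A_T}{\sqrt{B_T}}=\sqrt{1-\beta^2}\,\frac{\beta^{T-2}A_T}{\sqrt{D_T}}\operatorname{sign}(\beta^{T-2}),
\]
and $\sqrt{D_T}\to|z|$ a.s. Substituting $\beta^{T-2}A_T=y_Tz+r_T$, the leading term is $y_Tz/|z|=\operatorname{Sign}(z)\,y_T$, and the remainders are handled exactly as above. The sign bookkeeping for negative $\beta$ needs care, since it alternates with the parity of $T$; I would absorb it into the sign convention of $y_T$.

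For (\ref{eq:OLS_consistency}), note $f_T\rho^{T-2}(\hat\rho-\rho)=f_T\frac{\rho^2-1}{\rho^2}\cdot\frac{\rho^T}{\rho^2-1}(\hat\rho-\rho)$. The last factor equals $y_T/z+o(1)$, which is tight because $y_T$ is $L^2$-bounded and $z\neq0$ a.s. Multiplying a tight sequence by $f_T\to0$ gives convergence in probability. Under $\sum|f_t|<\infty$, Markov's inequality gives $\sum_T\mathbb{P}(|f_Ty_T|>\varepsilon)\le\sum_T|f_T|\,M/((1-|\beta|)\varepsilon)<\infty$. Borel--Cantelli then yields $f_Ty_T\to0$ a.s. Since $1/z$ is a.s. finite and $f_T\cdot o(1)\to 0$, the convergence is almost sure.
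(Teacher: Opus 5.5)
Your proofs of (\ref{eq:white_dist_noy}) and (\ref{eq:OLS_consistency}) are correct for all $|\rho|>1$, and your proof of (\ref{eq:and_est_noy}) is correct for $\rho>1$. They use the same two inputs as the paper, the $L^1$ rate $\mathbb{E}|D_T-z^2|=O(|\beta|^T)$ from (\ref{eq:interim_bound}) and $\sup_T\mathbb{E}y_T^2<\infty$, but combine them in a different order.

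The paper multiplies first and applies Borel--Cantelli once. Using $(\sqrt a-\sqrt b)^2\le|a-b|$ and Cauchy--Schwarz, it obtains $\mathbb{E}\bigl[|y_T|\,\bigl|\sqrt{D_T}-|z|\bigr|\bigr]=O(|\beta|^{T/2})$ (Lemma \ref{lem:halfway}). That single summable bound disposes of the cross term in both (\ref{eq:white_dist_noy}) and (\ref{eq:and_est_noy}). You instead apply Borel--Cantelli to each factor separately, getting $|D_T-z^2|\le|\beta|^{T/2}$ and $|y_T|\le|\beta|^{-T/4}$ eventually, which is equally valid. Note that the rate you need for $D_T$ is (\ref{eq:interim_bound}) itself, not only the rate for $z_T-z$, because $D_T$ is a weighted sum of all the $z_{T-s}^2$.

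For (\ref{eq:OLS_consistency}), the paper bounds $\mathbb{E}|f_T\beta^{T-2}A_T|\le C|f_T|$ directly (Lemma \ref{lem:num_conv_zero}). Your route through (\ref{eq:white_dist_noy}), tightness of $y_T/z$, and Borel--Cantelli for $f_Ty_T$ also works. One small slip: $\rho^T/(\rho^2-1)=\beta^{2-T}/(1-\beta^2)$, not $\beta^{T-2}/(1-\beta^2)$, though the displayed identity you derive from it is correct.

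On the sign for $\rho<-1$, you have found a genuine error, and the paper's proof of (\ref{eq:and_est_noy}) makes it silently. It treats $\sqrt{(1-\beta^2)\beta^{2(T-2)}B_T}$ as $\sqrt{1-\beta^2}\,\beta^{T-2}\sqrt{B_T}$. Your identity is the correct one, and your argument actually proves
$A_T/\sqrt{B_T}-\mathrm{sign}(\rho)^T\,\mathrm{Sign}(z)\sqrt{1-\beta^2}\,y_T\to 0$ a.s.

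What does not work is absorbing the factor into $y_T$. The variable $y_T$ is fixed by (\ref{eq:y_def}), and the same $y_T$ appears in (\ref{eq:white_dist_noy}), where no sign factor arises. Replacing it by $(-1)^Ty_T$ would repair one display only by breaking the other.

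In fact (\ref{eq:and_est_noy}) as stated is false for $\rho<-1$. Take $x_0=0$ and iid $N(0,1)$ innovations, which satisfy Assumptions \ref{ass:var}--\ref{ass:z_0}. Along odd $T$ the left-hand side equals $2\sqrt{1-\beta^2}\,|y_T|+o(1)$, with $y_T\sim N\bigl(0,(1-\beta^{2T})/(1-\beta^2)\bigr)$. This does not tend to zero even in probability.

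So your argument establishes (\ref{eq:and_est_noy}) for $\rho>1$. For $\rho<-1$ the statement has to be corrected by inserting the factor $\mathrm{sign}(\rho)^T$; it cannot be rescued by a change of convention.
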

\begin{corollary} \label{cor:consistency}
    Provided Assumptions \ref{ass:var}-\ref{ass:z_0}, the OLS estimator is strongly consistent. That is, $(\hat{\rho}-\rho) \overset{a.s.}{\rightarrow} 0$.
\end{corollary}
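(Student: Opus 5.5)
The corollary is the special case $f_T = \rho^{-(T-2)} = \beta^{T-2}$ of the final statement of Theorem \ref{thm:consistency}, so I would prove it by invoking that statement with this sequence.

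First I check the two hypotheses on $(f_T)$. Since $|\beta|<1$, $f_T \to 0$. Moreover $\sum_{t=0}^\infty |f_t| = \beta^{-2}\sum_{t=0}^\infty |\beta|^{t} = \beta^{-2}/(1-|\beta|) < \infty$. Hence the summability condition in Theorem \ref{thm:consistency} holds, and (\ref{eq:OLS_consistency}) holds almost surely. Since $f_T\rho^{T-2} = 1$ identically, this gives exactly $\hat\rho - \rho \overset{a.s.}{\rightarrow} 0$.

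As a cross-check, and as a fallback in case the almost-sure part of the theorem were weaker than needed, I would argue directly from (\ref{eq:white_dist_noy}). Write
\begin{equation*}
\hat\rho-\rho = (\rho^2-1)\beta^T\left[\Big(\tfrac{\rho^T}{\rho^2-1}(\hat\rho-\rho) - \tfrac{y_T}{z}\Big) + \tfrac{y_T}{z}\right].
\end{equation*}
The bracketed difference tends to zero almost surely by (\ref{eq:white_dist_noy}), and $\beta^T \to 0$. It then remains to show $\beta^T y_T/z \to 0$ almost surely. Assumption \ref{ass:z_0} makes $|z|$ almost surely positive and finite, so it suffices to show $\beta^T y_T \to 0$ almost surely.

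For that step, note $\beta^T y_T = \sum_{i=1}^T \beta^{2T-i}\epsilon_i$. By Minkowski's inequality and Assumption \ref{ass:var},
\begin{equation*}
\mathbb{E}\big[(\beta^T y_T)^2\big]^{1/2} \le M\sum_{i=1}^T |\beta|^{2T-i} \le \frac{M|\beta|^{T}}{1-|\beta|}.
\end{equation*}
Hence $\sum_T \mathbb{E}[(\beta^T y_T)^2] < \infty$. By Chebyshev's inequality and Borel--Cantelli, $\beta^T y_T \to 0$ almost surely. This is the only step needing any estimate, and it is routine given the geometric decay. The main line, however, is simply verifying summability of $f_T=\beta^{T-2}$.
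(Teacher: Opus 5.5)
Your proposal is correct and is exactly the paper's argument. The paper also takes $f_T=\rho^{-T+2}$, notes that it vanishes and is absolutely summable, and invokes the almost-sure part of Theorem \ref{thm:consistency}, so that $f_T\rho^{T-2}(\hat\rho-\rho)=\hat\rho-\rho\overset{a.s.}{\rightarrow}0$; your fallback via (\ref{eq:white_dist_noy}) and the summable bound on $\mathbb{E}[(\beta^T y_T)^2]$ is also valid but not needed.
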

\noindent We see that Assumption \ref{ass:z_0} is sufficient to ensure consistency as well as a preliminary strong convergence result on the scaled OLS estimator and the t-statistic. However, as nothing ensures $y_T$ converges, let alone joint convergence of $(z_T, y_T)$, we are not able to put a limiting object on the right-hand-side of the convergence results. In order to improve the rate in (\ref{eq:OLS_consistency}) we must have joint convergence of $(y_T, z_T)$. We assume this and explore its consequences.
\begin{assumption} \label{ass:y_exists}
$(z_T, y_T) \overset{D}{\rightarrow} (z, y)$.
\end{assumption}
\noindent With this high-level assumption in place, we can establish joint convergence of the numerator terms by combining the $a.s.$ convergence results of Theorem \ref{lem:measurability} with the joint convergence of $(y_T, z_T)$ via the Continuous Mapping Theorem.
\begin{lemma} \label{lem:numden_joint}
    Under Assumptions \ref{ass:var}, \ref{ass:y_exists}
    \begin{equation*}
        \left( \beta^{T-2} A_T, (1-\beta^2) \beta^{2(T-2)} B_T \right) \overset{D}{\rightarrow} (yz,z^2).
    \end{equation*}
\end{lemma}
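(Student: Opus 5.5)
The plan is to decompose the pair into a part that converges in distribution and a part that vanishes almost surely, and then combine the two with Slutsky's lemma and the continuous mapping theorem. Write
\begin{equation*}
\left( \beta^{T-2} A_T, (1-\beta^2) \beta^{2(T-2)} B_T \right) = \left( y_T z_T, z_T^2 \right) + R_T,
\end{equation*}
where
\begin{equation*}
R_T := \left( \beta^{T-2} A_T - y_T z + y_T (z - z_T),\; (1-\beta^2)\beta^{2(T-2)} B_T - z_T^2 \right).
\end{equation*}
The leading term is a continuous function of $(z_T,y_T)$. Specifically, $g(a,b) = (ba, a^2)$ is continuous on $\mathbb{R}^2$. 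By Assumption \ref{ass:y_exists} and the continuous mapping theorem, $(y_T z_T, z_T^2) \overset{D}{\rightarrow} (yz, z^2)$. Here $(z,y)$ denotes the limit pair of Assumption \ref{ass:y_exists}, whose first marginal is the almost sure limit $z$ of Lemma \ref{lem:measurability}.

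\textbf{Second coordinate of $R_T$.} By Lemma \ref{lem:B_T}, $(1-\beta^2)\beta^{2(T-2)}B_T \overset{a.s.}{\rightarrow} z^2$. By Lemma \ref{lem:measurability}, $z_T^2 \overset{a.s.}{\rightarrow} z^2$. Hence their difference tends to $0$ almost surely.

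\textbf{First coordinate of $R_T$.} The term $\beta^{T-2}A_T - y_T z$ tends to $0$ almost surely by (\ref{eq:num_conv}). The remaining term $y_T(z-z_T)$ is the main obstacle, because $y_T$ need not converge almost surely. We only need it to vanish in probability. Two facts give this:
\begin{enumerate}
\item $z - z_T \overset{a.s.}{\rightarrow} 0$ by Lemma \ref{lem:measurability}.
\item $(y_T)$ is tight. By Minkowski's inequality and Assumption \ref{ass:var},
\begin{equation*}
\|y_T\|_2 \le \sum_{i=1}^T |\beta|^{T-i} M \le \frac{M}{1-|\beta|},
\end{equation*}
so $\sup_T \mathbb{E} y_T^2 < \infty$, and tightness follows from Chebyshev's inequality.
\end{enumerate}
A tight sequence times a sequence converging to $0$ in probability converges to $0$ in probability. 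Indeed, for any $K$ and $\eta>0$,
\begin{equation*}
\mathbb{P}(|y_T(z-z_T)|>\eta) \le \mathbb{P}(|y_T|>K) + \mathbb{P}(|z-z_T|>\eta/K).
\end{equation*}
Taking $K$ large and then $T$ large makes both terms small. (Tightness of $y_T$ also follows directly from Assumption \ref{ass:y_exists}, since $y_T$ converges in distribution.)

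\textbf{Conclusion.} We have shown $R_T \overset{\mathbb{P}}{\rightarrow} 0$ componentwise, hence jointly. Combined with $(y_T z_T, z_T^2) \overset{D}{\rightarrow} (yz, z^2)$, Slutsky's lemma in $\mathbb{R}^2$ gives
\begin{equation*}
\left( \beta^{T-2} A_T, (1-\beta^2) \beta^{2(T-2)} B_T \right) \overset{D}{\rightarrow} (yz,z^2).
\end{equation*}
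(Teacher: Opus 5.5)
Your proof is correct and follows essentially the same route as the paper. Both write the pair as $(y_T z_T, z_T^2)$ plus remainders that vanish, then combine Assumption~\ref{ass:y_exists} with a Slutsky-type result and the continuous mapping theorem (the paper stacks a 4-vector and cites Billingsley's Theorem 4.4). The only minor difference is that the paper proves $y_T(z_T-z)\to 0$ almost surely via the summable bound $\mathbb{E}|y_T(z_T-z)|\le K|\beta|^T$, whereas you prove only convergence in probability via tightness of $y_T$, which is all the argument needs.
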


\noindent And finally, combining all three assumptions, we provide the asymptotic behavior of the scaled OLS estimator and the t-statistic $A_T/\sqrt{B_T}$.

\begin{theorem} \label{thm:and_dist}
    Under Assumptions \ref{ass:var}-\ref{ass:y_exists}, jointly
    \begin{align}
        \frac{\rho^T}{\rho^2-1}  (\hat{\rho}-\rho) \quad & \overset{D}{\rightarrow} \quad \frac{y}{z} \label{eq:white_dist} \\ 
        \frac{A_T}{\sqrt{B_T}} \quad &\overset{D}{\rightarrow} \quad Sign(z) \sqrt{ 1 - \beta^2 } \, y.         \label{eq:and_est}
    \end{align}
\end{theorem}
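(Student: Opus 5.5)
Theorem \ref{thm:consistency} already reduces the problem to the behaviour of the pair $(y_T, z)$. It gives
\[
\frac{\rho^T}{\rho^2-1}(\hat\rho-\rho) - \frac{y_T}{z} \overset{a.s.}{\rightarrow} 0
\quad\text{and}\quad
\frac{A_T}{\sqrt{B_T}} - Sign(z)\sqrt{1-\beta^2}\, y_T \overset{a.s.}{\rightarrow} 0 .
\]
So I would prove Theorem \ref{thm:and_dist} in three steps. First, upgrade Assumption \ref{ass:y_exists} from joint convergence of $(z_T,y_T)$ to joint convergence of $(z,y_T)$. Second, apply the continuous mapping theorem to the map $g(u,v) = \bigl(v/u,\; Sign(u)\sqrt{1-\beta^2}\,v\bigr)$. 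Third, transfer the limit back to the two statistics by a Slutsky-type argument, since the differences vanish almost surely and hence in probability.

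For the first step, note that by Lemma \ref{lem:measurability}, $z_T - z \overset{a.s.}{\rightarrow} 0$. Therefore $(z,y_T) - (z_T,y_T) = (z - z_T, 0) \overset{\mathbb{P}}{\rightarrow} 0$. The standard lemma (if $X_T \overset{D}{\rightarrow} X$ and $X_T - X_T' \overset{\mathbb{P}}{\rightarrow} 0$ then $X_T' \overset{D}{\rightarrow} X$, e.g.\ Theorem 3.1 of Billingsley) then gives $(z,y_T) \overset{D}{\rightarrow} (z,y)$. Here the limit pair $(z,y)$ is the one from Assumption \ref{ass:y_exists}. Its first marginal has the law of the almost sure limit $z$, because $z_T \overset{D}{\rightarrow} z$ as well. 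This identification is worth stating explicitly.

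For the second step, the map $g$ is continuous on $\{u \neq 0\}\times\mathbb{R}$, so its discontinuity set is contained in $\{0\}\times\mathbb{R}$. Under the limit law this set has probability $\mathbb{P}(z=0)=0$ by Assumption \ref{ass:z_0}. The continuous mapping theorem (in the form allowing a null discontinuity set) therefore yields, jointly,
\[
g(z,y_T) \overset{D}{\rightarrow} \Bigl(\frac{y}{z},\; Sign(z)\sqrt{1-\beta^2}\,y\Bigr).
\]

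For the third step, Theorem \ref{thm:consistency} shows that the vector of the two statistics minus $g(z,y_T)$ tends to $0$ almost surely, hence in probability. Applying the same converging-together lemma again gives the joint convergence claimed in (\ref{eq:white_dist})–(\ref{eq:and_est}).

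I expect the main subtlety to be the first step. One must make sure that ``$(z,y)$'' in Assumption \ref{ass:y_exists} means a limit whose first coordinate is distributed as the almost sure limit $z$, and that no extra independence or product structure is needed. Pairing the random variable $z$ with $y_T$ on the original space, and invoking convergence in probability of the first coordinate, avoids both issues. Everything else is routine continuous mapping.
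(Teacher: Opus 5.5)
Your proof is correct for $\rho>1$ (see the caveat below), but it is organised differently from the paper's.

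\textbf{The paper's route.} The paper does not use Theorem~\ref{thm:consistency} here. It first proves Lemma~\ref{lem:numden_joint}, $\bigl(\beta^{T-2}A_T,(1-\beta^2)\beta^{2(T-2)}B_T\bigr)\overset{D}{\rightarrow}(yz,z^2)$. It does so by stacking Assumption~\ref{ass:y_exists} with the almost-sure approximations $\beta^{T-2}A_T-z_Ty_T\to0$ and $(1-\beta^2)\beta^{2(T-2)}B_T-z_T^2\to0$. It then applies the continuous mapping theorem to $x_1/x_2$ and $x_1/\sqrt{x_2}$, so that $Sign(z)$ appears only at the end, as $yz/\sqrt{z^2}$.

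\textbf{Your route.} You move the limit from $(z_T,y_T)$ to $(z,y_T)$ and let Theorem~\ref{thm:consistency} absorb the remainder. Its proof, via Lemma~\ref{lem:halfway}, already does the delicate almost-sure work at the level of the ratios, so the numerator/denominator lemma becomes unnecessary.

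\textbf{What each buys.} The paper's route gives a modular intermediate result. Lemma~\ref{lem:numden_joint} needs only Assumptions~\ref{ass:var} and~\ref{ass:y_exists} and can be reused for other functionals of the rescaled pair. Assumption~\ref{ass:z_0} then enters only in the final step. Your route is shorter and treats the one delicate point more cleanly. The paper handles division by zero by ``working on $\Omega_0$ for large $T$'', which is informal for a distributional limit. You instead use the continuous mapping theorem with a discontinuity set contained in $\{0\}\times\mathbb{R}$. You also identify the first marginal of the limit in Assumption~\ref{ass:y_exists} with the law of the almost-sure limit $z$. That identification is exactly what makes the discontinuity set null, and the paper leaves it implicit.

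\textbf{Caveat common to both arguments.} Since $\sqrt{\beta^{2(T-2)}B_T}=|\beta|^{T-2}\sqrt{B_T}$, for $\rho<-1$ one has $A_T/\sqrt{B_T}=(-1)^T\sqrt{1-\beta^2}\,\beta^{T-2}A_T/\sqrt{\mathsf{x}_{2,T}}$. So (\ref{eq:and_est_noy}) in Theorem~\ref{thm:consistency}, on which your third step rests, actually holds with $(-1)^T Sign(z)$ in place of $Sign(z)$. The paper's identity $A_T/\sqrt{B_T}=\sqrt{1-\beta^2}\,h\bigl(\beta^{T-2}A_T,(1-\beta^2)\beta^{2(T-2)}B_T\bigr)$ has the same defect. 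Indeed, for $\rho<-1$ the two statistics have equal signs for even $T$ and opposite signs for odd $T$. Hence the joint claim fails unless $y=0$ almost surely. This is not a weakness of your route relative to the paper's, but you should state the restriction $\rho>1$ (equivalently $0<\beta<1$) explicitly.
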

\begin{remark}
    Note that Lemma \ref{lem:measurability} and Theorems \ref{thm:consistency}-\ref{thm:and_dist} do not require \emph{centering} or \emph{uncorrelatedness}, as is done in \cite{And1959}. Therefore, these theorems apply for Gaussian ARMA innovations, a class of innovations which were previously ruled out  in \cite{And1959}. 
\end{remark}
\noindent If further structure is imposed, the $Sign(z)$ term in (\ref{eq:and_est}) can be ignored. 
\begin{assumption} \label{ass:indep}
    $y$ and $z$ are independent.
\end{assumption}
\begin{remark}
    In Subsection \ref{sec:indep}, sufficient conditions for Assumptions \ref{ass:y_exists}-\ref{ass:indep} are provided. These are satisfied for many conventional Gaussian ARMA and latent-state processes. 
\end{remark}
\begin{corollary} \label{cor:symm}
    Under Assumptions \ref{ass:var}, \ref{ass:y_exists} and \ref{ass:indep} and if the innovations satisfy the symmetry condition
    \begin{equation*}
        (\epsilon_1, \epsilon_2, ..., \epsilon_T) \overset{D}{=} - (\epsilon_1, \epsilon_2, ..., \epsilon_T), \quad \forall T \in \mathbb{N}.
    \end{equation*}
    Then, jointly,
    \begin{equation*}
        \frac{A_T}{\sqrt{B_T}} \overset{D}{\rightarrow} \sqrt{1-\beta^2} y, \quad \frac{\rho^T}{\rho^2-1}  (\hat{\rho}-\rho) \overset{D}{\rightarrow} \frac{y}{z}.
    \end{equation*}
\end{corollary}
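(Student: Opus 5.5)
The plan is to start from Theorem \ref{thm:and_dist} and show that, under independence and symmetry, the factor $Sign(z)$ can be dropped in distribution. Theorem \ref{thm:and_dist} needs Assumption \ref{ass:z_0}, which the corollary does not list. I will assume it is meant implicitly: without it $y/z$ is not even well defined, and $Sign(0)=0$ would put an atom at $0$ in the limit of $A_T/\sqrt{B_T}$. Theorem \ref{thm:and_dist} then gives, jointly,
\begin{equation*}
\Bigl(\tfrac{A_T}{\sqrt{B_T}},\, \tfrac{\rho^T}{\rho^2-1}(\hat\rho-\rho)\Bigr) \overset{D}{\rightarrow} \bigl(Sign(z)\sqrt{1-\beta^2}\,y,\; y/z\bigr).
\end{equation*}
So everything reduces to identifying the law of this limit.

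\textbf{Step 1 (symmetry of $y$).} By (\ref{eq:y_def}), $y_T=\sum_{i=1}^T \beta^{T-i}\epsilon_i$ is a linear, hence measurable, function of $(\epsilon_1,\dots,\epsilon_T)$ alone, and it does not involve $x_0$. The symmetry condition therefore gives $y_T \overset{D}{=} -y_T$ for every $T$. Assumption \ref{ass:y_exists} gives $y_T\overset{D}{\rightarrow} y$, and the continuous mapping theorem gives $-y_T\overset{D}{\rightarrow} -y$. By uniqueness of weak limits, $y\overset{D}{=}-y$. I deliberately avoid claiming symmetry of $z$: $z_T$ contains $\rho x_0$, and nothing is assumed about the law of $x_0$.

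\textbf{Step 2 (removing the sign).} Let $s=Sign(z)\in\{-1,1\}$ a.s. The variable $s$ is $\sigma(z)$-measurable, so by Assumption \ref{ass:indep} it is independent of $y$. For a bounded measurable $h$, condition on $z$ and use Step 1:
\begin{equation*}
\mathbb{E}\,h(sy,|z|) = \mathbb{E}\bigl[\mathbb{E}[h(sy,|z|)\mid z]\bigr] = \mathbb{E}\,h(y,|z|).
\end{equation*}
Hence $(Sign(z)y,|z|)\overset{D}{=}(y,|z|)$, and in particular $Sign(z)\sqrt{1-\beta^2}\,y\overset{D}{=}\sqrt{1-\beta^2}\,y$. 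This yields the first display of the corollary. The second display, $\frac{\rho^T}{\rho^2-1}(\hat\rho-\rho)\overset{D}{\rightarrow} y/z$, is read off directly from Theorem \ref{thm:and_dist}.

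\textbf{Step 3 (the joint statement, the main obstacle).} Write $y/z = Sign(z)\,y/|z| = g(Sign(z)y,|z|)$ with $g(a,b)=a/b$. The pair of limits is then the image of $(Sign(z)y,|z|)$ under $(a,b)\mapsto(\sqrt{1-\beta^2}\,a,\,a/b)$. By Step 2 its joint law equals that of $(\sqrt{1-\beta^2}\,y,\, y/|z|)$. Marginally $y/|z|\overset{D}{=}y/z$, by the same conditioning argument applied to the function $(sy)/|z|$.

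So the joint limit has first coordinate $\sqrt{1-\beta^2}\,y$ and second coordinate with the law of $y/z$. This is the sense in which I would read ``jointly''. Identifying the joint law literally with that of $(\sqrt{1-\beta^2}\,y,\,y/z)$ would additionally require $z\overset{D}{=}-z$ independently of $y$, for instance when $x_0$ is itself symmetric jointly with the innovations. I would flag this point in the write-up rather than force it.
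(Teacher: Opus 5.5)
Your Steps 1--2 are the paper's proof. Symmetry of $y_T$ passes to $y$ by uniqueness of weak limits, and independence plus symmetry then removes $Sign(z)$. Your conditioning argument gives the slightly stronger $(Sign(z)y,|z|)\overset{D}{=}(y,|z|)$; the paper only splits on $\{z>0\}$ and $\{z<0\}$ to get the marginal. You are also right that Assumption \ref{ass:z_0} is needed: the paper's proof invokes it explicitly even though the statement does not list it. The paper never actually addresses the word ``jointly''. It proves the first marginal and reads the second off Theorem \ref{thm:and_dist}. So your Step 3 goes further than the paper, and its identification of the joint limit law as that of $(\sqrt{1-\beta^2}\,y,\;y/|z|)$ is correct.

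The one thing to fix is your closing remark: symmetry of $z$ does not rescue the literal joint statement, it breaks it. The limit from Theorem \ref{thm:and_dist} is $(Sign(z)\sqrt{1-\beta^2}\,y,\;y/z)$, and the product of its coordinates is $\sqrt{1-\beta^2}\,y^2/|z|\ge 0$. For $\rho>1$ this simply reflects that $A_T/\sqrt{B_T}$ and $\hat\rho-\rho=A_T/B_T$ always share the sign of $A_T$. For $(\sqrt{1-\beta^2}\,y,\;y/z)$, by contrast, the product is $\sqrt{1-\beta^2}\,y^2/z$, which is negative on $\{z<0,\,y\neq 0\}$. Hence the two joint laws coincide if and only if $\mathbb{P}(z<0,\,y\neq 0)=0$. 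Under Assumptions \ref{ass:z_0} and \ref{ass:indep}, this means $z>0$ a.s.\ (or $y=0$ a.s.). When $z$ is symmetric, as in the Gaussian ARMA setting of Proposition \ref{thm:arma}, the literal joint claim is false. The correct flag is therefore this: each marginal statement holds as written, but jointly the pair converges to $(\sqrt{1-\beta^2}\,y,\;y/|z|)$, not to $(\sqrt{1-\beta^2}\,y,\;y/z)$, unless $z>0$ a.s.
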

\noindent This concludes the generalization of Anderson's first results.

\subsection{Properties of $y$ and $z$} \label{sec:yz}
Much of the explosive literature concerned with asymptotic distributions assumes iid Gaussian innovations as this immediately implies Assumptions \ref{ass:var}-\ref{ass:indep}. However, as this paper is concerned with general, potentially dependent innovations, this subsection examines these assumptions and where necessary, provides general low-level sufficient conditions for Assumptions \ref{ass:z_0}-\ref{ass:indep} with an emphasis on ARMA- and stochastic volatility models.

\subsubsection{Sufficient conditions for Assumption \ref{ass:z_0}} \label{sec:z_0}
Intuitively, if the innovations are continuously distributed, $z$, which is a forward average of innovations, ought to be continuously distributed, and so $\mathbb{P}(z \in \{0\})=0$. For independent innovations this is easy to verify, however, when the innovations/initial value are dependent, verification of this becomes more difficult, as one must rule out edge cases where the innovations exactly cancel each other out. \\
\leavevmode \\
\noindent For many latent-state innovations, such as stochastic volatility models, see for instance \cite{She05}, it may be useful to exploit conditional non-atomicity to verify whether $\{z\in \{0\} \}$ is a null set, we do so in the following lemma.
\begin{lemma} \label{lem:SV}
    Suppose the initial value $x_0 \overset{a.s.}{=} 0$ and the innovations $(\epsilon_t)_{t \in \mathbb{N}}$ take on a form $\epsilon_t = \zeta_t \sigma_t$, where $\zeta_t \overset{iid}{\sim}N(0,1)$ with $\sigma_t^2>0$ $a.s.$ for any fixed $t$ and $\sup_{t>0}\mathbb{E}[\sigma_t^2] < \infty$. Furthermore, let $\sigma(\zeta_t: t>0)$ be independent of $\sigma(\sigma_t^2: t >0)$. Then Assumption  \ref{ass:z_0}, stating $\mathbb{P}(z \in \{0\})=0$ is satisfied.
\end{lemma}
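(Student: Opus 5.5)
Since $x_0 \overset{a.s.}{=} 0$, we have $z = \sum_{i=1}^\infty \beta^{i-1}\epsilon_i = \sum_{i=1}^\infty \beta^{i-1}\sigma_i\zeta_i$ almost surely, where the series converges a.s. by Lemma \ref{lem:measurability}. (Assumption \ref{ass:var} holds because $\mathbb{E}\epsilon_t^2 = \mathbb{E}\sigma_t^2\,\mathbb{E}\zeta_t^2 = \mathbb{E}\sigma_t^2$, and these are uniformly bounded.) The plan is to condition on the volatility path $\mathcal{G} := \sigma(\sigma_t^2 : t>0)$. Given $\mathcal{G}$, the variable $z$ should be a centered Gaussian with variance $V := \sum_{i=1}^\infty \beta^{2(i-1)}\sigma_i^2$. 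Since $V \geq \sigma_1^2 > 0$ almost surely, this conditional law has no atom at $0$. Integrating over $\mathcal{G}$ then gives $\mathbb{P}(z=0)=\mathbb{E}[\mathbb{P}(z=0\mid\mathcal{G})]=0$.

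To make the conditioning rigorous, I would use the independence of $\sigma(\zeta_t)$ and $\sigma(\sigma_t^2)$. This lets me represent the joint law as a product measure, so the conditional computation can be done by Fubini. Fix a volatility path $s=(s_1,s_2,\dots)$ with $\sum_i \beta^{2(i-1)}s_i^2<\infty$. This set has full measure, since $\mathbb{E}V \le \sup_t \mathbb{E}\sigma_t^2/(1-\beta^2)<\infty$, so $V<\infty$ a.s. For such a fixed path, the partial sums $\sum_{i\le T}\beta^{i-1}s_i\zeta_i$ are independent-Gaussian sums bounded in $L^2$. They therefore converge a.s. and in $L^2$, by the martingale convergence theorem or Kolmogorov's two-series theorem. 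Since $L^2$ limits of centered Gaussians are Gaussian, the limit is $N(0,\sum_i\beta^{2(i-1)}s_i^2)$. Here I should sign $\sigma_t$ consistently, say by taking $\sigma_t=\sqrt{\sigma_t^2}$; only $s_i^2$ enters the variance anyway, and the sign of $s_i$ can be absorbed into $\zeta_i$ by symmetry.

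Next, let $\varphi(s,\zeta)=\lim_T \sum_{i\le T}\beta^{i-1}s_i\zeta_i$, set to $0$ where the limit does not exist, as in (\ref{eq:zdef}). Then $z=\varphi(\sigma,\zeta)$ a.s. By independence and Fubini,
\begin{equation*}
\mathbb{P}(z=0)=\int \mathbb{P}_\zeta\big(\varphi(s,\zeta)=0\big)\,\mathbb{P}_\sigma(ds).
\end{equation*}
For $\mathbb{P}_\sigma$-almost every $s$, the inner probability is that of a nondegenerate Gaussian hitting $0$. Nondegeneracy uses $s_1^2>0$, which holds a.s., and then the inner probability is zero. Hence the integral vanishes.

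The main obstacle is measure-theoretic bookkeeping rather than any estimate. I need $\varphi$ to be jointly measurable, which holds because it is a limit of measurable partial sums on a measurable convergence set. I also need the inner probability to be computed on a full-measure set of paths. Both the set where $\sum\beta^{2(i-1)}s_i^2<\infty$ and the set where $s_1>0$ have full measure, so their intersection does too. The a.s. identification of $z$ with $\varphi(\sigma,\zeta)$ follows from Lemma \ref{lem:measurability} together with $x_0=0$ a.s.
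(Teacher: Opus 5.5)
Your proposal is correct and follows the paper's proof: condition on the volatility path $\mathcal{G}$, note that $z$ is conditionally $N\big(0,\sum_{t\ge 1}\beta^{2(t-1)}\sigma_t^2\big)$, with variance a.s.\ finite by the expectation bound and a.s.\ positive since $\sigma_1^2>0$, and then integrate out. Your product-measure/Fubini formulation, with the $L^2$-limit-of-Gaussians step, makes rigorous the conditional-Gaussian claim that the paper simply asserts.
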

\begin{remark}
    This lemma provides an alternative set of sufficient conditions for consistency in the explosive setting where the assumptions made in \cite{Nie2005} are not satisfied. However, the requirement of $\sigma(\zeta_t: t>0)$ being independent of $\sigma(\sigma_t^2: t >0)$ rules out ARCH.
\end{remark}

\noindent For Gaussian ARMA $(\epsilon_t)_{t \in \mathbb{N}}$, one may utilize that $z$ is a sum of Gaussians to obtain a similar result. Before that, let us define the exact class of  Gaussian ARMA processes we will be examining in the following assumption.

\begin{assumption} \label{ass:arma}
    The innovations $(\epsilon_t)_{t \in \mathbb{N}}$ are a stationary Gaussian ARMA(p,q) process with centering around zero. That is for all $t \in \mathbb{Z}$, letting $L$ denote the lag operator, the innovations take on the form
    \begin{equation*}
        \sum_{i=0}^p (1-\varphi_i L)^{i} \epsilon_t = \sum_{i=0}^q (1-\vartheta_i L)^i \zeta_t, \quad \zeta_t \overset{iid}{\sim} N(0,1).
    \end{equation*}
    Additionally, let the roots of the polynomials $\varphi(\mathsf{z})=\sum_{i=1}^p (1-\varphi_i \mathsf{z})^{i}$ and $\theta(\mathsf{z}) = \sum_{i=1}^q (1-\vartheta_i \mathsf{z})^i $ lie outside the unit circle and let the polynomials have no common roots. Denote the covariances $\gamma_{|t-j|} := cov(\epsilon_t,\epsilon_j)$ and let the initial value of the explosive process be $x_0 = 0$.
\end{assumption}

\begin{lemma} \label{lem:ARMA_znonzero}
    Assumption \ref{ass:arma} implies Assumption \ref{ass:z_0}.
\end{lemma}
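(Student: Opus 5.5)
Under Assumption \ref{ass:arma} we have $x_0=0$, so $z_T=\sum_{i=1}^{T-1}\beta^{i-1}\epsilon_i$ is a linear combination of a centered Gaussian vector. Hence each $z_T$ is centered Gaussian with variance
\begin{equation*}
    v_T := \mathrm{Var}(z_T)=\sum_{i,j=1}^{T-1}\beta^{i-1}\beta^{j-1}\gamma_{|i-j|}.
\end{equation*}
The plan has three steps: show that $z$ is a centered Gaussian, show that its variance $v:=\lim_T v_T$ is strictly positive, and conclude $\mathbb{P}(z=0)=0$ since a nondegenerate Gaussian has no atoms.

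\textbf{Step 1: $z$ is Gaussian with variance $v$.} Assumption \ref{ass:arma} implies Assumption \ref{ass:var}, because stationarity gives $\mathbb{E}\epsilon_t^2=\gamma_0<\infty$ and $x_0=0$. By Lemma \ref{lem:measurability}, $z_T\to z$ almost surely, hence in distribution. Since $\gamma_k$ decays geometrically under the root conditions and $|\beta|<1$, the double series defining $v_T$ converges absolutely to $v=\sum_{i,j\ge1}\beta^{i+j-2}\gamma_{|i-j|}$. The characteristic functions $e^{-v_T s^2/2}$ then converge to $e^{-vs^2/2}$, so by Lévy's continuity theorem $z\sim N(0,v)$.

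\textbf{Step 2: $v>0$.} This is the main obstacle, because correlated innovations could in principle cancel. I would use the spectral representation. The ARMA process has spectral density
\begin{equation*}
    f(\lambda)=\frac{1}{2\pi}\frac{|\theta(e^{-i\lambda})|^2}{|\varphi(e^{-i\lambda})|^2},
\end{equation*}
which is continuous and bounded. Because $\theta$ has no roots on the unit circle, $f(\lambda)>0$ except possibly at finitely many points (none, in fact). Writing $\gamma_k=\int_{-\pi}^{\pi}e^{ik\lambda}f(\lambda)\,d\lambda$ gives
\begin{equation*}
    v_T=\int_{-\pi}^{\pi}\Big|\sum_{i=1}^{T-1}\beta^{i-1}e^{i\cdot i\lambda}\Big|^2 f(\lambda)\,d\lambda .
\end{equation*}
Passing to the limit by dominated convergence, using the bound $|\sum\beta^{i-1}e^{\cdot}|\le 1/(1-|\beta|)$, yields
\begin{equation*}
    v=\int_{-\pi}^{\pi}|1-\beta e^{i\lambda}|^{-2}f(\lambda)\,d\lambda .
\end{equation*}
The integrand is strictly positive on a set of positive measure, so $v>0$.

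A simpler fallback, if one prefers to avoid spectral densities, is a conditioning argument. Write $z=\epsilon_1+R$ with $R=\sum_{i\ge2}\beta^{i-1}\epsilon_i$. Then $\mathrm{Var}(z)\ge\mathrm{Var}(\epsilon_1\mid \epsilon_2,\epsilon_3,\dots)$. This conditional variance equals the one-step backward prediction error variance of a stationary ARMA process, which is the innovation variance times a positive constant. It is therefore positive, since an invertible ARMA process is nondeterministic.

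\textbf{Step 3: conclusion.} Since $z\sim N(0,v)$ with $v>0$, $z$ has a density, so $\mathbb{P}(z\in\{0\})=0$. This is exactly Assumption \ref{ass:z_0}.
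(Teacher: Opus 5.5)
Your proof is correct, but it follows a different route from the paper's.

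\emph{The paper's route.} The paper never identifies the law of $z$. It writes $\epsilon_t=\sum_{j\ge 0}\psi_j\zeta_{t-j}$ and isolates one shock whose coefficient in $z$ is $c=\psi(\beta)=\theta(\beta)/\varphi(\beta)\neq 0$. It then conditions on the independent remainder $R$, so that $\mathbb{P}(z=0)=\mathbb{E}[\mathbb{P}(c\zeta=-R\mid R)]=0$, because a single $N(0,1)$ variable has no atoms. The shock carrying coefficient $\psi(\beta)$ is $\zeta_1$, not $\zeta_0$ as the paper writes; the coefficient of $\zeta_0$ is $\rho(\psi(\beta)-\psi_0)$, which vanishes for iid innovations. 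That argument needs only one non-atomic shock independent of the rest, so it extends verbatim to linear processes driven by iid shocks with any non-atomic law of finite variance. As written, it uses the root condition on $\theta$ to guarantee $\psi(\beta)\neq 0$.

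\emph{Your route.} Your argument uses joint Gaussianity essentially, and all the work is in showing $\mathrm{Var}(z)>0$. That step is more robust than you claim. Replace $f(\lambda)\,d\lambda$ in your formula by the spectral measure $F$ and use $|1-\beta e^{\mathrm{i}\lambda}|\le 1+|\beta|$. This gives
\[
v=\int_{(-\pi,\pi]}|1-\beta e^{\mathrm{i}\lambda}|^{-2}\,dF(\lambda)\;\ge\;\frac{\gamma_0}{(1+|\beta|)^2}.
\]
So you need neither a spectral density, nor its positivity, nor invertibility. Any stationary Gaussian innovation sequence with $\gamma_0>0$ and $x_0=0$ satisfies Assumption \ref{ass:z_0}. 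This one-line bound also makes your prediction-theoretic fallback unnecessary.

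\emph{A by-product.} Since $v=\Gamma/(1-\beta^2)$, your route also proves $\Gamma>0$. The paper uses this implicitly, both for a non-degenerate $N(0,\Gamma)$ limit and for dividing by $\hat{\Gamma}$ in Theorem \ref{thm:feasiblearma}, but never establishes it.
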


\subsubsection{Convergence and Independence of $z$ and $y$} \label{sec:indep}
We begin establishing that the moment condition and stationarity are sufficient for marginal convergence in distribution of $y_T$.

\begin{lemma} \label{lem:y_dist_stat}
    Under Assumption \ref{ass:var} and stationary innovations, there exists a distribution $y$ with a finite second moment, such that $ y_T \overset{D}{\rightarrow} y $.
\end{lemma}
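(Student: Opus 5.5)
The approach is time reversal. Stationarity of the innovations lets us replace the backward average $y_T$ by an equally distributed forward average, and forward averages converge almost surely by the same argument as Lemma \ref{lem:measurability}.

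\emph{Step 1: time reversal.} Assume the innovations are strictly stationary. For each fixed $T$, the vectors $(\epsilon_1,\dots,\epsilon_T)$ and $(\epsilon_T,\dots,\epsilon_1)$ need not have the same law. So instead of reversing within the sample, we extend the innovation process to a two-sided stationary sequence $(\tilde\epsilon_t)_{t\in\mathbb{Z}}$. This is possible by Kolmogorov's extension theorem, because the finite-dimensional distributions of a stationary process are shift-consistent. By stationarity,
\begin{equation*}
y_T=\sum_{i=1}^T\beta^{T-i}\epsilon_i=\sum_{j=0}^{T-1}\beta^{j}\epsilon_{T-j}\overset{D}{=}\sum_{j=0}^{T-1}\beta^j\tilde\epsilon_{-j}=:\tilde y_T .
\end{equation*}
The identity in law holds because the vector $(\epsilon_T,\dots,\epsilon_1)$ has the same law as $(\tilde\epsilon_0,\dots,\tilde\epsilon_{-(T-1)})$, the latter being a shift of the former.

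\emph{Step 2: convergence of the forward average.} The sequence $\tilde y_T$ is a forward geometric average with weights $\beta^j$, exactly as $z_T$ is. Assumption \ref{ass:var} transfers to $\tilde\epsilon$, since it only involves marginal second moments, which are constant by stationarity. We then argue as in Lemma \ref{lem:measurability}:
\begin{equation*}
\mathbb{E}\sum_{j\ge0}|\beta|^j|\tilde\epsilon_{-j}|\le \sum_{j\ge 0}|\beta|^j M=\frac{M}{1-|\beta|}<\infty,
\end{equation*}
using Cauchy--Schwarz to get $\mathbb{E}|\tilde\epsilon_{-j}|\le M$. Hence the series converges absolutely almost surely, and $\tilde y_T\overset{a.s.}{\rightarrow}\tilde y:=\sum_{j\ge0}\beta^j\tilde\epsilon_{-j}$. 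This step requires no uncorrelatedness or centering. Moreover, Minkowski's inequality gives
\begin{equation*}
\|\tilde y\|_{L^2}\le\sum_{j\ge 0}|\beta|^j\|\tilde\epsilon_{-j}\|_{L^2}\le \frac{M}{1-|\beta|},
\end{equation*}
so $\tilde y$ has a finite second moment.

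\emph{Step 3: conclusion.} Almost sure convergence implies convergence in distribution, so $\tilde y_T\overset{D}{\rightarrow}\tilde y$. Since $y_T\overset{D}{=}\tilde y_T$ for every $T$, it follows that $y_T\overset{D}{\rightarrow} y$, where $y$ is any random variable with the law of $\tilde y$. This $y$ has a finite second moment by Step 2.

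The main obstacle is Step 1. The paper's innovations are indexed by $\mathbb{N}$, so the reversal must be justified by constructing the two-sided extension on a possibly different probability space. This is harmless because the conclusion concerns distributions only. We must also state that "stationary" means strict stationarity. Weak stationarity would suffice only in the Gaussian case, such as under Assumption \ref{ass:arma}, where second moments determine the law.
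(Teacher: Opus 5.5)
Your proof is correct and follows essentially the paper's route. The shared steps are the two-sided stationary extension (the paper cites Breiman's Proposition 6.5, which is the Kolmogorov-extension argument you give), the identity in law $y_T \overset{D}{=} \sum_{j=0}^{T-1}\beta^j\tilde\epsilon_{-j}$, and almost sure convergence of this reversed forward average from the geometric $L^1$ tail bound. The differences are cosmetic: you get almost sure convergence from absolute convergence of the series instead of the Borel--Cantelli reduction lemma, and the finite second moment from Minkowski on the limit instead of Lemma \ref{lem:finite_l2} (the uniform bound $\sup_T\mathbb{E}[y_T^2]\le K$ plus Fatou).
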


\noindent However, stationarity does not necessarily imply joint convergence of $(y_T, z_T)$. This is demonstrated in the following counterexample:
\begin{example}
    Let $\Omega_1, \Omega_2$ be disjoint sets satisfying $\mathbb{P}(\Omega_1)=\mathbb{P}(\Omega_2)=0.5$ and $\Omega_1 \cup \Omega_2 = \Omega$. Assume the initial value is $x_0=0$. Define the innovation sequence $(\epsilon_t)_{t \in \mathbb{N}}$ via:
    \begin{equation*}
        \big(\epsilon_t(\omega)\big)_{t \in \mathbb{N}} = \begin{cases}
            \big( (-1)^t \big)_{t \in \mathbb{N}} \hspace{11pt} \text{ for } \omega \in \Omega_1 \\
            \big( (-1)^{t+1} \big)_{t \in \mathbb{N}} \text{ for } \omega \in \Omega_2.
        \end{cases}
    \end{equation*}
    That is $(\epsilon_t)_{t \in \mathbb{N}}$ oscillates deterministically between $ -1$ and $1$, with the only source of randomness being whether even or odd indexes take on positive values. Note this process is stationary. \\ \leavevmode
    \\
    \noindent However, in this case $(y_T,z_T)$ is divergent, as convergence along \emph{even} $T$ and \emph{odd} $T$ lead to different CDFs:
    \begin{align*}
        \text{ T odd:} \quad &\, {(1+\beta)}{(y_T, z_T)} \overset{D}{\rightarrow} \begin{cases}
            (-1, -1) \text{ w.p. } 0.5 \\
            (\hspace{10pt} 1, \hspace{10pt}  1)  \text{ w.p. } 0.5
        \end{cases} \\
        \text{ T even:}\quad &  \, {(1+\beta)}(y_T, z_T) \overset{D}{\rightarrow} \begin{cases}
            (-1, \hspace{10pt} 1) \text{ w.p. } 0.5 \\
            ( \hspace{10pt} 1, -1)  \text{ w.p. } 0.5
        \end{cases}
    \end{align*}
    That is along the subsequence of odd integers, $z_T, y_T$ 
    converge to the same limit, whilst along the subsequence of even integers $z_T, y_T$ converge to limits of opposite signs. 
\end{example}

\noindent We now show that $\alpha$-mixing is sufficient for joint convergence of $(y_T, z_T)$ as well as independence between the limiting objects. We recall the definition of $\alpha$-mixing 
\begin{assumption}[$\alpha$-mixing]  \label{ass:alpha_mixing}
    Define the $\alpha$-mixing coefficient between two $\sigma$-fields $\mathcal{A}, \mathcal{B}$ as 
    \[
    \alpha(\mathcal{A}, \mathcal{B}) := \sup_{A \in \mathcal{A}, B \in\mathcal{B}} \left| \mathbb{P}(A \cap B) - \mathbb{P}(A) \mathbb{P}(B) \right|.\]
    The $\alpha$-mixing coefficient of the process $(\epsilon_t)$ is defined via 
    \[ \alpha_\epsilon(h) := \sup_u \alpha \big( \sigma(x_0, \epsilon_t: t \leq u),  \sigma(\epsilon_t: t \geq u+h) \big).\]
    The innovations satisfy $\alpha_\epsilon(h) \rightarrow 0$ as $h \rightarrow \infty$.
\end{assumption}

\begin{remark} \label{rem:mixing}
   By \cite[Theorem 6, pp. 99]{Dou1994}, ARMA innovations as in Assumption \ref{ass:arma} are $\beta$-mixing, which implies $\alpha$-mixing.
\end{remark}

\noindent We now generalize \cite[Theorem 2.3]{And1959} by lessening his independence requirement to $\alpha$-mixing in the following theorem.
\begin{theorem} \label{lem:independence}
    Under Assumptions \ref{ass:var} and \ref{ass:alpha_mixing}, and if $y_T \overset{D}{\rightarrow} y$, then $(y_T, z_T) \overset{D}{\rightarrow} (y,z)$, where $y$ and $z$ are independent and both have a finite second moment.
\end{theorem}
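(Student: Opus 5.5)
The approach is a truncation argument exploiting the geometric weights. The forward average $z_T$ is essentially determined by $x_0$ and the \emph{first} few innovations. The backward average $y_T$ is essentially determined by the \emph{last} few innovations. For large $T$ these two blocks are separated by a long time gap, and $\alpha$-mixing makes them asymptotically independent.

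Concretely, fix a truncation index $k$ and set
\[
z^{(k)} := \rho x_0 + \sum_{i=1}^{k} \beta^{i-1}\epsilon_i, \qquad y_T^{(k)} := \sum_{i=T-k+1}^{T} \beta^{T-i}\epsilon_i .
\]
By Assumption \ref{ass:var} and Minkowski's inequality,
\[
\|z_T - z^{(k)}\|_{L^2} \le M\sum_{i>k}|\beta|^{i-1}, \qquad \|y_T-y_T^{(k)}\|_{L^2}\le M\sum_{j\ge k}|\beta|^{j},
\]
uniformly in $T>k$. Both bounds are of order $|\beta|^k$. By Lemma \ref{lem:measurability} and Fatou's lemma, the same bound holds for $\|z-z^{(k)}\|_{L^2}$. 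Hence, uniformly in $T$, the pair $(y_T,z_T)$ is within $O(|\beta|^k)$ in $L^2$ (so in probability) of $(y_T^{(k)}, z^{(k)})$.

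The main step is to compare characteristic functions. Take $s,t\in\mathbb{R}$ and write
\[
\mathbb{E}e^{i(sy_T+tz_T)} - \mathbb{E}e^{isy_T}\,\mathbb{E}e^{itz}.
\]
First replace $(y_T,z_T)$ by $(y_T^{(k)},z^{(k)})$ inside both expectations; since $|e^{ia}-e^{ib}|\le|a-b|$, this costs at most $C(|s|+|t|)|\beta|^k$. Now $z^{(k)}$ is $\sigma(x_0,\epsilon_t:t\le k)$-measurable and $y_T^{(k)}$ is $\sigma(\epsilon_t:t\ge T-k+1)$-measurable, so the two blocks are separated by the gap $h=T-2k+1$. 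The covariance inequality for bounded variables under $\alpha$-mixing gives
\[
\left|\mathrm{Cov}(f(z^{(k)}),g(y_T^{(k)}))\right|\le 4\|f\|_\infty\|g\|_\infty\,\alpha_\epsilon(T-2k+1).
\]
Applying it separately to the real and imaginary parts of $e^{itz^{(k)}}$ and $e^{isy_T^{(k)}}$, the joint characteristic function factorises up to $16\,\alpha_\epsilon(T-2k+1)$. I expect the main obstacle to be deriving this covariance bound for complex exponentials from the set-based definition of $\alpha$ in Assumption \ref{ass:alpha_mixing}; the standard route is the usual approximation of bounded measurable functions by simple functions, which I would cite as a standard inequality.

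To conclude, let $T\to\infty$ with $k$ fixed, using $\alpha_\epsilon(T-2k+1)\to 0$, the hypothesis $\mathbb{E}e^{isy_T}\to\mathbb{E}e^{isy}$, and the bound $|\mathbb{E}e^{itz^{(k)}}-\mathbb{E}e^{itz}|\le C|t||\beta|^k$. This gives
\[
\limsup_{T}\left|\mathbb{E}e^{i(sy_T+tz_T)}-\mathbb{E}e^{isy}\mathbb{E}e^{itz}\right|\le C'(|s|+|t|)|\beta|^k .
\]
Letting $k\to\infty$, the joint characteristic function of $(y_T,z_T)$ converges pointwise to the product $\varphi_y(s)\varphi_z(t)$. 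The right-hand side is the characteristic function of a pair $(y,z)$ with independent components, because we may take $y$ independent of $z$ on a product space in law, as the statement concerns convergence in distribution. Lévy's continuity theorem then gives $(y_T,z_T)\overset{D}{\rightarrow}(y,z)$ with $y$ and $z$ independent.

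Finally, the second moments. $\mathbb{E}z^2<\infty$ is Lemma \ref{lem:measurability}. For $y$, we have $\sup_T\mathbb{E}y_T^2\le (M/(1-|\beta|))^2$, and by Skorokhod representation plus Fatou's lemma (or lower semicontinuity of $x\mapsto x^2$ under weak convergence), $\mathbb{E}y^2\le \liminf_T\mathbb{E}y_T^2<\infty$.
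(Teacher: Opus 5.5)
Your proof is correct, and it takes a different route from the paper's.

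The paper lets the truncation grow with $T$. It cuts $z_T$ at $\lfloor T/3\rfloor$ and $y_T$ at $\lfloor 2T/3\rfloor$, and shows the discarded pieces $\tilde z_T,\tilde y_T$ vanish almost surely (Lemma~\ref{lem:as_0}), so that $z_T^*\to z$ and $y_T^*\overset{D}{\rightarrow} y$. It then bounds the gap between the joint CDF of $(y_T^*,z_T^*)$ and the product of its marginals directly by $\alpha_\epsilon$ at a lag of order $T/3$; the events $\{y_T^*\le s\}$ and $\{z_T^*\le u\}$ plug straight into the set-based definition of $\alpha$. Finally it reattaches the remainders with a Slutsky-type step.

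You instead fix $k$, use $T$-uniform $L^1$ bounds of order $|\beta|^k$, and take an iterated limit ($T\to\infty$, then $k\to\infty$) on characteristic functions. This buys you several things:
\begin{itemize}
\item Because $x\mapsto e^{ix}$ is Lipschitz, the truncation error passes to the characteristic function with an explicit bound, even though $y_T^{(k)}$ need not converge in distribution for fixed $k$.
\item L\'evy's continuity theorem then gives joint convergence to the product law in one step. You need no separate Slutsky argument, and no restriction to continuity points of $F_y$ and $F_z$ as a CDF argument requires.
\item Your inequality is quantitative, and taking $k=k_T\approx T/3$ essentially reproduces the paper's split.
\end{itemize}

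The price is that you need Ibragimov's covariance inequality $|\mathrm{Cov}(f,g)|\le 4\|f\|_\infty\|g\|_\infty\,\alpha$ for bounded measurable functions. The paper needs only the definition of $\alpha$ on events. Since the inequality is a standard result, citing it is legitimate.

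Your treatment of the second moments coincides with the paper's (the bound (\ref{eq:zsq_bound}) and Lemma~\ref{lem:finite_l2}).
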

\begin{corollary}
    Suppose the innovations are stationary and satisfy Assumptions \ref{ass:var}, \ref{ass:alpha_mixing}. Then $(y_T, z_T) \overset{D}{\rightarrow} (y,z)$, where $y$ and $z$ are independent and both have a finite second moment.
\end{corollary}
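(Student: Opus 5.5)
The corollary follows by chaining two earlier results. Theorem \ref{lem:independence} already gives the desired conclusion under Assumptions \ref{ass:var} and \ref{ass:alpha_mixing}, except that it also requires the marginal condition $y_T \overset{D}{\rightarrow} y$. So the plan is to verify that one hypothesis using stationarity, and then invoke the theorem.

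\emph{First step.} Assumption \ref{ass:var} together with stationarity of $(\epsilon_t)$ is precisely the hypothesis of Lemma \ref{lem:y_dist_stat}. That lemma supplies a random variable $y$ with finite second moment such that $y_T \overset{D}{\rightarrow} y$.

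To keep the argument self-contained, I would recall why this holds. Stationarity gives
\[
y_T = \sum_{i=1}^T \beta^{T-i}\epsilon_i \overset{D}{=} \sum_{j=0}^{T-1}\beta^j \epsilon_{T-j} \overset{D}{=} \sum_{j=0}^{T-1} \beta^j \epsilon_{T_0-j}
\]
for a fixed time reference, realized on a two-sided extension via Kolmogorov's extension theorem. The reversed partial sums $\sum_{j=0}^{T-1}\beta^j\epsilon_{-j}$ are Cauchy in $L^2$, since by Minkowski's inequality
\[
\Big\|\sum_{j=m}^{n}\beta^j\epsilon_{-j}\Big\|_2 \le M\sum_{j=m}^{n}|\beta|^j .
\]
Hence they converge in $L^2$, and therefore in distribution, to a limit with finite second moment. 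Because the laws of $y_T$ coincide with those of these partial sums, $y_T \overset{D}{\rightarrow} y$.

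\emph{Second step.} With $y_T \overset{D}{\rightarrow} y$ established, and Assumptions \ref{ass:var} and \ref{ass:alpha_mixing} holding by hypothesis, Theorem \ref{lem:independence} applies directly. It yields $(y_T,z_T) \overset{D}{\rightarrow} (y,z)$ with $y$ independent of $z$. Finiteness of $\mathbb{E}z^2$ also follows from Lemma \ref{lem:measurability}, and finiteness of $\mathbb{E}y^2$ comes from the first step.

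\emph{Expected obstacle.} The only delicate point is making sure the $y$ produced by Lemma \ref{lem:y_dist_stat} is the same limit law that Theorem \ref{lem:independence} refers to. Since the theorem takes any marginal limit $y$ as input, this is automatic. Consequently the corollary is an immediate composition of the two results, with no further estimates required.
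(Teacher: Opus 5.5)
Your proposal is correct and matches the paper's argument: stationarity together with Assumption \ref{ass:var} gives $y_T \overset{D}{\rightarrow} y$ via Lemma \ref{lem:y_dist_stat}, and Theorem \ref{lem:independence} then delivers the joint convergence, the independence, and the finite second moments. Your self-contained recap of Lemma \ref{lem:y_dist_stat} uses an $L^2$-Cauchy (Minkowski) argument on the two-sided extension instead of the paper's $L^1$ bound with Borel--Cantelli, but this is an inessential variation that also gives $\mathbb{E}y^2<\infty$ directly.
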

\begin{remark} The corollary is established by noting that \emph{stationarity} of $(\epsilon_t)_{t \in \mathbb{N}}$ is a sufficient condition for $y_T \overset{D}{\rightarrow} y$ by Lemma \ref{lem:y_dist_stat}. Theorem \ref{lem:independence} assumes $y_T \overset{D}{\rightarrow} y$ as this also encompasses situations where $y_T$ converges marginally even when the innovations are nonstationary.
\end{remark}

\section{Implications} \label{sec:imp}
In this section, we will apply some of the previous general results to demonstrate their use cases. For inference we need to know the distribution of $y$. In general, nothing ensures that this variable is Gaussian or has a standard CDF. We show however, that if the innovations follow a Gaussian ARMA-process, the exact distribution of $y$ can be derived. This shows that feasible inference in the explosive setting goes much beyond the standard iid Gaussian setting originally provided in \cite{And1959}. In another example, we show that the presented AR(1) results can be extended to an AR(2) with intercept, demonstrating extendability of the AR(1) results to larger models.

\subsection{Results for Gaussian ARMA Innovations} \label{sec:arma}
Suppose the innovations follow a stationary Gaussian ARMA(p,q) as in Assumption \ref{ass:arma} and recall that $(\gamma_h)_{h \in \mathbb{N}}$ are the covarainces of the process. One may then define $\Gamma = \gamma_0 + 2 \sum_{h=1}^\infty \gamma_h \beta^h$. This leads to the following behavior of $(y_T, z_T)$.
\begin{lemma} \label{lem:arma}
    Under Assumption \ref{ass:arma}, $(y_T, z_T) \overset{D}{\rightarrow} N(0, \sigma^2 I_2)$, with $\sigma^2 = \frac{\Gamma}{1-\beta^2}.$
\end{lemma}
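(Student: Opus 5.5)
Since the innovations are jointly Gaussian and centred with $x_0=0$, the vector $(y_T,z_T)$ is a fixed linear combination of $(\epsilon_1,\dots,\epsilon_T)$ and is therefore bivariate Gaussian with mean zero for every $T$. The plan is to compute the limit of its covariance matrix and use the fact that Gaussian vectors with converging means and covariances converge in distribution (e.g.\ via characteristic functions $\exp(-\tfrac12 u'\Sigma_T u)$). No further appeal to mixing is needed, although Theorem \ref{lem:independence} together with Remark \ref{rem:mixing} already gives independence of the limits, which serves as a consistency check on the cross-covariance computation.

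For the variances: by stationarity, $\mathrm{Var}(y_T)=\sum_{i,j=1}^T \beta^{T-i}\beta^{T-j}\gamma_{|i-j|}$. Reindexing $k=T-i$, $l=T-j$ gives $\sum_{k,l=0}^{T-1}\beta^{k+l}\gamma_{|k-l|}$. Likewise, since $x_0=0$, $z_T=\sum_{i=1}^{T-1}\beta^{i-1}\epsilon_i$, so $\mathrm{Var}(z_T)=\sum_{k,l=0}^{T-2}\beta^{k+l}\gamma_{|k-l|}$. ARMA autocovariances decay geometrically and $|\beta|<1$, so the double series is absolutely summable and both variances converge to $\sum_{k,l\ge0}\beta^{k+l}\gamma_{|k-l|}$. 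Grouping by $h=|k-l|$, the diagonal $h=0$ contributes $\gamma_0\sum_k\beta^{2k}=\gamma_0/(1-\beta^2)$. Each $h\ge1$ contributes $2\gamma_h\sum_{k\ge0}\beta^{2k+h}=2\gamma_h\beta^h/(1-\beta^2)$. Summing gives $\Gamma/(1-\beta^2)=\sigma^2$.

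The main step is showing that the cross-covariance vanishes. We have $\mathrm{Cov}(y_T,z_T)=\sum_{i=1}^T\sum_{j=1}^{T-1}\beta^{T-i}\beta^{j-1}\gamma_{|i-j|}$. Bound this by $\sum_{i,j}|\beta|^{T-i+j-1}C\lambda^{|i-j|}$, where $|\gamma_h|\le C\lambda^h$ with $\lambda<1$. Split into the cases $i\ge j$ and $i<j$. In both cases the exponent $(T-i)+(j-1)+|i-j|$ is at least $T-1$ in combined powers of $\mu:=\max(|\beta|,\lambda)<1$: when $i\ge j$ it equals $T-1$ exactly, and when $i<j$ it is $T-1+2(j-i)$. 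Hence the sum is at most $C T^2\mu^{T-1}\to0$. The cruder route is Cauchy--Schwarz-free: it only needs geometric decay of $\gamma_h$, which holds for a causal, invertible ARMA with roots outside the unit circle. This is the step I expect to require the most care, mainly in justifying the geometric autocovariance bound and the polynomial-times-geometric count.

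Combining these, $\Sigma_T\to\sigma^2 I_2$, and the characteristic functions converge to that of $N(0,\sigma^2I_2)$. Lévy's continuity theorem then yields the claim.
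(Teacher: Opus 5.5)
Your proof is correct, and it differs from the paper's at the one step that matters: showing that the off-diagonal entry vanishes. The variance computation is essentially the paper's, which uses $\mathrm{Var}(y_T)=\mathrm{Var}(z_{T+1})=\sum_{i,j=0}^{T-1}\beta^{i+j}\gamma_{|i-j|}$.

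For the cross term, the paper never computes $\mathrm{Cov}(y_T,z_T)$. It invokes Doukhan's result that stable Gaussian ARMA processes are $\beta$-mixing, hence $\alpha$-mixing. It then applies Theorem \ref{lem:independence} to get joint convergence with independent limits, whose marginals are $N(0,\sigma^2)$ as limits of centred Gaussians with converging variances.

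You instead bound the cross-covariance directly by $CT^2\mu^{T-1}$ using geometric decay of the autocovariances, and conclude via characteristic functions. This route is self-contained and elementary. It exploits the fact that for jointly Gaussian vectors zero covariance already gives independence, it yields an explicit geometric rate, and it avoids the external mixing theorem. It is also more general than you use it for. Splitting the double sum into $|i-j|>T/2$ and $|i-j|\le T/2$ shows that $\gamma_h\to0$ alone suffices: the first part is at most $\sup_{h>T/2}|\gamma_h|/(1-|\beta|)^2$, and the second is controlled by $\gamma_0\sum_{n\ge T/2-1}(n+1)|\beta|^n$. So your argument covers any centred stationary Gaussian innovation sequence with vanishing autocovariances, with no mixing condition to verify.

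The paper's route instead showcases its general mixing theorem. That theorem's independence conclusion does not rely on Gaussianity, so it is the part that would carry over to non-Gaussian dependent innovations. For this specific Gaussian lemma, though, it is heavier machinery than needed.
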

\noindent Stable Gaussian ARMAs as in Assumption \ref{ass:arma} satisfy Assumptions \ref{ass:var}-\ref{ass:indep} by the following arguments: Stable ARMA innovations have a finite second moment, so Assumption \ref{ass:var} is satisfied. Assumption  \ref{ass:z_0} is satisfied for ARMA innovations by Lemma \ref{lem:ARMA_znonzero}. Finally, Lemma \ref{lem:arma} implies Assumptions \ref{ass:y_exists}-\ref{ass:indep}. Moreover, as Gaussian ARMA innovations without an intercept have a symmetric unconditional distribution around zero, the symmetry condition for Corollary \ref{cor:symm} is also satisfied. Therefore, combining Lemma \ref{lem:arma} with Corollary \ref{cor:symm}, requiring Assumptions \ref{ass:var}-\ref{ass:indep} and the previous symmetry condition, yields the following proposition.
\begin{proposition} \label{thm:arma}
    Under Assumption \ref{ass:arma}, jointly,
    \begin{align*}
        \frac{\rho^T}{\rho^2-1} (\hat{\rho} - \rho) &\overset{D}{\rightarrow} Cauchy \\
        \frac{A_T}{\sqrt{B_T}} &\overset{D}{\rightarrow} N \big(0, \Gamma \big).
    \end{align*}
\end{proposition}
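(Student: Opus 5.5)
The plan is to check the hypotheses of Corollary \ref{cor:symm} and then substitute the Gaussian limit from Lemma \ref{lem:arma}.

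\textbf{Hypotheses.} Under Assumption \ref{ass:arma} the innovations are stationary and stable, so $\gamma_0<\infty$ and $x_0=0$; this gives Assumption \ref{ass:var}. Lemma \ref{lem:ARMA_znonzero} gives Assumption \ref{ass:z_0}. Lemma \ref{lem:arma} gives $(y_T,z_T)\overset{D}{\rightarrow}(y,z)$ with $(y,z)\sim N(0,\sigma^2 I_2)$. This is Assumption \ref{ass:y_exists}, with the caveat below. Uncorrelated jointly Gaussian variables are independent, so Assumption \ref{ass:indep} holds. Symmetry follows because $(\epsilon_1,\dots,\epsilon_T)$ is a centered Gaussian vector, hence equal in law to its negation.

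\textbf{Identifying the limit.} Assumption \ref{ass:y_exists} concerns a $z$ that is the almost sure limit from Lemma \ref{lem:measurability}. Since $z_T\overset{a.s.}{\rightarrow}z$, the marginal of $z$ is the $N(0,\sigma^2)$ obtained in Lemma \ref{lem:arma}. The joint law of $(y,z)$ used in Theorem \ref{thm:and_dist} is then taken to be the distributional limit of $(y_T,z_T)$. Because we only need convergence in distribution of $(y_T,z_T)$, this causes no conflict.

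\textbf{Computing the limits.} Corollary \ref{cor:symm} yields, jointly,
\[
\frac{A_T}{\sqrt{B_T}}\overset{D}{\rightarrow}\sqrt{1-\beta^2}\,y, \qquad \frac{\rho^T}{\rho^2-1}(\hat\rho-\rho)\overset{D}{\rightarrow}\frac{y}{z}.
\]
For the first limit, $\sqrt{1-\beta^2}\,y\sim N\big(0,(1-\beta^2)\sigma^2\big)=N(0,\Gamma)$, using $\sigma^2=\Gamma/(1-\beta^2)$. For the second, $y$ and $z$ are independent centered normals with equal variance, so $y/z$ is standard Cauchy by the classical ratio computation; the common $\sigma$ cancels.

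\textbf{Main obstacle.} The only delicate point is the bookkeeping in the identification step: the $z$ in Assumption \ref{ass:y_exists} must coincide with the almost sure limit $z$. Everything else is a direct invocation of earlier results.
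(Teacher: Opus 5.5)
Your proposal is correct and follows the paper's own argument. You verify Assumptions \ref{ass:var}--\ref{ass:indep} and the symmetry condition via Lemmas \ref{lem:ARMA_znonzero} and \ref{lem:arma}, invoke Corollary \ref{cor:symm}, and read off $\sqrt{1-\beta^2}\,y\sim N(0,\Gamma)$ and $y/z\sim$ Cauchy. One minor precision, inherited from the paper's corollary: the joint limit delivered by Theorem \ref{thm:and_dist} is $\bigl(y/z,\ \sqrt{1-\beta^2}\,\mathrm{Sign}(z)\,y\bigr)$, and your symmetry/independence argument identifies only its second marginal as $N(0,\Gamma)$, which is all the proposition asserts.
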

\begin{remark}
    \cite{Whi1958} proves that under iid Gaussian errors, the scaled OLS estimator has a limiting Cauchy distribution. Proposition \ref{thm:arma} demonstrates the same limit holds even for potentially autocorrelated stable Gaussian ARMA(p,q) innovations.
\end{remark}

\begin{remark}
    \cite{And1959} demonstrates in the Gaussian iid case, $A_T / \sqrt{B_T}$ has a limiting Gaussian distribution with variance $\gamma_0$. However, in the ARMA(p,q) case, the dependence in the innovations distorts the variance of the t-statistic via the term $2 \sum_{h=1}^\infty \gamma_h \beta^h$.
\end{remark}

 \noindent As the estimator $\hat{\rho}$ is rate $\rho^T$ consistent, the residuals of the simple regression can be used to consistently recover the autocovariances of the innovations $(\gamma_h)_{h \in \mathbb{N}}$, which may be used to achieve a feasible test statistic for an explosive AR(1) with ARMA innovations. We do so by first establishing some uniform consistency results, beginning with the estimator $\hat{\beta}^h := \hat{\rho}^{-h}$, which also holds beyond the ARMA setting. 

\begin{lemma} \label{lem:uniformbeta}
    Under Assumptions \ref{ass:var}-\ref{ass:z_0}, one has $\sup_{0<h<T} |\hat{\beta}^h-\beta^h| \overset{a.s.}{\rightarrow}0$. Furthermore, one has $\sum_{h=1}^T |\hat{\beta}^h - \beta^h| \overset{a.s.}{\rightarrow} 0$.
\end{lemma}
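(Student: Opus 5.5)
The idea is to derive both claims from a single almost-sure estimate on $\hat\rho-\rho$, obtained from Theorem \ref{thm:consistency}, and then control $\hat\beta^h-\beta^h$ with a mean-value bound that decays geometrically in $h$.

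\textbf{Step 1: an almost-sure rate for $\hat\rho-\rho$.} By (\ref{eq:OLS_consistency}) with a summable sequence, e.g.\ $f_T = T^{-2}$, Theorem \ref{thm:consistency} gives $T^{-2}\rho^{T-2}(\hat\rho-\rho)\overset{a.s.}{\rightarrow}0$. Hence, almost surely,
\begin{equation*}
|\hat\rho-\rho| \le C(\omega)\,T^2|\beta|^{T}
\end{equation*}
for $T$ large. Alternatively, one can use (\ref{eq:white_dist_noy}) together with $y_T=O(T)$ a.s. The latter follows from Borel--Cantelli, since $\mathbb{E}y_T^2\le M^2/(1-|\beta|)^2$ uniformly. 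By Corollary \ref{cor:consistency}, $\hat\rho\to\rho$ a.s., so on a probability-one event, for $T$ large, $|\hat\rho|>|\rho|(1+\delta)^{-1}$ for a fixed $\delta>0$ chosen so that $\bar\beta:=|\beta|(1+\delta)<1$. In particular $\hat\beta:=\hat\rho^{-1}$ satisfies $|\hat\beta|\le\bar\beta$ and $|\hat\beta-\beta|\le |\hat\rho-\rho|/(|\hat\rho||\rho|)\le C'(\omega)T^2|\beta|^{T}$.

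\textbf{Step 2: mean-value bound.} For each $h\ge1$,
\begin{equation*}
|\hat\beta^h-\beta^h| = |\hat\beta-\beta|\,\Big|\sum_{j=0}^{h-1}\hat\beta^{j}\beta^{h-1-j}\Big| \le h\,\bar\beta^{\,h-1}|\hat\beta-\beta|.
\end{equation*}
Since $\sup_h h\bar\beta^{h-1}<\infty$, the first claim follows: $\sup_{0<h<T}|\hat\beta^h-\beta^h|\le K|\hat\beta-\beta|\overset{a.s.}{\rightarrow}0$. The supremum over $h$ is controlled uniformly and does not depend on $T$, so letting the range of $h$ grow with $T$ causes no difficulty. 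For the second claim,
\begin{equation*}
\sum_{h=1}^T|\hat\beta^h-\beta^h|\le |\hat\beta-\beta|\sum_{h=1}^\infty h\bar\beta^{h-1} = \frac{|\hat\beta-\beta|}{(1-\bar\beta)^2}\overset{a.s.}{\rightarrow}0.
\end{equation*}

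\textbf{Main obstacle.} The only delicate point is securing $|\hat\beta|\le\bar\beta<1$ eventually, uniformly in $h$, on a single null-set-free event. This comes from strong consistency (Corollary \ref{cor:consistency}, which relies on Assumption \ref{ass:z_0}), because the convergence of the geometric sums depends on it. The threshold $T_0(\omega)$ after which the bound holds is random, but that is harmless for almost-sure statements. Because of this bound, Step 2 in fact only needs plain strong consistency, $\hat\beta\to\beta$ a.s., so the rate from Step 1 is a bonus and not essential.
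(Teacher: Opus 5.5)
Your proof is correct, but it takes a different route from the paper's.

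\textbf{The paper's route.} The paper writes $\hat\beta^h=\beta^h(1+\beta A_T/B_T)^{-h}$. It first derives an explicit almost-sure rate $|\beta A_T/B_T|\le\delta|\beta|^{T/2}$ for large $T$, using Lemma \ref{lem:num_conv_zero} with $f_T=|\beta|^{(T-2)/2}$, Lemma \ref{lem:B_T} and Assumption \ref{ass:z_0}. It then expands $h\ln(1+\beta A_T/B_T)$ as a power series and applies $|1-e^{x}|\le|x|e^{|x|}$. Along the way it discards the factor $|\beta|^h\le1$. The geometric rate in $T$ therefore has to absorb the factor $h\le T$ in the supremum and a factor $T^2$ in the sum. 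This yields bounds of order $T|\beta|^{T/2}$ and $T^2|\beta|^{T/2}$.

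\textbf{Your route.} You keep the geometric decay in $h$ instead. The factorization $\hat\beta^h-\beta^h=(\hat\beta-\beta)\sum_{j=0}^{h-1}\hat\beta^{j}\beta^{h-1-j}$ is combined with the eventual bound $|\hat\beta|\le\bar\beta<1$, which holds on a single probability-one event by Corollary \ref{cor:consistency}. This gives $|\hat\beta^h-\beta^h|\le h\bar\beta^{\,h-1}|\hat\beta-\beta|$, which is bounded uniformly in $h$ and summable in $h$. So plain strong consistency suffices, and no rate, logarithm or Taylor expansion is needed.

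\textbf{What your route buys.} You get a stronger conclusion: $\sup_{h\ge1}|\hat\beta^h-\beta^h|$ and $\sum_{h=1}^{\infty}|\hat\beta^h-\beta^h|$ are both $O(|\hat\beta-\beta|)$, so the restriction $h\le T$ plays no role. Your Step 1 is correct but, as you say, not needed for the lemma as stated. If you do add it, you obtain the rate $O(T^2|\beta|^{T})$ for both quantities, which is sharper than the paper's bounds.
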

\noindent Next, let $L_T$ denote a sequence satisfying $L_T \rightarrow \infty$ and $L_T T^{-1} \rightarrow 0$. Define the sample autocovariances for $h \in \mathbb{N}_0$ as
\begin{equation*}
    \hat{\gamma}_h := (T-h)^{-1} \sum_{t=h+1}^T (x_t - \hat{\rho} x_{t-1})(x_{t-h}-\hat{\rho}x_{t-h-1}).
\end{equation*}
One can then show
\begin{lemma}\label{lem:ACFs}
    Let the innovations satisfy Assumption \ref{ass:arma}. Then 
    \begin{equation*} \textstyle
        \sup_{0 \leq h\leq L_T} |\hat{\gamma}_h - \gamma_h| \overset{\mathbb{P}}{\rightarrow} 0.
    \end{equation*}
\end{lemma}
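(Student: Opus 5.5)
The plan is to write the residuals as $\hat\epsilon_t := x_t - \hat\rho x_{t-1} = \epsilon_t - (\hat\rho-\rho)x_{t-1}$ and expand the product. For $h \le L_T$ this gives
\begin{equation*}
\hat\gamma_h = \tilde\gamma_h - (\hat\rho-\rho) R_{1,h} - (\hat\rho-\rho) R_{2,h} + (\hat\rho-\rho)^2 R_{3,h},
\end{equation*}
with the four pieces
\begin{align*}
\tilde\gamma_h &:= (T-h)^{-1}\sum_{t=h+1}^T \epsilon_t\epsilon_{t-h}, &
R_{1,h} &:= (T-h)^{-1}\sum_{t=h+1}^T x_{t-1}\epsilon_{t-h},\\
R_{2,h} &:= (T-h)^{-1}\sum_{t=h+1}^T \epsilon_t x_{t-h-1}, &
R_{3,h} &:= (T-h)^{-1}\sum_{t=h+1}^T x_{t-1}x_{t-h-1}.
\end{align*}
It then suffices to show two things: $\sup_{h\le L_T}|\tilde\gamma_h-\gamma_h| \overset{\mathbb{P}}{\rightarrow} 0$, and that each correction term vanishes uniformly in $h \le L_T$.

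For the infeasible part, I would use the Gaussian ARMA structure of Assumption \ref{ass:arma}. Fourth moments are bounded, and by Isserlis' formula the covariances satisfy
\begin{equation*}
\mathrm{cov}(\epsilon_t\epsilon_{t-h},\epsilon_s\epsilon_{s-h}) = \gamma_{t-s}^2+\gamma_{t-s+h}\gamma_{t-s-h}.
\end{equation*}
Since the ARMA autocovariances decay geometrically, they are absolutely summable. Hence $\mathrm{Var}(\tilde\gamma_h)\le C/(T-h)$ uniformly in $h$. By a union bound and Chebyshev,
\begin{equation*}
\mathbb{P}\Big(\sup_{h\le L_T}|\tilde\gamma_h-\gamma_h|>\eta\Big)\le \frac{(L_T+1)\,C}{\eta^2 (T-L_T)}\to 0,
\end{equation*}
using $L_T/T\to0$. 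This is the reason for the rate condition on $L_T$.

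For the correction terms, the key input is the geometric rate of $\hat\rho$. By Theorem \ref{thm:consistency}, $f_T\rho^{T-2}(\hat\rho-\rho)\overset{\mathbb{P}}{\rightarrow}0$ for any $f_T\to0$. Equivalently, $\rho^{T}(\hat\rho-\rho)=O_{\mathbb{P}}(1)$, since Proposition \ref{thm:arma} gives tightness and Assumption \ref{ass:z_0} holds by Lemma \ref{lem:ARMA_znonzero}. I would then bound the remaining sums by Cauchy--Schwarz with $x_{t-1}=\rho^{t-2}z_t$. Because $z_t\to z$ a.s. (Lemma \ref{lem:measurability}), $\max_{t\le T}|z_t|=O_{\mathbb{P}}(1)$, and so
\begin{equation*}
\sum_{t\le T}x_{t-1}^2 = O_{\mathbb{P}}(\rho^{2T}) \quad\text{(consistent with Lemma \ref{lem:B_T})}.
\end{equation*}
Together with $\sum_t\epsilon_t^2=O_{\mathbb{P}}(T)$, this gives, uniformly in $h$,
\begin{equation*}
|R_{1,h}|+|R_{2,h}|\le (T-h)^{-1}\Big(\sum_t x_{t-1}^2\Big)^{1/2}\Big(\sum_t\epsilon_t^2\Big)^{1/2}=O_{\mathbb{P}}\big(|\rho|^T T^{-1/2}\big),
\end{equation*}
and $|R_{3,h}|\le (T-h)^{-1}\sum_t x_{t-1}^2 = O_{\mathbb{P}}(\rho^{2T}/T)$. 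Multiplying by $\hat\rho-\rho=O_{\mathbb{P}}(|\rho|^{-T})$ makes the cross terms $O_{\mathbb{P}}(T^{-1/2})$ and the quadratic term $O_{\mathbb{P}}(T^{-1})$, both uniformly in $h\le L_T$.

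The main obstacle I expect is exactly this rate bookkeeping. Theorem \ref{thm:consistency} as stated only gives a slightly sub-$\rho^T$ rate via $f_T$. If I relied only on that, I would pick $f_T = T^{-1/4}$, which yields $\hat\rho-\rho=o_{\mathbb{P}}(T^{1/4}|\rho|^{-T})$. The cross terms would then be $o_{\mathbb{P}}(T^{-1/4})$ and the quadratic term $o_{\mathbb{P}}(T^{-1/2})$, which is still enough. So either route closes the argument; the point to get right is that the $\rho^{2T}$ growth of $\sum_t x_{t-1}^2$ cancels exactly against $(\hat\rho-\rho)^2$, with the factor $1/(T-h)$ supplying the decay.
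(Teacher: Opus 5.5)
Your argument is correct. The infeasible part matches the paper's Lemma \ref{lem:uniformphi}: Isserlis/Bartlett variance, uniform summability from the geometric decay of $\gamma_k$, then a union bound with Chebyshev using $L_T/T\to0$. For the correction terms, however, you take a genuinely different route.

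The paper rewrites each sum via lag-shift identities:
\begin{align*}
\sum_{t=h+1}^T x_{t-1}\epsilon_{t-h}&=\rho^hA_{T-h}+R^{(1)}_{T,h},\\
\sum_{t=h+1}^T x_{t-h-1}\epsilon_t&=\beta^h(A_T-A_h)-R^{(2)}_{T,h},\\
\sum_{t=h+1}^T x_{t-1}x_{t-h-1}&=\beta^h(B_T-B_h)-R^{(3)}_{T,h}.
\end{align*}
It then needs Lemmas \ref{lem:uniform}, \ref{lem:momtounif}, \ref{lem:annoy} and \ref{lem:restterms}, which sum first-moment bounds over $h\le L_T$. Together with Theorem \ref{thm:consistency} at $f_T=(T-L_T)^{-1/2}$, these kill each piece uniformly.

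You instead apply Cauchy--Schwarz once, bounding all three sums by $B_T^{1/2}(\sum_{t\le T}\epsilon_t^2)^{1/2}$ or by $B_T$. These bounds do not depend on $h$, so uniformity comes for free: $h$ enters only through $(T-h)^{-1}\le (T-L_T)^{-1}$. Since $|\hat\rho-\rho|\,B_T^{1/2}=|A_T|/\sqrt{B_T}$, your bounds are really $|A_T/\sqrt{B_T}|\,(\sum_t\epsilon_t^2)^{1/2}/(T-L_T)$ and $(A_T/\sqrt{B_T})^2/(T-L_T)$. The only rate input you need is therefore tightness of the $t$-statistic.

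What your route buys is a much shorter proof with explicit rates $O_{\mathbb{P}}(T^{-1/2})$ and $O_{\mathbb{P}}(T^{-1})$. It also shows that the correction terms need only Assumptions \ref{ass:var}--\ref{ass:z_0}; Gaussianity matters only for the infeasible part, whereas the paper also uses Gaussian sixth moments in Lemma \ref{lem:uniform}. The paper's exact decomposition tracks the lag structure more finely, but it gains nothing extra for this lemma.

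Incidentally, your ``equivalently'' is correct as stated. $X_T=O_{\mathbb{P}}(1)$ holds if and only if $f_TX_T\overset{\mathbb{P}}{\rightarrow}0$ for every null sequence $(f_T)$. So Theorem \ref{thm:consistency} alone already gives $\rho^T(\hat\rho-\rho)=O_{\mathbb{P}}(1)$, and the $f_T=T^{-1/4}$ fallback is not needed.
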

\noindent This enables us to define a feasible variance estimator of $\Gamma$ as
\begin{equation*}
    \hat{\Gamma} := \hat{\gamma}_0 + 2 \sum_{h=1}^{L_T} \hat{\gamma}_h \hat{\beta}^h.
\end{equation*}
Which can be combined with Lemmas \ref{lem:uniformbeta}, \ref{lem:ACFs} to establish consistency of $\hat{\Gamma}$, enabling inference on $\rho$ in the ARMA setting. 
\begin{theorem} \label{thm:feasiblearma}
    Under Assumption \ref{ass:arma}, then $\hat{\Gamma} \overset{\mathbb{P}}{\rightarrow} \Gamma$ and $ \hat{\Gamma}^{-1} \frac{A_T}{\sqrt{B_T}} \overset{D}{\rightarrow} N(0,1).$
\end{theorem}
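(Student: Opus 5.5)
The plan is to prove consistency of $\hat{\Gamma}$ term by term, and then obtain the distributional statement from Proposition \ref{thm:arma} via Slutsky's lemma. Under Assumption \ref{ass:arma}, Assumption \ref{ass:var} holds because stable ARMA processes have bounded second moments, and Assumption \ref{ass:z_0} holds by Lemma \ref{lem:ARMA_znonzero}. Hence both Lemma \ref{lem:uniformbeta} and Lemma \ref{lem:ACFs} are available. I decompose
\begin{equation*}
\hat{\Gamma}-\Gamma = (\hat{\gamma}_0-\gamma_0) + 2\sum_{h=1}^{L_T}(\hat{\gamma}_h-\gamma_h)\hat{\beta}^h + 2\sum_{h=1}^{L_T}\gamma_h(\hat{\beta}^h-\beta^h) - 2\sum_{h>L_T}\gamma_h\beta^h .
\end{equation*}

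The four terms are handled as follows.
\begin{itemize}
\item The first term is $o_{\mathbb{P}}(1)$ by Lemma \ref{lem:ACFs}.
\item The last term is deterministic. Since $|\gamma_h|\le\gamma_0$, it is bounded by $2\gamma_0\sum_{h>L_T}|\beta|^h = 2\gamma_0|\beta|^{L_T+1}/(1-|\beta|)\to 0$, because $L_T\to\infty$.
\item The third term is bounded by $2\gamma_0\sum_{h=1}^{T}|\hat{\beta}^h-\beta^h|$, using $L_T\le T$ for large $T$. This tends to zero almost surely by the second claim of Lemma \ref{lem:uniformbeta}.
\item For the second term I bound it by $2\sup_{0\le h\le L_T}|\hat{\gamma}_h-\gamma_h|\cdot\sum_{h=1}^{L_T}|\hat{\beta}|^h$. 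The random factor satisfies $\sum_{h\le L_T}|\hat\beta|^h\le \sum_{h\ge1}|\beta|^h+\sum_{h\le T}|\hat\beta^h-\beta^h|$, which is $O(1)$ almost surely by Lemma \ref{lem:uniformbeta}. Alternatively, by Corollary \ref{cor:consistency}, eventually $|\hat\beta|\le(1+|\beta|)/2<1$. The product of an $o_{\mathbb{P}}(1)$ term and an $O_{\mathbb{P}}(1)$ term is $o_{\mathbb{P}}(1)$.
\end{itemize}
Together these give $\hat{\Gamma}\overset{\mathbb{P}}{\rightarrow}\Gamma$.

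For the second claim I combine $A_T/\sqrt{B_T}\overset{D}{\rightarrow}N(0,\Gamma)$ from Proposition \ref{thm:arma} with $\hat{\Gamma}\overset{\mathbb{P}}{\rightarrow}\Gamma$. I then apply Slutsky and the continuous mapping theorem to $(a,g)\mapsto a\,g^{-1/2}$. The standardization that yields a standard normal limit is by $\hat{\Gamma}^{1/2}$, i.e.\ the standard deviation. I would read the displayed $\hat{\Gamma}^{-1}$ in that sense (or note that dividing by $\hat\Gamma$ gives $N(0,\Gamma^{-1})$ instead). This step requires the map to be continuous at $g=\Gamma$, so I need $\Gamma>0$.

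The one step needing genuine care is showing $\Gamma>0$. I would use $\Gamma=(1-\beta^2)\lim_T \mathrm{Var}(y_T)$, which follows from Lemma \ref{lem:arma}, or from direct expansion of $\mathrm{Var}(\sum_i\beta^{T-i}\epsilon_i)$ using summability of $\gamma_h$. Because the ARMA polynomials have roots outside the unit circle and no common roots, the spectral density $f$ of $(\epsilon_t)$ is continuous and bounded below by some $c>0$. Then $\mathrm{Var}(y_T)=\int|\sum_i\beta^{T-i}e^{\mathrm{i}i\lambda}|^2f(\lambda)\,d\lambda\ge 2\pi c\sum_{i\le T}\beta^{2(T-i)}\ge 2\pi c$, so $\Gamma\ge 2\pi c(1-\beta^2)>0$. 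With this, $\hat\Gamma>0$ with probability tending to one, and the Slutsky argument concludes the proof.
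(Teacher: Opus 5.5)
Your proposal is correct and follows the paper's proof. Both split $\hat{\Gamma}-\Gamma$ into three pieces: the autocovariance error (Lemma \ref{lem:ACFs}), the $\hat{\beta}^h$ error (Lemma \ref{lem:uniformbeta}) and the deterministic geometric truncation tail, and then apply Slutsky's theorem to Proposition \ref{thm:arma}.

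Your version is slightly more careful than the paper's in three respects:
\begin{itemize}
\item Your split $(\hat{\gamma}_h-\gamma_h)\hat{\beta}^h+\gamma_h(\hat{\beta}^h-\beta^h)$ is exact, whereas the paper's bound silently drops the cross term $(\hat{\gamma}_h-\gamma_h)(\hat{\beta}^h-\beta^h)$.
\item You correctly note that the normalization must be $\hat{\Gamma}^{-1/2}$ rather than $\hat{\Gamma}^{-1}$.
\item You supply the needed $\Gamma>0$ via the spectral density lower bound, which the paper leaves implicit.
\end{itemize}
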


\subsection{Consistency of AR(2)} \label{sec:ar2}
Based on \cite{LaiWei1983a} and \cite{LaiWei1985}, \cite{Nie2005} provides general consistency results for VAR(p) models with deterministic terms. Those results require the innovations to satisfy the \emph{Marcinkiewicz-Zygmund} moment condition
\begin{equation} \label{eq:mz}
    \exists \delta>0 : sup_t \mathbb{E}[\epsilon_t^{2+\delta}|\mathcal{F}_{t-1}] < \infty \quad a.s.
\end{equation}
where $(\mathcal{F}_{t})$ is an increasing sequence of $\sigma$-fields. This is considered quite strong for most time-series contexts, as it for instance may exclude latent-state processes and ARCH. However, in Section \ref{sec:econometric_theory}, we demonstrated that Assumptions \ref{ass:var}-\ref{ass:z_0} were sufficient for consistency of $\hat{\rho}$. Moreover, we provided sufficient conditions for latent-state innovations to satisfy Assumption \ref{ass:z_0} in Lemma \ref{lem:SV}, providing an alternative set of sufficient conditions for consistency that may be used when the Marcinkiewicz-Zygmund condition in (\ref{eq:mz}) is violated or difficult to verify. This alternative set of sufficient conditions for consistency in the AR(1) can be carried over to general autoregressions by re-applying the same diagonalization framework as \cite{Nie2005}. This is demonstrated in the following example, where we show how the AR(1) results carry over to an AR(2) with intercept.
\begin{example}{\label{ex:ar2}}
Consider the explosive AR(2) with intercept $\mu \in \mathbb{R}$ of the form 
\begin{equation*}
    (1-\rho L) (1-\alpha L) x_t = \mu + \epsilon_t, \quad 0<\alpha<1<\rho.
\end{equation*}
By stacking terms into a vector $\mathbf{x}_t :=(x_t, x_{t-1}, 1)$ and defining
\begin{equation*}
    \Pi := \begin{pmatrix}
        \alpha+\rho & -\alpha \rho & \mu \\
        1 & 0 & 0 \\ 0 & 0 & 1
    \end{pmatrix}, \quad M := \begin{pmatrix}
        1 & -\rho & \tfrac{\mu}{\alpha-1} \\ 0 & 0 & 1 \\ 1 &-\alpha & \tfrac{\mu}{\rho-1} 
    \end{pmatrix} , \quad \Lambda:= \begin{pmatrix}
        \alpha & 0 & 0 \\ 0 & 1 & 0 \\ 0 & 0 & \rho
    \end{pmatrix}
\end{equation*}
one may write the model in companion form as $\mathbf{x}_t = \Pi \mathbf{x}_{t-1} + e_t$, where $e_t = (\epsilon_t,0,0)$. As $M \Pi = \Lambda M$, the companion model  can pre-multiplied by $M$, delivering a diagonalized process $(u_t, 1, w_t) := M\mathbf{x}_t$ of the form
\begin{equation*}
    \begin{pmatrix}
        u_t \\ 1 \\ w_t
    \end{pmatrix} =  \begin{pmatrix}
        \alpha & 0 & 0 \\ 0 & 1 & 0 \\ 0 & 0 & \rho
    \end{pmatrix} \begin{pmatrix}
        u_{t-1} \\ 1 \\ w_{t-1}
    \end{pmatrix} + Me_t.
\end{equation*}
As $M$ is invertible, $x_t$ can be expressed as a linear combination of a stable AR(1), an explosive AR(1) and a deterministic term. \\ \leavevmode \\
This transformation is especially useful if we wish to analyse multivariate least-squares estimators. Define $\hat{\theta}$ as the least-squares estimator arising from a regression of $x_t$ on $x_{t-1}, x_{t-2}$ and an intercept. Defining $Y:=(\mathbf{x}_{T-1}'M', \mathbf{x}_{T-2}'M', ..., \mathbf{x}_{0}'M')$ and $\epsilon = (\epsilon_T, ..., \epsilon_1)$, one may write $\hat{\theta}$, centered around its true value $\theta = (\alpha+\rho, -\alpha \rho, \mu)$ as 
\begin{equation*}
    (\hat{\theta} - \theta) = M' (Y'Y)^{-1}Y'\epsilon.
\end{equation*}
Defining
\begin{equation*}
    D_T:= \begin{pmatrix}
        \{\sum_{t=1}^T{u_{t-1}^2}\}^{1/2} & & \\
        & \sqrt{T} & \\
        && \{ \sum_{t=1}^T w_{t-1}^2\}^{1/2}    \end{pmatrix}
\end{equation*}
and
\begin{equation*}
    H_T := \begin{pmatrix}
        0 & \hat{\phi}(u_{t-1},1) & \hat{\phi}(u_{t-1},w_{t-1}) \\
        \hat{\phi}(1, u_{t-1}) & 0 & \hat{\phi}(1,w_{t-1}) \\
        \hat{\phi}(w_{t-1},u_{t-1}) & \hat{\phi}(w_{t-1},1) & 0
    \end{pmatrix}
\end{equation*}
and finally
\begin{equation*}
    \hat{\phi}(a_t,b_t):=\frac{\sum_{t=1}^T a_t b_t}{\sqrt{\sum_{t=1}^T a_t^2} \sqrt{\sum_{t=1}^T b_t^2}},
\end{equation*}
one may decompose $(\hat{\theta}-\theta)$ as
\begin{equation*}
     (\hat{\theta}-\theta) = M' (D_T [I_3 + H_T] D_T)^{-1} \sum_{t=1}^T
     \begin{pmatrix}
        u_{t-1} \epsilon_t \\
        \epsilon_t \\
        w_{t-1} \epsilon_t
    \end{pmatrix}.
\end{equation*}
This form can for instance be used to establish consistency of $\hat{\theta}$. As the focus of this paper is the explosive component, we assume the innovations and stable AR(1) component $u_t$ satisfy the follow high-level behavior:
\begin{assumption} \label{ass:stableAR}
    Let $\sigma_u^2, M^2 \in \mathbb{R}_+$. Moreover, assume $T^{-1} \sum_{t=1}^T \epsilon_t u_{t-1} \overset{\mathbb{P}}{\rightarrow} 0$, $T^{-1} \sum_{t=1}^T \epsilon_t \overset{\mathbb{P}}{\rightarrow} 0$, $T^{-1} \sum_{t=1}^T u_{t-1}^2 \overset{\mathbb{P}}{\rightarrow} \mathbb{E}[u_{t-1}^2] =\sigma_u^2$, $T^{-1} \sum_{t=1}^T u_{t-1} \overset{\mathbb{P}}{\rightarrow} 0$ and assume that $T^{-1} \sum_{t=1}^T \epsilon_t^2 \overset{\mathbb{P}}{\rightarrow} M^2$.
\end{assumption}
\noindent With high-level assumptions in place ensuring stable components are well-behaved, we may use the explosive convergence results provided in this paper to demonstrate that terms containing the explosive component in $(\hat{\theta}-\theta)$ 
vanish, and $\hat{\theta}$ is therefore consistent.  
\begin{proposition} \label{prop:consistency}
     Let the innovations satisfy Assumptions \ref{ass:var}-\ref{ass:z_0} and \ref{ass:stableAR}. Then $\hat{\theta} \overset{\mathbb{P}}{\rightarrow} \theta$.
\end{proposition}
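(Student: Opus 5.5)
We write $\hat{\theta}-\theta = M'(Y'Y)^{-1}Y'\epsilon$ and use the decomposition stated above,
\[
\hat{\theta}-\theta = M' D_T^{-1}\,[I_3+H_T]^{-1}\, D_T^{-1} S_T, \qquad S_T:=\sum_{t=1}^T (u_{t-1}\epsilon_t,\ \epsilon_t,\ w_{t-1}\epsilon_t)'.
\]
Since $M$ is a fixed invertible matrix, it suffices to show three things: $D_T^{-1}S_T$ is stochastically bounded with its last two coordinates tending to zero; $D_T^{-1}\to 0$ in probability; and $[I_3+H_T]^{-1}$ converges in probability to an invertible matrix. The explosive coordinate $w_t$ satisfies $w_t=\rho w_{t-1}+\epsilon_t$, because the third row of $Me_t$ is $\epsilon_t$. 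Its initial value is $w_0$, a linear combination of $x_0,x_{-1}$ and $1$, which we take to satisfy the moment bound of Assumption \ref{ass:var}. Hence Lemma \ref{lem:measurability}, Lemma \ref{lem:B_T} and Theorem \ref{thm:consistency} apply verbatim to $w_t$, with $A_T^w=\sum w_{t-1}\epsilon_t$ and $B_T^w=\sum w_{t-1}^2$.

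\textbf{Step 1 (diagonal scaling).} By Assumption \ref{ass:stableAR}, $T^{-1}\sum u_{t-1}^2\to\sigma_u^2>0$. By (\ref{eq:den_conv}) and Assumption \ref{ass:z_0}, $\beta^{2(T-2)}B_T^w\to z^2/(1-\beta^2)>0$ a.s., so $B_T^w\to\infty$ a.s. at rate $\rho^{2T}$. Thus every diagonal entry of $D_T$ diverges.

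\textbf{Step 2 (normalised scores).} The first coordinate is $\{\sum u_{t-1}^2\}^{-1/2}\sum u_{t-1}\epsilon_t = (T^{-1}\sum u^2)^{-1/2}\sqrt{T}\,T^{-1}\sum u\epsilon$. This is $o_P(\sqrt T)$, and after the outer $D_T^{-1}$ it becomes $o_P(1)$. The second coordinate is treated the same way: $T^{-1/2}\sum\epsilon_t=\sqrt{T}\,o_P(1)$, and the extra $T^{-1/2}$ gives $o_P(1)$. The third coordinate is $A_T^w/\sqrt{B_T^w}$. By (\ref{eq:and_est_noy}) it equals $\mathrm{Sign}(z)\sqrt{1-\beta^2}\,y_T+o(1)$, and $y_T$ is $O_P(1)$ because $\mathbb{E}y_T^2\le M^2(1-|\beta|)^{-2}$. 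Dividing by $\sqrt{B_T^w}\to\infty$ gives $o_P(1)$. So each coordinate of $D_T^{-1}S_T$ is at most $O_P(\sqrt T)$ up to the factor handled by the outer $D_T^{-1}$, and the overall product tends to zero, provided Step 3 holds.

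\textbf{Step 3 (the correlation matrix $H_T$, main obstacle).} We show $H_T\overset{\mathbb{P}}{\to}H$ with $I_3+H$ invertible.
\begin{itemize}
\item[] $\hat\phi(u,1)=T^{-1}\sum u_{t-1}\big/\sqrt{T^{-1}\sum u_{t-1}^2}\to 0$.
\item[] For the cross terms with $w$, Cauchy--Schwarz fails, so we exploit the geometric growth. We have $|\sum w_{t-1}|\le \sum|\beta|^{-(t-1)}\cdot|\beta^{t-1}w_{t-1}|$. Here $\beta^{t-2}w_{t-1}=z_t^w$ converges a.s., so $\sup_t|z_t^w|<\infty$ a.s. Hence $\sum_{t\le T}|w_{t-1}|=O(|\rho|^T)$ a.s., while $\sqrt{T\,B_T^w}\asymp\sqrt{T}|\rho|^T$. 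Therefore $\hat\phi(1,w)=O(T^{-1/2})\to0$.
\item[] For $\hat\phi(u,w)$, write $\sum u_{t-1}w_{t-1}\le \sup_t|z_t^w|\sum|\rho|^{t}|u_{t-1}|$. Since $\mathbb{E}|u_t|$ is bounded, $\mathbb{E}\sum|\rho|^t|u_{t-1}|=O(|\rho|^T)$. Hence this cross sum is $O_P(|\rho|^T)$, while the denominator is of order $\sqrt{T}|\rho|^T$, so $\hat\phi(u,w)\overset{\mathbb{P}}{\to}0$.
\end{itemize}
Bounded $\mathbb{E}u_t^2$ follows from $u_t$ being a stable AR(1) driven by innovations satisfying Assumption \ref{ass:var}. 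With these three limits, $H=0$, so $[I_3+H_T]^{-1}\overset{\mathbb{P}}{\to}I_3$ by the continuous mapping theorem.

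\textbf{Conclusion.} Combining Steps 1--3, $D_T^{-1}[I_3+H_T]^{-1}D_T^{-1}S_T\overset{\mathbb{P}}{\to}0$, and premultiplying by the fixed matrix $M'$ gives $\hat\theta\overset{\mathbb{P}}{\to}\theta$. The delicate step is the $u$--$w$ cross term. If the $L^1$ bound on $u$ is not available, we would instead use Cauchy--Schwarz on the weighted sum, $\sum|\rho|^t|u_{t-1}|\le(\sum|\rho|^{2t})^{1/2}(\sum u_{t-1}^2)^{1/2}=O_P(|\rho|^T\sqrt T)$. This yields only $\hat\phi(u,w)=O_P(1)$, so one would then need a sharper argument, for instance splitting the sum into $t\le T-K$ and the last $K$ terms.
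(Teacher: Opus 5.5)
Your proof is correct and follows the paper's route: you show $H_T=o_{\mathbb{P}}(1)$ because every cross sum involving $w_{t-1}$ is $O_{\mathbb{P}}(|\rho|^{T})$ while its normaliser is of order $\sqrt{T}|\rho|^{T}$, and then show the normalised scores vanish. The paper bounds $\mathbb{E}|z_t u_{t-1}|$ by Cauchy--Schwarz with $\mathbb{E}u_{t-1}^2=\sigma_u^2$ taken from Assumption~\ref{ass:stableAR}, whereas you use $\sup_t|z_t^w|<\infty$ a.s.\ together with an $L^1$ bound on $u_t$. The only slip is your opening reduction: $D_T^{-1}S_T$ need not be stochastically bounded, and its third entry $A_T^w/\sqrt{B_T^w}$ does not tend to zero. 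What you actually establish, and what suffices (including for the mixing of coordinates by $[I_3+H_T]^{-1}$), is that each entry of $D_T^{-1}S_T$ is $o_{\mathbb{P}}(\sqrt{T})$ (not merely $O_{\mathbb{P}}(\sqrt{T})$ as your summary says) while every diagonal entry of $D_T^{-1}$ is $O_{\mathbb{P}}(T^{-1/2})$.
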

    
\end{example}

\section{Conclusion}
We generalized Anderon's first result, showing that the assumptions of \emph{uncorrelated} and \emph{independent} innovations were not necessary for the numerator and denominator to converge to forward- and backward averages of the innovations. We then demonstrated how the independence assumption of the innovations made in \cite[Theorem 2.3]{And1959} may be relaxed to $\alpha$-mixing. We provided the exact limiting distributions of the forward- and backward averages in the ARMA setting, demonstrating that inference is feasible outside the $iid$ setting. Finally, we demonstrated how the AR(1) results, in particular the consistency results, may be extended to larger autoregressions via an example.

\newpage
\appendix
\section{Proofs}

\subsection{Preliminary Results}
Below, we collect some preliminary results that are used repeatedly to prove the main results. The proofs generally proceed by bounding moments and then using these bounds to conclude almost sure convergence. This subsection is structured to follow that workflow explicitly: Lemma \ref{lem:reduction_lemma} demonstrates how a useful moment condition implies almost sure convergence, Lemma \ref{lem:cauchy} proves that $z_T$ is a Cauchy sequence, Lemma \ref{lem:z_T_bounds} and Lemma \ref{lem:z_bounds} provide moment bounds for $y_T, z_T$. Generally, objects defined within proofs only retain their definitions within the proof.

\begin{lemma}[Reduction Lemma] \label{lem:reduction_lemma}
    Let $(S_T)$ be a sequence of random variables on $(\Omega, \mathcal{F}, \mathbb{P})$. The moment condition $ \sum_{T=1}^\infty \mathbb{E}|S_T| < \infty $ implies $S_T \overset{a.s.}{\rightarrow} 0$.
\end{lemma}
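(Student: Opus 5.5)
The plan is to go through monotone convergence and the Borel--Cantelli lemma, or equivalently through a Markov-inequality argument; in either case we never need independence or any structure of the $S_T$ beyond the summability of their first absolute moments. First define the nonnegative extended-real random variable $S := \sum_{T=1}^\infty |S_T|$. Its partial sums are measurable and increase pointwise to $S$, so the monotone convergence theorem gives
\begin{equation*}
    \mathbb{E}[S] = \sum_{T=1}^\infty \mathbb{E}|S_T| < \infty .
\end{equation*}

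From finiteness of $\mathbb{E}[S]$ we conclude $\mathbb{P}(S < \infty) = 1$. Otherwise $S = \infty$ on a set of positive probability, which would force $\mathbb{E}[S] = \infty$. On the event $\{S<\infty\}$ the numerical series $\sum_T |S_T(\omega)|$ converges, so its terms must tend to zero, i.e.\ $S_T(\omega) \rightarrow 0$. Hence $S_T \overset{a.s.}{\rightarrow} 0$.

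As a cross-check, there is an alternative route: for each $\varepsilon>0$, Markov's inequality gives $\sum_T \mathbb{P}(|S_T|>\varepsilon) \leq \varepsilon^{-1}\sum_T \mathbb{E}|S_T| < \infty$. Borel--Cantelli then yields $\mathbb{P}(|S_T|>\varepsilon \text{ i.o.})=0$. Taking a countable union over $\varepsilon = 1/k$ gives the same conclusion.

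There is no genuine obstacle here. The only point requiring care is the interchange of expectation and the infinite sum, which is justified by nonnegativity through monotone convergence (Tonelli). A second, smaller point is that $S$ may take the value $+\infty$ and must be handled as an extended-real random variable before we conclude it is a.s.\ finite.
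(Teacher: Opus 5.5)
Your proof is correct, but your main argument takes a different route from the paper's. The paper's proof is exactly your ``cross-check''. It applies Markov's inequality to get $\sum_T \mathbb{P}(|S_T|\geq\delta) \leq \delta^{-1}\sum_T \mathbb{E}|S_T| < \infty$ for every $\delta>0$, and then invokes Borel--Cantelli by citing Stout.

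Your primary route instead uses monotone convergence to show $\mathbb{E}\sum_T |S_T| < \infty$. Hence $\sum_T |S_T| < \infty$ almost surely, and the terms must vanish.

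Your route has two advantages. It is more self-contained: it needs no Borel--Cantelli and no countable intersection over $\delta = 1/k$, a step the paper leaves implicit in its citation. It also gives a strictly stronger conclusion, namely that $(S_T)$ is almost surely absolutely summable, not merely null.

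The paper's route has a different advantage. It only uses $\sum_T \mathbb{P}(|S_T|>\varepsilon)<\infty$ for each $\varepsilon>0$. It therefore carries over verbatim when tail probabilities are controlled directly, for example by exponential inequalities, rather than through a summable first moment.
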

\noindent \textbf{Proof:}
By Markov's inequality
\begin{equation*}
    \forall \delta>0: \sum_{T=1}^\infty P(|S_T| \geq \delta) \leq \frac{1}{\delta} \sum_{T=1}^\infty E |S_T| < \infty.
\end{equation*}
By a Borel-Cantelli argument, see Theorem 2.1.1  \cite{Stout74}, this implies $ S_T \overset{a.s.}{\rightarrow} 0._{\qedsymbol{}}$

\begin{lemma} \label{lem:cauchy}
    Under Assumption \ref{ass:var}, $z_T$ is a Cauchy sequence in $\mathbb{R}$, almost surely.
\end{lemma}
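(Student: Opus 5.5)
The plan is to show that the series of increments of $z_T$ converges absolutely almost surely, which immediately makes $(z_T)$ Cauchy. By (\ref{eq:zt_def}), for $T \geq 1$ we have $z_{T+1} - z_T = \beta^{T-1}\epsilon_T$. Hence for $S > T$,
\begin{equation*}
    |z_S - z_T| = \Big| \sum_{i=T}^{S-1} \beta^{i-1}\epsilon_i \Big| \leq \sum_{i=T}^{\infty} |\beta|^{i-1}|\epsilon_i| =: R_T .
\end{equation*}
It therefore suffices to show that $R_1 = \sum_{i=1}^\infty |\beta|^{i-1}|\epsilon_i| < \infty$ almost surely. Once this holds, the tails $R_T \to 0$ pointwise on the event $\{R_1<\infty\}$, and $\sup_{S>T}|z_S-z_T| \leq R_T \to 0$ there.

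To get finiteness of $R_1$, I take expectations and use monotone convergence together with Assumption \ref{ass:var}. By Jensen (or Cauchy--Schwarz), $\mathbb{E}|\epsilon_i| \leq (\mathbb{E}\epsilon_i^2)^{1/2} \leq M$, so
\begin{equation*}
    \mathbb{E} R_1 = \sum_{i=1}^\infty |\beta|^{i-1}\mathbb{E}|\epsilon_i| \leq \frac{M}{1-|\beta|} < \infty .
\end{equation*}
Thus $R_1<\infty$ almost surely. No independence, centering or uncorrelatedness is used; only the uniform second-moment bound and $|\beta|<1$ enter.

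An alternative, which matches the paper's workflow, applies Lemma \ref{lem:reduction_lemma} to $S_T = R_T$. One needs $\sum_T \mathbb{E}R_T \leq \sum_T M|\beta|^{T-1}/(1-|\beta|) < \infty$, which gives $R_T \overset{a.s.}{\rightarrow} 0$ directly. I would present this version, since it also serves Lemma \ref{lem:measurability}.

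There is no real obstacle. The one point needing care is to bound the whole tail supremum $\sup_{S>T}|z_S-z_T|$ rather than a single fixed difference, since almost-sure Cauchyness requires this; dominating it by the absolute tail $R_T$ takes care of it. The initial term $\rho x_0$ cancels in all differences, so the condition on $x_0$ is not needed here beyond $x_0$ being finite almost surely.
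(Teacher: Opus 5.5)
Your proposal is correct, and the version you say you would present matches the paper's proof essentially line for line. The paper bounds $d_T=\sup_{m\ge T}|z_m-z_T|$ by the absolute tail $\sum_{t\ge T}|\beta|^{t-1}|\epsilon_t|$, uses monotone convergence with $\mathbb{E}|\epsilon_t|\le M$ to get a geometric bound on $\mathbb{E}d_T$, and concludes via Lemma \ref{lem:reduction_lemma}; your first route ($\mathbb{E}R_1<\infty$, so $R_1<\infty$ a.s. and the tails $R_T$ vanish pathwise) is an equally valid, slightly more elementary variant that avoids Borel--Cantelli entirely.
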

\noindent \textbf{Proof:} 
Define $d_T := \sup_{m \geq T} | z_m - z_T |$. By the triangle inequality
\begin{equation*}
    \mathbb{E} \left[ d_T \right] =  \mathbb{E} \left[ \sup_{m \geq T} \bigg| \sum_{t=T}^{m-1} \beta^{t-1} \epsilon_t \bigg| \right] \leq \mathbb{E} \left[ \sup_{m \geq T}  \sum_{t=T}^{m-1} |\beta|^{t-1} |\epsilon_t| \right].
\end{equation*}
Use the monotone convergence theorem to swap the expectation and the supremum, then bound $\mathbb{E}|\epsilon_t| \leq M$ via Assumption \ref{ass:var}, delivering 
$\mathbb{E} \left[ d_n \right] \leq \beta^T {M}{(1-|\beta|)^{-1}}$. By Lemma \ref{lem:reduction_lemma}, the bound implies $d_T \overset{a.s.}{\rightarrow} 0._{\qedsymbol{}}$

\begin{lemma}[bounds on $z_T, y_T$] \label{lem:z_T_bounds}
Under Assumption \ref{ass:var} there exists a finite $K \in \mathbb{R}_+$, such that for all integers $1\leq s<T$\\
\begin{minipage}{0.42\textwidth}
  \begin{align}
     \mathbb{E}[z_T^2] & \leq K \label{eq:z_L2} \\[6pt]
     \mathbb{E}[y_T^2] & \leq K \label{eq:y_L2}
    \end{align}
\end{minipage}
\begin{minipage}{0.55\textwidth}
    \begin{align}
    \mathbb{E}\left[ (z_T - z_{T-s})^2 \right] & \leq K \beta^{2(T-s)} \label{eq:z_findif_sqaured} \\[6pt]
    \mathbb{E}\left[ (z_T + z_{T-s})^2 \right] & \leq K \label{eq:z_sum_L2} 
    \end{align}
\end{minipage}
\end{lemma}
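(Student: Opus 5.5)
All four bounds will follow from one elementary device: Minkowski's inequality in $L^2$, combined with Assumption \ref{ass:var}. For any finite collection of random variables, $\|\sum_i c_i \xi_i\|_{L^2} \leq \sum_i |c_i| \, \|\xi_i\|_{L^2}$. Assumption \ref{ass:var} gives $\|\epsilon_t\|_{L^2}\leq M$ for all $t$ and $\|x_0\|_{L^2}\leq M$. This avoids any need to control cross-covariances. Since correlation and non-centering are allowed, the cross terms $\mathbb{E}[\epsilon_i\epsilon_j]$ are only bounded by $M^2$ (Cauchy--Schwarz) and do not vanish. The geometric weights $|\beta|^{i}$ with $|\beta|<1$ make the double sum of cross terms summable, and Minkowski packages exactly this observation.

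\textbf{Bounds on $z_T$ and $y_T$.} From (\ref{eq:zt_def}), for $T>1$,
\begin{equation*}
\|z_T\|_{L^2} \leq \sum_{i=1}^{T-1}|\beta|^{i-1}M + |\rho| M \leq \frac{M}{1-|\beta|} + |\rho| M.
\end{equation*}
The case $T=1$ is trivially covered by the same bound. Squaring gives (\ref{eq:z_L2}). Similarly, from (\ref{eq:y_def}),
\begin{equation*}
\|y_T\|_{L^2}\leq \sum_{i=1}^T |\beta|^{T-i} M \leq \frac{M}{1-|\beta|},
\end{equation*}
which yields (\ref{eq:y_L2}).

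\textbf{The difference bound.} For $1\leq s<T$ we have $z_T - z_{T-s} = \sum_{i=T-s}^{T-1}\beta^{i-1}\epsilon_i$. Note that $\rho x_0$ cancels here, including the edge case $T-s=1$, where $z_1=\rho x_0$ and the sum starts at $i=1$. Minkowski then gives
\begin{equation*}
\|z_T-z_{T-s}\|_{L^2}\leq M\sum_{i=T-s}^{\infty}|\beta|^{i-1} = \frac{M|\beta|^{T-s-1}}{1-|\beta|}.
\end{equation*}
Squaring produces $M^2\beta^{-2}(1-|\beta|)^{-2}\,\beta^{2(T-s)}$, which is (\ref{eq:z_findif_sqaured}) with constant $M^2\rho^2(1-|\beta|)^{-2}$.

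\textbf{The sum bound.} By Minkowski, $\|z_T+z_{T-s}\|_{L^2}\leq \|z_T\|_{L^2}+\|z_{T-s}\|_{L^2}$, so (\ref{eq:z_sum_L2}) follows from (\ref{eq:z_L2}) with constant $4$ times the previous one. Finally, take $K$ to be the maximum of the constants obtained. None of these constants depends on $s$ or $T$.

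\textbf{Main obstacle.} The only real care needed is bookkeeping: the index range in the telescoped difference, and the factor $\beta^{-1}$ coming from the exponent $i-1$ versus the stated $\beta^{2(T-s)}$. That factor is absorbed into $K$. One could instead expand the squares and bound each $|\mathbb{E}\epsilon_i\epsilon_j|\leq M^2$, getting $(\sum|\beta|^{i-1})^2$, but Minkowski gives this directly.
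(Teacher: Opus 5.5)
Your proof is correct and is essentially the paper's argument. The paper expands each square and bounds every cross moment $\mathbb{E}|\epsilon_i\epsilon_j|$ and $\mathbb{E}|x_0\epsilon_i|$ by $M^2$ via Cauchy--Schwarz before summing the geometric weights; that gives exactly the square of your Minkowski bound $\bigl(\sum_i |c_i|\,\|\xi_i\|_{L^2}\bigr)^2$. The paper also treats the sum bound via $(a+b)^2\le 2a^2+2b^2$ and reindexes the difference as $\beta^{i-1}=\beta^{T-s}\beta^{i'-1}$, matching your bookkeeping, including the constant $4K_1$ and the absorbed factor $\beta^{-2}=\rho^2$.
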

\noindent \textbf{Proof:} \\
\noindent \emph{Proof of (\ref{eq:z_L2}):} Recall that  $z_T := \beta^{T-2} x_{T-1}$ by (\ref{eq:zt_def}), expanding the square yields
\begin{equation}
    \mathbb{E}[z_T]^2 = \sum_{i=1}^{T-1} \beta^{2(i-1)} \mathbb{E} \epsilon_i^2 + 2 \sum_{i=1}^{T-1} \sum_{j>i}^{T-1} \beta^{i+j-2} \mathbb{E} \epsilon_i \epsilon_j + 2 \rho \sum_{i=1}^{T-1} \beta^{i-1} \mathbb{E} [x_0 \epsilon_i] + \rho^2 \mathbb{E}x_0^2 \label{eq:expanded}.
\end{equation}
The final term is trivially $O(1)$, as $\mathbb{E}x_0^2 \leq M^2$ under Assumption \ref{ass:var}. The other three terms can be bounded using the uniform Cauchy-Schwarz bounds under Assumption \ref{ass:var} $\mathbb{E}\epsilon_t^2 \leq M^2$, $\mathbb{E}|\epsilon_t \epsilon_j| \leq M^2$ and $\mathbb{E}x_0^2 \leq M^2$, respectively, by extending the upper summation index from $T-1$ to $\infty$ delivering
\begin{equation*}
    \mathbb{E}[z_T^2] \leq M^2 \left[ \sum_{i=1}^\infty \beta^{2(i-1)} + 2 \sum_{i=1}^\infty \sum_{j>i}^\infty |\beta|^{i+j+2} + 2 \rho \sum_{i=1}^\infty |\beta|^{i-1} + \rho^2 \right] =: K_1.
\end{equation*}
This upper bound is uniform in $T$. The bound is finite, allowing us to define the $K_1$, as all the geometric sums are finite by $|\beta|<1$. \\
\leavevmode \\
\noindent \emph{Proof of (\ref{eq:y_L2})}: Recall that $y_T := \sum_{i=1}^T \beta^{T-i} \epsilon_i$ by (\ref{eq:y_def}). Hence, $y_T = \sum_{i=0}^{T-1} \beta^i \epsilon_{T-i}$. Expanding the square yields
\begin{equation*}
    \mathbb{E}y_T^2 = \mathbb{E} \Big[ \sum_{i=0}^{T-1} \beta^i \epsilon_{T-i} \Big]^2 = \sum_{i=0}^{T-1} \beta^{2i} \mathbb{E} \epsilon_{T-i}^2 + 2 \sum_{i=0}^{T-1} \sum_{j>i}^{T-1} \beta^{i+j} \mathbb{E} [\epsilon_{T-i} \epsilon_{T-j}].
\end{equation*}
Bounding this in a similar fashion to (\ref{eq:z_L2}) delivers $\mathbb{E}[y_T^2] \leq K_1$. \\
\leavevmode \\
\noindent \emph{Proof of (\ref{eq:z_findif_sqaured}):} Begin by changing the direction of summation, delivering
\begin{equation*}
    \mathbb{E}[ z_{T-s} - z_T]^2 = \mathbb{E} \Big[ \sum_{i=T-s}^{T-1} \beta^{i-1} \epsilon_i \Big]^2 = \beta^{2(T-s)} \mathbb{E} \Big[ \sum_{i=0}^{s-1} \beta^{i-1} \epsilon_{i+T-s} \Big]^2.
\end{equation*}
The expected squared sum can be bounded in a similar fashion to (\ref{eq:z_L2}), uniformly in $s,T$, delivering $\mathbb{E}[ z_{T-s} - z_T]^2 \leq \beta^{2(T-s)} K_1$. \\ 
\noindent \\
\noindent \emph{Proof of (\ref{eq:z_sum_L2}):} Note that for $a,b \in \mathbb{R}: (a+b)^2 \leq 2a^2 + 2b^2$. A combination of this with the previous bound (\ref{eq:z_L2}) yields
    \begin{align*}
         \mathbb{E}[ z_T + z_{T-s}]^2 \leq \hspace{7pt} & 2 \mathbb{E}z_{T}^2 + 2 \mathbb{E}z_{T-s}^2 \leq 4 \sup_{T \in \mathbb{N}} \mathbb{E}z_T^2 \leq 4K_1.
    \end{align*}       

\noindent To complete the lemma, define $K := 4 K_1 ._{\qedsymbol{}}$

\begin{lemma}[square integrability and bounds on $z$] \label{lem:z_bounds}
    Under Assumption \ref{ass:var} there exists a finite $K \in \mathbb{R}_+$, such that for all integers $1 \leq s \leq T$ \\
    \begin{minipage}{0.42\textwidth}
  \begin{align}
       \mathbb{E}[z^2] &\leq K \label{eq:zsq_bound} \\
      \mathbb{E} | z - z_T | & \leq K |\beta|^T  \label{eq:z_absl_diff}
    \end{align}
\end{minipage}
\begin{minipage}{0.55\textwidth}
    \begin{align}
     \mathbb{E} \left[ (z_T - z)^2 \right] & \leq K \beta^{2T} \label{eq:z_diff_sq} \\[6pt] 
     \mathbb{E} \left| z^2 -  z_T^2 \right| & \leq K |\beta|^{T}  \label{eq:z_squared_diff} 
    \end{align}
\end{minipage}
\end{lemma}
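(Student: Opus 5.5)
The plan is to obtain all four bounds from the uniform bounds of Lemma \ref{lem:z_T_bounds}, passed to the limit with Fatou's lemma and then combined with Cauchy--Schwarz. Throughout I use the almost sure convergence $z_T \to z$ given by Lemma \ref{lem:cauchy} (equivalently Lemma \ref{lem:measurability}). On the full-measure set $A$, $z$ is the genuine limit of $z_T$, so any almost sure statement about $\lim z_T$ transfers to $z$.

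\emph{Step 1: (\ref{eq:zsq_bound}).} Since $z_T^2 \to z^2$ almost surely, Fatou's lemma gives $\mathbb{E}[z^2] \leq \liminf_T \mathbb{E}[z_T^2] \leq K$, using (\ref{eq:z_L2}).

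\emph{Step 2: (\ref{eq:z_diff_sq}).} This is the main step. I fix $T$ and let $m \to \infty$ in the finite-difference bound. For $m > T$, the same computation as in the proof of (\ref{eq:z_findif_sqaured}) gives
\begin{equation*}
\mathbb{E}[(z_m - z_T)^2] = \beta^{2(T-1)}\, \mathbb{E}\Big[\sum_{i=0}^{m-T-1} \beta^{i} \epsilon_{i+T}\Big]^2 \leq K_1 \beta^{2(T-1)}.
\end{equation*}
Here $K_1$ is uniform in $m$ and $T$, because the double-sum Cauchy--Schwarz bound only uses $\sup_t \mathbb{E}\epsilon_t^2 \leq M^2$ and $|\beta|<1$. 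Since $(z_m - z_T)^2 \to (z - z_T)^2$ almost surely as $m \to \infty$, Fatou's lemma yields $\mathbb{E}[(z - z_T)^2] \leq K_1 \beta^{2(T-1)} = (K_1 \beta^{-2}) \beta^{2T}$, and $K_1\beta^{-2}$ is absorbed into $K$.

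The obstacle is bookkeeping, not analysis. The index convention in (\ref{eq:z_findif_sqaured}) has to be applied with the roles set up correctly, namely the limit $m \to \infty$ with $T$ fixed. Otherwise one gets a useless bound like $\beta^{2(T-s)}$ with $T - s \to -\infty$. To avoid this I will redo the short expansion directly rather than cite the lemma verbatim.

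\emph{Step 3: (\ref{eq:z_absl_diff}).} By Jensen (or Cauchy--Schwarz), $\mathbb{E}|z - z_T| \leq (\mathbb{E}(z-z_T)^2)^{1/2} \leq K^{1/2}|\beta|^T$. Alternatively, I can apply Fatou directly to the $L^1$ estimate from the proof of Lemma \ref{lem:cauchy}.

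\emph{Step 4: (\ref{eq:z_squared_diff}).} I write $z^2 - z_T^2 = (z - z_T)(z + z_T)$ and apply Cauchy--Schwarz:
\begin{equation*}
\mathbb{E}|z^2 - z_T^2| \leq \big(\mathbb{E}(z-z_T)^2\big)^{1/2} \big(\mathbb{E}(z+z_T)^2\big)^{1/2}.
\end{equation*}
The first factor is at most $K^{1/2}|\beta|^T$ by Step 2. For the second, $\mathbb{E}(z+z_T)^2 \leq 2\mathbb{E}z^2 + 2\mathbb{E}z_T^2 \leq 4K$ by Step 1 and (\ref{eq:z_L2}).

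Finally, I take $K$ to be the maximum of the finitely many constants produced above, so a single constant works for all four inequalities.
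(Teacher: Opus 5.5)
Your proof is correct and follows essentially the same route as the paper: Fatou's lemma for $\mathbb{E}[z^2]$, geometric tail bounds that use only $\sup_t \mathbb{E}\epsilon_t^2 \le M^2$ for the distance between $z$ and $z_T$, and Cauchy--Schwarz for $\mathbb{E}|z^2 - z_T^2|$. The differences are cosmetic. You pass to the limit with Fatou in $m$ on the finite differences $z_m - z_T$, which avoids needing the series representation of $z$, and you deduce the $L^1$ bound from the $L^2$ one by Jensen; the paper instead applies monotone convergence directly to the absolutely summed tail $\sum_{i \ge T} |\beta|^{i-1} |\epsilon_i|$.
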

\noindent \textbf{Proof:}\\
\noindent \emph{Proof of (\ref{eq:zsq_bound}):} Recall $A = \{ \omega: \lim_T z_T \text{ exists in }  \mathbb{R} \} $ from the definition of $z$ in (\ref{eq:zdef}). By Lemma \ref{lem:cauchy}, requiring Assumption \ref{ass:var}, we have that $\mathbb{P}(A)=1$, and so that
\begin{align*}
    \int_\Omega z^2(\omega) d\mathbb{P} = \int_A \liminf_{T} z_T^2(\omega) d\mathbb{P} \leq \liminf_{T}  \int_A z_T^2(\omega) d\mathbb{P} \leq \sup_T \mathbb{E} z_T^2 \leq K_1.
\end{align*}
The first inequality uses Fatou's Lemma, see Theorem 16.3 of \cite{Bil79}. The final  final inequality uses (\ref{eq:z_L2}) of Lemma \ref{lem:z_bounds}, also requiring Assumption \ref{ass:var}. We give the constant $K_1$ a subscript to differentiate it from different bounds derived later on. \\

\noindent \emph{Proof of (\ref{eq:z_absl_diff}):} By Lemma \ref{lem:cauchy}, requiring Assumption \ref{ass:var}, $z \overset{a.s.}{=} \sum_{i=1}^\infty \beta^{i-1} \epsilon_i$. A combination of the almost sure identity with the triangle inequality yields
    \begin{equation*}
    \mathbb{E}| z - z_T |  = \mathbb{E} \left| \sum_{i=T}^\infty \beta^{i-1} \epsilon_i \right| \leq \mathbb{E} \, \underset{n \rightarrow \infty}{lim} \sum_{i=T}^n |\beta|^{i-1} |\epsilon_i|.
    \end{equation*}
    Note that $f_n :=  \sum_{i=T}^n |\beta|^{i-1} |\epsilon_i|$ is monotone in $n$. Apply the monotone convergence theorem to swap limits and integration, then apply the bound $\mathbb{E}|\epsilon_i| \leq M$. This yields
    \begin{equation*}
        \mathbb{E} \, \underset{n \rightarrow \infty}{lim} \sum_{i=T}^n |\beta|^{i-1} |\epsilon_i| = \sum_{i=T}^\infty |\beta|^{i-1} \mathbb{E} | \epsilon_i | \leq M \sum_{i=T}^\infty |\beta|^{i-1} = \frac{M |\beta|^{T-1}}{1-|\beta|} =: K_2 |\beta|^T.
    \end{equation*}
    \noindent \emph{Proof of (\ref{eq:z_diff_sq}):} Following a similar procedure, one may obtain 
    \begin{equation*} 
        \mathbb{E}[(z - z_T)^2] = \mathbb{E} \Big[ \big( \sum_{i=T}^\infty \beta^{i-1} \epsilon_i \big)^2 \Big] \leq \mathbb{E} \Big[\sum_{i=T}^\infty \beta^{2(i-1)} \epsilon_i^2 + 2 \sum_{i=T}^\infty \sum_{j>i}^{\infty} |\beta|^{i+j-2} |\epsilon_i \epsilon_j| \Big].
    \end{equation*}
    Note that $f_n := \sum_{i=T}^n \beta^{2(i-1)} \epsilon_i^2$ and $h_n := \sum_{i=T}^n \sum_{j>i}^{n} |\beta|^{i+j-2} |\epsilon_i \epsilon_j|$ are increasing monotonically in $n$. Therefore, we apply the Monotone Convergence Theorem and moment bounds, delivering
    \begin{equation*}
       \mathbb{E}(z - z_T)^2 \leq  M^2 \left(\sum_{i=T}^\infty \beta^{2(i-1)} + 2 \sum_{i=T}^\infty \sum_{j>i}^{\infty} |\beta|^{i+j-2} \right) = \frac{M^2 \beta^{2(T-1)}}{(1-|\beta|)^2} =: K_3 \beta^{2T}.
    \end{equation*}
    \noindent \emph{Proof of (\ref{eq:z_squared_diff}):} By adding and subtracting $z_T$ conveniently, then applying the triangle inequality, one may obtain
\begin{equation} \label{eq:twoterms}
    \mathbb{E}|(z^2 - z_T^2)| = \mathbb{E}|(z - z_T + z_T)^2 - z_T^2| \leq \mathbb{E}[(z - z_T)^2] + 2 \mathbb{E}|(z-z_T)z_T|.
\end{equation}
The first term satisfies $\mathbb{E}(z - z_T)^2 \leq K_3 \beta^{2T}$ by (\ref{eq:z_diff_sq}). Applying the Cauchy-Schwarz inequality on the second term, noting $\mathbb{E}(z - z_T)^2 \leq K_3 \beta^{2T}$ by (\ref{eq:z_diff_sq}) as well as $\mathbb{E} z_T^2 \leq K_1$ by (\ref{eq:z_L2}) of Lemma \ref{lem:z_bounds}, requiring Assumption \ref{ass:var}, delivers
\begin{equation*}
   \mathbb{E}|(z-z_T)z_T| \leq \left\{ \mathbb{E}[(z-z_T)^2] \mathbb{E} z_T^2 \right\}^{1/2} \leq \{K_1 K_3 \beta^{2T} \}^{1/2} = \sqrt{K_1 K_3} |\beta|^{T}.
\end{equation*}
With both terms in (\ref{eq:twoterms}) bounded, as $\beta^{2T} \leq |\beta|^T$ for all integers $T$, we have the bound $ \mathbb{E}(z - z_T)^2  \leq  K_4 |\beta|^{T}$. \\ 
\leavevmode
\\
Once again, to conclude the lemma, select $K := \max \{K_i\}_{i=1}^4._{\qedsymbol{}}$

\begin{lemma}\label{lem:finite_l2}
    Under Assumption \ref{ass:var}, if there exists a distribution $y$ such that $y_T \overset{D}{\rightarrow}y$, then $y$ has a finite second moment.
\end{lemma}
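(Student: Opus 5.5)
The plan is to combine the uniform second-moment bound (\ref{eq:y_L2}) of Lemma \ref{lem:z_T_bounds} with the standard fact that weak convergence plus uniformly bounded moments controls the moments of the limit. Under Assumption \ref{ass:var}, (\ref{eq:y_L2}) gives $\sup_T \mathbb{E}[y_T^2] \leq K$ for a finite constant $K$. We must show $\mathbb{E}[y^2] \leq K$, where $y$ is a random variable with the limiting law; this random variable exists, if necessary on another probability space.

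The first step is a truncation argument, which avoids Skorokhod. For each fixed $c>0$, set $g_c(u) := \min\{u^2, c\}$. This function is bounded and continuous on $\mathbb{R}$. Since $y_T \overset{D}{\rightarrow} y$, the definition of convergence in distribution (or the Portmanteau theorem) gives $\mathbb{E}[g_c(y_T)] \to \mathbb{E}[g_c(y)]$. Because $g_c(u) \leq u^2$, every term satisfies $\mathbb{E}[g_c(y_T)] \leq \mathbb{E}[y_T^2] \leq K$, so the limit obeys $\mathbb{E}[g_c(y)] \leq K$ for every $c$.

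The second step lets $c \uparrow \infty$. The functions $g_c(y)$ increase monotonically to $y^2$ pointwise, so the monotone convergence theorem gives $\mathbb{E}[y^2] = \lim_{c\to\infty} \mathbb{E}[g_c(y)] \leq K < \infty$, which is the claim. As an equivalent route, one could apply the Skorokhod representation theorem to obtain $\tilde y_T \overset{a.s.}{\rightarrow} \tilde y$ with the same marginals, and then use Fatou's lemma exactly as in the proof of (\ref{eq:zsq_bound}) in Lemma \ref{lem:z_bounds}.

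There is no real obstacle here. The only delicate point is that $u \mapsto u^2$ is unbounded, so it cannot be plugged directly into the definition of convergence in distribution; the truncation through $g_c$, or alternatively Fatou via Skorokhod, handles exactly this. All the substantive work, the uniform bound on $\mathbb{E}[y_T^2]$ under correlated and non-centered innovations, was already done in (\ref{eq:y_L2}).
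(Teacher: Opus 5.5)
Your proof is correct and takes the same approach as the paper: both combine the uniform bound (\ref{eq:y_L2}) with the inequality $\int x^2\,d\mu(x) \leq \liminf_T \mathbb{E}[y_T^2]$. The paper simply cites the Portmanteau theorem for that inequality, whereas your truncation by $\min\{u^2,c\}$ followed by monotone convergence proves it explicitly, making clear how the unbounded test function $u^2$ is handled.
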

\noindent \textbf{Proof:} Recall that by (\ref{eq:y_L2}) of Lemma \ref{lem:z_T_bounds}, there exists a finite $K$, such that $\sup_T\mathbb{E}[y_T^2]\leq K$. Let $\mu$ be the limiting law of $y_T$. It then follows by the Portmanteau Theorem that $\int x^2 d\mu (x) \leq \liminf_{T} \mathbb{E}[y_T^2] \leq \sup_{T} \mathbb{E}[y_T^2] \leq K._{\qedsymbol{}}$

\subsection{Proofs of Section \ref{sec:found_results}} \label{sec:app2}
This segment contains proofs of Section \ref{sec:econometric_theory}. They are presented sequentially, with auxiliary lemmas introduced and proven as needed.

\subsubsection{Proof of Lemma \ref{lem:measurability}}
Recall the $A = \{ \omega: \lim_T z_T \text{ exists in }\mathbb{R} \}$ from the definition of $z$ in (\ref{eq:zdef}). In Lemma \ref{lem:cauchy}, requiring Assumption \ref{ass:var}, we demonstrated that $z_T$ is a \emph{Cauchy sequence} in $\mathbb{R}$ with probability 1, and so that $\mathbb{P}(A)=1$. By the definition of $z$, it therefore follows that for all $\omega \in A$, $z_T(\omega) \rightarrow z(\omega)$ as $T \rightarrow \infty$. This implies $z_T \overset{a.s.}{\rightarrow} z$. Square integrability follows directly from the bound (\ref{eq:zsq_bound}) of Lemma \ref{lem:z_bounds}, again under Assumption \ref{ass:var}.

\subsubsection{Proof of Lemma \ref{lem:B_T}} \label{AP:B_T}
The convergence of the denominator, stated in (\ref{eq:den_conv}) follows from Theorem 2.1 of \cite{And1959}, whilst convergence of the numerator in (\ref{eq:num_conv}) follows  from Theorem 2.2 of \cite{And1959}. The contribution of this paper lies in the new moment bounds provided in Lemma \ref{lem:z_T_bounds} and Lemma \ref{lem:z_bounds}, which allows one to retrace the steps in \cite{And1959} without requiring uncorrelatedness or centering around zero. For completeness, those proofs are provided below.\\
\leavevmode \\
\noindent \emph{Proof of $\beta^{2(T-2)} B_T \overset{a.s.}{\rightarrow} z^2/{1-\beta^2}$:} Adding and subtracting $ z_T^2(1-\beta^{2T})/(1-\beta^2)$ yields
\begin{equation*}
    S_T^{(B)} := \beta^{2(T-2)} B_T - \tfrac{1}{1-\beta^2} z^2  =  \beta^{2(T-2)} B_T - \tfrac{1-\beta^{2T}}{1-\beta^2} z_T^2 + \tfrac{1-\beta^{2T}}{1-\beta^2} z_T^2 - \tfrac{1}{1-\beta^2} z^2 .
\end{equation*}
Recall that $B_T = \sum_{t=1}^T x_{t-1}^2$ and $z_T = \beta^{T-2} x_{T-1}$, so $\beta^{2(T-2)}B_T =  \sum_{s=0}^{T-1} \beta^{2s} z_{T-s}^2$. Also note the identity $z_T({1-\beta^{2T}})/({1-\beta^2}) = \sum_{s=0}^{T-1} \beta^{2s} z_{T}^2$. This allows the previous expression to be re-written as
\begin{equation*}
    S_T^{(B)} = \textstyle\sum\nolimits_{s=1}^{T-1} \beta^{2s} ( z_{T-s}^2 -z_T^2) + \tfrac{1-\beta^{2T}}{1-\beta^2} z_T^2 - \tfrac{1}{1-\beta^2} z^2.
\end{equation*}
This identity combined with the triangle inequality yields yields the bound 
\begin{equation*} \textstyle
     \mathbb{E} | S_T^{(B)} | \leq \mathbb{E} \left| \sum_{s=1}^{T-1} \beta^{2s} (z_{T-s}^2 - z_T^2) \right| + \frac{1}{1-\beta^2} \mathbb{E}|z_T^2 - z^2| +  \frac{\beta^{2T}}{1-\beta^2} {\mathbb{E}| z_T^2 |}.
\end{equation*}
By the third binomial equation, $(z_{T-s}^2 - z_T^2) = (z_{T-s} - z_T)(z_{T-s} + z)$. A combination of this with Cauchy-Schwarz inequality yields
\begin{equation*}
    \textstyle \mathbb{E} | S_T^{(B)} | \leq \sum_{s=1}^{T-1} \beta^{2s} \Big\{ \mathbb{E}[(z_{T-s} - z_T)^2] \mathbb{E}[(z_{T-s} + z_T)^2] \Big\}^{1/2} + \tfrac{1}{1-\beta^2} \mathbb{E}|z_T^2 - z^2| +  \tfrac{\beta^{2T}}{1-\beta^2} {\mathbb{E}| z_T^2 |}.
\end{equation*}
We now substitute in the new moment bounds. Under Assumption \ref{ass:var}, by Lemma \ref{lem:z_T_bounds} and Lemma \ref{lem:z_bounds}, there exists a constant $K$, such that $\mathbb{E}|z_T^2 - z^2| \leq K |\beta|^T$, $\mathbb{E}z_T^2 \leq K$, $\mathbb{E}[(z_{T-s} - z_T)^2] \leq K \beta^{2(T-s)}$ and $\mathbb{E}[(z_{T-s} + z_T)^2] \leq K$. It follows that 
\begin{equation*}
    \textstyle \mathbb{E} | S_T^{(B)} | \leq K \left(  \sum_{s=1}^{T-1} \beta^{2s} \left\{ \beta^{2(T-s)} \right\}^{1/2} + \frac{|\beta|^{T}}{1-\beta^2} + \frac{\beta^{2T}}{1-\beta^2} \right) .
\end{equation*}
Bounding the geometric series then yields the bound
\begin{equation}  \label{eq:interim_bound}
    \textstyle \mathbb{E} | S_T^{(B)} | \leq K  \left( \frac{|\beta|^{T}}{1-|\beta|} + \frac{\beta^{2T}+|\beta|^{T}}{1-\beta^2} \right).
\end{equation}
As this bound implies $\sum_{T=1}^\infty \textstyle \mathbb{E} | S_T^{(B)} | < \infty$, by Lemma \ref{lem:reduction_lemma}, $S_T^{(B)} \overset{a.s.}{\rightarrow}0$. \\
\leavevmode \\
\noindent \emph{Proof of $ (\beta^{T-2} A_T - y_T z) \overset{a.s.}{\rightarrow} 0$}: \\
Recall that $A_T = \sum_{t=1}^T \epsilon_t x_{t-1}$, $z_{t} = \beta^{t-2} x_{t-1}$ and $y_T = \sum_{t=1}^T \beta^{T-t} \epsilon_t$. We therefore have that
\begin{equation*}
 S_T^{(A)} := \textstyle \beta^{T-2} A_T - y_T z = \sum_{t=1}^T \beta^{T-t} \epsilon_{t} (z_{t} - z).
\end{equation*}
A bound may be achieved via the triangle and Cauchy-Schwarz inequalities, yielding
\begin{equation*} \textstyle
 \mathbb{E} |S_T^{(A)}|  \leq \sum_{t=1}^T |\beta|^{T-t} \mathbb{E} | \epsilon_{t} (z_t - z) | \leq \sum_{t=1}^T |\beta|^{T-t} \big\{ (\mathbb{E} \epsilon_{t}^2) \mathbb{E}[(z_t - z)^2] \big\}^{1/2}.
\end{equation*}
Under Assumption \ref{ass:var}, $\mathbb{E} \epsilon_t^2 \leq M^2$ and by Lemma \ref{lem:z_bounds}, also requiring Assumption \ref{ass:var}, there exists a uniform bound  $\mathbb{E}(z_{t} - z)^2 \leq K\beta^{2t}$. These can be used to derive the bound
\begin{equation} \label{eq:noice_bound} \textstyle
 \mathbb{E}|S_T^{(A)}|  \leq \sum_{t=1}^T |\beta|^{T-t} \{M^2 K \beta^{2t}\}^{1/2} = T \beta^{T} \{M^2 {K} \}^{1/2}.
\end{equation}
By pp. 71-72 of \cite{pri03}, $\sum_{T=0}^\infty T |\beta|^T$ is a convergent power series. Therefore, $\sum_{T=1}^\infty \mathbb{E}|S_T^{(A)}| < \infty$ and by Lemma \ref{lem:reduction_lemma}, $S_T^{(A)} \overset{a.s.}{\rightarrow} 0._{\qedsymbol{}}$

\begin{lemma} \label{lem:halfway}
    Under Assumption \ref{ass:var}, denoting $\mathsf{x}_{2,T} := (1-\beta^2)\beta^{2(T-2)}B_T$ we have:
    \begin{equation*}
        |y_T| \left|\sqrt{\mathsf{x}_{2,T}}-|z| \right| \overset{a.s.}{\rightarrow} 0
    \end{equation*}
\end{lemma}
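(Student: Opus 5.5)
The plan is to reduce the statement to a moment bound that is summable in $T$, and then apply the Reduction Lemma \ref{lem:reduction_lemma}. The key elementary inequality is that for $a,b \ge 0$,
\begin{equation*}
|\sqrt{a}-\sqrt{b}| \le \sqrt{|a-b|}.
\end{equation*}
Applying it with $a=\mathsf{x}_{2,T}$ and $b=z^2$ gives
\begin{equation*}
|y_T|\,\big|\sqrt{\mathsf{x}_{2,T}}-|z|\big| \le |y_T|\,|\mathsf{x}_{2,T}-z^2|^{1/2}.
\end{equation*}
It therefore suffices to show $\mathbb{E}\big[|y_T|\,|\mathsf{x}_{2,T}-z^2|^{1/2}\big]$ is summable in $T$.

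Next, write $\mathsf{x}_{2,T}-z^2=(1-\beta^2)S_T^{(B)}$, where $S_T^{(B)}$ is the quantity from the proof of Lemma \ref{lem:B_T}. By Cauchy--Schwarz,
\begin{equation*}
\mathbb{E}\big[|y_T|\,|S_T^{(B)}|^{1/2}\big]\le \{\mathbb{E}y_T^2\}^{1/2}\{\mathbb{E}|S_T^{(B)}|\}^{1/2}.
\end{equation*}
The first factor is bounded uniformly in $T$ by (\ref{eq:y_L2}) of Lemma \ref{lem:z_T_bounds}. For the second factor, (\ref{eq:interim_bound}) gives $\mathbb{E}|S_T^{(B)}|\le C|\beta|^T$ for a constant $C$ depending only on $K$ and $\beta$. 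The product is thus bounded by $C'|\beta|^{T/2}$, which is a geometric sequence because $|\beta|<1$, and hence summable. Lemma \ref{lem:reduction_lemma} then yields almost sure convergence to zero.

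The point that needs care is the square root. Bounding $|y_T|\,|\mathsf{x}_{2,T}-z^2|$ directly would not help, since the statement involves $\sqrt{\mathsf{x}_{2,T}}-|z|$. We also cannot divide by $\sqrt{\mathsf{x}_{2,T}}+|z|$ to control the difference, because Assumption \ref{ass:z_0} is not assumed here. The inequality $|\sqrt a-\sqrt b|\le\sqrt{|a-b|}$ avoids this issue: it needs no lower bound on $z$, only non-negativity, and $\mathsf{x}_{2,T}\ge0$ and $z^2\ge 0$ hold trivially. The inequality itself follows by assuming $a\ge b$ and noting that $(\sqrt a-\sqrt b)^2\le(\sqrt a-\sqrt b)(\sqrt a+\sqrt b)=a-b$. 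The cost of taking the square root is only a halving of the geometric rate, from $|\beta|^T$ to $|\beta|^{T/2}$, which remains summable.
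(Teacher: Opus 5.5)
Your proposal is correct and follows essentially the same route as the paper: the paper uses the equivalent inequality $(\sqrt{a}-\sqrt{b})^2 \le |a-b|$ together with (\ref{eq:interim_bound}) to get $\mathbb{E}[(\sqrt{\mathsf{x}_{2,T}}-|z|)^2] \le C|\beta|^T$. It then applies Cauchy--Schwarz with $\mathbb{E}y_T^2 \le K$ to obtain a summable bound of order $|\beta|^{T/2}$, and concludes via Lemma \ref{lem:reduction_lemma}. The only difference is that you apply the square-root inequality pointwise before Cauchy--Schwarz rather than inside the second moment afterwards, which amounts to the same chain of inequalities.
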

\noindent \textbf{Proof.} We begin by remarking that for any $a,b \geq 0$, one has
\begin{equation*}
    \Big(\sqrt{a} - \sqrt{b} \Big)^2 = a + b - 2 \sqrt{ab} \leq a + b - 2 \min\{ a,b\} = |a-b|.
\end{equation*}
Also recall that by the bound (\ref{eq:interim_bound}), derived under Assumption \ref{ass:var}, there exists a finite $C$, such that $\mathbb{E} \left| \mathsf{x}_{2,T} - z^2 \right| \leq C |\beta|^T$. A combinination the previous inequality with this $L^1$ bound yields
\begin{equation} \label{eq:noice_bound_again}
    \mathbb{E} \left[(\sqrt{\mathsf{x}_{2,T}} - |z| )^2 \right] \leq \mathbb{E} \left| \mathsf{x}_{2,T} - z^2 \right| \leq C |\beta|^T.
\end{equation}
By an application of the Cauchy-Schwarz inequality, then applying the bounds (\ref{eq:noice_bound_again}) and $\mathbb{E} y_T^2 \leq K$ provided in (\ref{eq:y_L2}) of Lemma \ref{lem:z_bounds}, both of which require Assumption \ref{ass:var}, one may obtain the bound
\begin{equation*}
    \mathbb{E}\left[ |y_T| \left|\sqrt{\mathsf{x}_{2,T}}-|z| \right| \right] \leq \left\{ \mathbb{E}y_T^2 \mathbb{E} \left[  (\sqrt{\mathsf{x}_{2,T}}-|z|)^2 \right] \right\}^{1/2}  \leq \left\{C K \right\}^{1/2} \beta^{\frac{T}{2}}.
\end{equation*}
This bound is summable. By Lemma \ref{lem:reduction_lemma}, conclude $|y_T| |\sqrt{\mathsf{x}_{2,T}}-|z| | \overset{a.s.}{\rightarrow} 0._{\qedsymbol{}}$

\begin{lemma} \label{lem:num_conv_zero}
    Let $f_T$ be a sequence converging to $0$ as $T \rightarrow \infty$. Then, under Assumptions \ref{ass:var} and \ref{ass:z_0},
    \begin{equation*}
        f_T \beta^{T-2}A_T \overset{\mathbb{P}}{\rightarrow} 0
    \end{equation*}
    Additionally, if $\sum_{T=1}^\infty |f_T| < \infty$, then the convergence holds almost surely.
\end{lemma}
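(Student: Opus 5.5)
The plan is to split $\beta^{T-2}A_T$ with the numerator result of Lemma \ref{lem:B_T}:
\begin{equation*}
    f_T \beta^{T-2} A_T = f_T\big(\beta^{T-2}A_T - y_T z\big) + f_T\, y_T z .
\end{equation*}
The first term is handled directly. Equation (\ref{eq:noice_bound}) gives $\mathbb{E}|\beta^{T-2}A_T - y_Tz| \leq T|\beta|^T\{M^2K\}^{1/2}$. Since $f_T\to 0$, the sequence $(f_T)$ is bounded, say by $F$. Hence $\mathbb{E}|f_T(\beta^{T-2}A_T-y_Tz)| \leq F\,T|\beta|^T\{M^2K\}^{1/2}$, which is summable. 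Lemma \ref{lem:reduction_lemma} then shows this term converges to zero almost surely, and so also in probability.

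For the second term I will use the moment bounds rather than any limit of $y_T$. By Cauchy--Schwarz, together with (\ref{eq:y_L2}) of Lemma \ref{lem:z_T_bounds} and (\ref{eq:zsq_bound}) of Lemma \ref{lem:z_bounds}, $\mathbb{E}|y_T z| \leq \{\mathbb{E}y_T^2\,\mathbb{E}z^2\}^{1/2} \leq K$, uniformly in $T$. Therefore $\mathbb{E}|f_T y_T z| \leq K|f_T| \to 0$. This is $L^1$ convergence, which implies $f_T y_T z \overset{\mathbb{P}}{\rightarrow} 0$. Adding the two terms gives the convergence in probability.

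For the almost sure statement, assume $\sum_T|f_T|<\infty$. Then $\sum_T \mathbb{E}|f_Ty_Tz| \leq K\sum_T|f_T| < \infty$, and Lemma \ref{lem:reduction_lemma} gives $f_Ty_Tz \overset{a.s.}{\rightarrow} 0$. The first term already converges almost surely, so the sum does too.

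The only point needing care is that $y_T$ is not assumed to converge. This is why I route the argument through the uniform $L^2$ bound on $y_T$ instead of through a limit $y$. Assumption \ref{ass:z_0} is not actually used in this argument; everything rests on Assumption \ref{ass:var}. I therefore do not expect any real obstacle.
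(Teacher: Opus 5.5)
Your proof is correct, but it takes a longer route than the paper. The paper never passes through $y_T z$. It writes $\beta^{T-2}A_T=\sum_{t=1}^T\beta^{T-t}\epsilon_t z_t$ and applies the triangle and Cauchy--Schwarz inequalities with $\mathbb{E}\epsilon_t^2\le M^2$ and $\sup_t\mathbb{E}z_t^2\le K$. This gives the uniform bound $\mathbb{E}|\beta^{T-2}A_T|\le \{M^2K\}^{1/2}/(1-|\beta|)$, hence $\mathbb{E}|f_T\beta^{T-2}A_T|\le C|f_T|$. Both conclusions then follow immediately: convergence in probability via $L^1$ convergence, and almost sure convergence via Lemma \ref{lem:reduction_lemma} when $\sum_T|f_T|<\infty$.

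Your decomposition in effect produces the same uniform $L^1$ bound, namely $\mathbb{E}|\beta^{T-2}A_T|\le T|\beta|^T\{M^2K\}^{1/2}+K$. However, it needs more machinery. It uses the geometric rate $\mathbb{E}(z_t-z)^2\le K\beta^{2t}$ behind (\ref{eq:noice_bound}) and the square-integrability of the limit $z$. The direct argument needs neither; it uses only the pre-limit bound $\sup_t\mathbb{E}z_t^2\le K$.

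What your route buys is interpretation. It shows that $\beta^{T-2}A_T$ is stochastically bounded because it is asymptotically the product $y_Tz$ of two $L^2$-bounded variables, which ties the lemma to Lemma \ref{lem:B_T}. You are also right that Assumption \ref{ass:z_0} plays no role; the paper's proof does not use it either.
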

\noindent \textbf{Proof.} Recalling the identity $ \beta^{T-2}A_T = \sum_{t=1}^{T} \beta^{T-t} \epsilon_{t} z_t$, see that
\begin{equation*} \textstyle
    S_T^{(C)} := f_T \beta^{T-2} A_T = f_T \sum_{t=1}^{T} \beta^{T-t} \epsilon_{t} z_t.
\end{equation*}
Bound this by applying the triangle and Cauchy-Schwarz inequalities, yielding
\begin{equation*}
    \mathbb{E} |S_T^{(C)}| \leq |{f_T}| \sum_{t=1}^{T}  |  \beta|^{T-t}  \mathbb{E} \left| \epsilon_{t}z_{t}  \right| \leq |{f_T}| \sum_{t=1}^{T} |\beta| ^{T-t} \left( \mathbb{E} \epsilon_{t}^2 \mathbb{E} z_{t}^2 \right)^{1/2}.
\end{equation*}
Under Assumption \ref{ass:var}, $\mathbb{E}\epsilon_t^2 \leq M^2$ and the bound $\mathbb{E} z_T^2 \leq K$ of Lemma \ref{lem:z_bounds} applies. Applying the bounds delivers
\begin{equation*}
    \mathbb{E} |S_T^{(C)}| \leq |{f_T}| \left\{M^2K\right\}^{1/2} \sum_{t=1}^{T} |\beta|^{T-t} \leq |f_T| \frac{\left\{M^2K\right\}^{1/2}}{1-|\beta|}.
\end{equation*}
As $f_T$ converges to $0$, $ S_T^{(C)} \overset{L_1}{\rightarrow} 0$ follows, implying $S_T^{(C)} \overset{\mathbb{P}}{\rightarrow} 0$.\\
If moreover, $f_T$ satisfies $\sum_{t=T}^\infty |f_T| < \infty$, the bound becomes absolutely summable. In that case, Lemma \ref{lem:reduction_lemma} applies, yielding $S_T^{(C)} \overset{a.s.}{\rightarrow} 0._{\qedsymbol{}} $

\subsubsection{Proof of Theorem \ref{thm:consistency}} 
Stack the strong convergence results of Lemmas \ref{lem:measurability}, \ref{lem:B_T} and \ref{lem:halfway}, requiring Assumption \ref{ass:var}, in a vector $ \mathbf{x}_T = (\mathsf{x}_{1,T}, \mathsf{x}_{2,T}, \mathsf{x}_{3,T}, \mathsf{x}_{4,T})$, converging to $\mathbf{x} = (\mathsf{x}_1, \mathsf{x}_2, \mathsf{x}_3, \mathsf{x}_{4})$, delivering
\begin{equation*}
    \mathbf{x}_T := \begin{pmatrix}
        z_T \\
        (1-\beta^2)\beta^{2(T-2)} B_T \\ 
          \beta^{T-2} A_T - y_T z  \\
         |y_T| \left|\sqrt{\mathsf{x}_{2,T}}-|z| \right|
    \end{pmatrix}
    \overset{a.s.}{\rightarrow} \begin{pmatrix}
        z \\
        z^2 \\ 
        0 \\
        0
    \end{pmatrix}
    =: \mathbf{x}.
\end{equation*}
Define $\Omega_0:= \{z \neq 0\} \cap \{ \mathsf{x}_{2,T} \rightarrow z^2\}$, a probability 1 event under Assumptions \ref{ass:var}-\ref{ass:z_0}. To avoid division-by-zero issues, the rest of the proof is on $\Omega_0$ for large $T$, ensuring $z_T, \mathsf{x}_{2,T}$ and $z$ are non-zero. \\
\noindent \emph{Proof of (\ref{eq:white_dist_noy}):} Derive the identity
\[
    S_T^{(\rho)} := \frac{\rho^T}{\rho^2-1} (\hat{\rho} - \rho) - \frac{y_T}{z}  = \frac{\beta^{T-2} A_T - y_T z + y_T z}{(1-\beta^2)\beta^{2(T-2)}B_T} - \frac{y_T z}{z^2}.
\]
Recognizing that many of the terms in $S_T^{(\rho)}$ are elements of the vector $\mathbf{x}_T$, take the absolute value, and apply the triangle inequality, delivering 
\[
|S_T^{(\rho)}|=\left| \tfrac{\mathsf{x}_{3,T}+y_T z}{\mathsf{x}_{2,T}} - \tfrac{y_T z}{z^2} \right|
\leq \tfrac{ |\mathsf{x}_{3,T} | }{|\mathsf{x}_{2,T}|} + |y_T| |z| \left|  \tfrac{1}{\mathsf{x}_{2,T}} -  \tfrac{1}{z^2}  \right|.
\]
For $a, b^2 > 0$ one has the identity
\[
|\tfrac{1}{a}-\tfrac{1}{b^2}|= \tfrac{|a-b^2|}{|ab^2|} = \tfrac{\left|\sqrt{a} - |b| \right|(\sqrt{a}+|b|)}{|ab^2|}.
\]
Combine the identity with the previous bound. Subsequently, noting $|ab|=|a||b|$, cancel the $|z|$ in the numerator and denominator. This delivers
\[
S_T^{(\rho)} \leq \tfrac{ |\mathsf{x}_{3,T} | }{|\mathsf{x}_{2,T}|} + \tfrac{|z| |y_T| \left| \sqrt{\mathsf{x}_{2,T}} -|z| \right| (\sqrt{\mathsf{x}_{2,T}} +|z|)}{|z^2 \mathsf{x}_{2,T}|} = \tfrac{ |\mathsf{x}_{3,T} | }{|\mathsf{x}_{2,T}|} + \tfrac{\mathsf{x}_{4,T} (\sqrt{\mathsf{x}_{2,T}} +|z|)}{|z\mathsf{x}_{2,T}|} =: g_1(\mathbf{x}_T, z).
\]
As we are working on $\Omega_0$, the candidate limit $g_1(\mathbf{x},z)$ is continuous. $g_1(\mathbf{x}_T,z) \overset{a.s.}{\rightarrow} g_1(\mathbf{x},z) = 0$ follows by the continuous mapping theorem. \\
\leavevmode \\
\emph{Proof of (\ref{eq:and_est_noy}):} Define the absolute difference
\[
S_T^{(A)} := \left| \frac{A_T}{\sqrt{B_T}} - Sign(z) \sqrt{1-\beta^2} y_T \right| .
\]
On $\Omega_0$,  $Sign(z) = z/|z|$. Use  $Sign(z) = z/|z|$, factor out $\sqrt{1-\beta^2}$, extend by $\beta^{T-2}$, then add and subtract $y_T z$. This yields
\begin{equation*}
    S_T^{(A)} = \sqrt{1-\beta^2} \left| \frac{ \beta^{T-2}A_T - y_T z + y_T z}{\sqrt{(1-\beta^2) \beta^{2(T-2)} B_T}} - \frac{ y_T z }{|z|} \right|.
\end{equation*}
This can be re-written as elements of the vector $\mathbf{x}_T$, after which the triangle inequality is applied. This yields
\begin{equation*}
     S_T^{(A)} = \sqrt{1-\beta^2} \left| \tfrac{\mathsf{x}_{3,T}}{\sqrt{\mathsf{x}_{2,T}}} + \tfrac{y_T z}{\sqrt{\mathsf{x}_{2,T}}} - \tfrac{y_T z}{|z|} \right| \leq \sqrt{1-\beta^2} \left(  \left| \tfrac{\mathsf{x}_{3,T}}{\sqrt{\mathsf{x}_{2,T}}} \right| + |z| |y_T|\left| \tfrac{1}{\sqrt{\mathsf{x}_{2,T}}} - \tfrac{1}{|z|} \right| \right).
\end{equation*}
See that
\begin{equation*}
|z| |y_T|\left| \tfrac{1}{\sqrt{\mathsf{x}_{2,T}}} - \tfrac{1}{|z|}  \right| = |z| |y_T| \left| \tfrac{|z| - \sqrt{\mathsf{x}_{2,T}}}{|z|  \sqrt{\mathsf{x}_{2,T}}} \right| = \tfrac{|z| |y_T| \left|\sqrt{\mathsf{x}_{2,T}} - |z| \right|}{|z|  \sqrt{\mathsf{x}_{2,T}}} =  \tfrac{ \mathsf{x}_{4,T}}{  \sqrt{\mathsf{x}_{2,T}}}.
\end{equation*}
From which it follows that
\begin{equation*}
    S_T^{(A)} \leq \sqrt{1-\beta^2} \left(  \left| \tfrac{\mathsf{x}_{3,T}}{\sqrt{\mathsf{x}_{2,T}}} \right| + \tfrac{ \mathsf{x}_{4,T}}{  \sqrt{\mathsf{x}_{2,T}}} \right) =: g_2(\mathbf{x}_{T},z).
\end{equation*}
On $\Omega_0$, the mapping $(u,v) \mapsto g_2(u,v)$ is continuous at $(\mathbf{x},z)$. Therefore, by the continuous mapping theorem, $g_2(\mathbf{x}_T,z) \overset{a.s.}{\rightarrow} g_2(\mathbf{x},z) = 0$. \\
\leavevmode \\
\emph{Proof of (\ref{eq:OLS_consistency})}: Define $\mathsf{w}_T := f_T \beta^{T-2}A_T$. By Lemma \ref{lem:num_conv_zero}, using Assumptions \ref{ass:var}-\ref{ass:z_0}, then $\mathsf{w}_T \overset{\mathbb{P}}{\rightarrow} 0$ or in the case where $|f_T|$ summable, $\mathsf{w}_T \overset{a.s.}{\rightarrow} 0$.
Derive the identity
\begin{equation*}
    f_T \rho^{T-2} (\hat{\rho} - \rho) = \tfrac{f_T \beta^{T-2}A_T}{\beta^{2(T-2)}B_T} = \tfrac{\mathsf{w}_{T}}{(1-\beta^2)\mathsf{x}_{2,T}} =: g_3(\mathbf{x}_T, \mathsf{w}_T).
\end{equation*}
For fixed $\mathsf{w}_T$, the mapping $v \mapsto g_3(v,\mathsf{w}_T)$ is continuous at $\mathbf{x}$ on $\Omega_0$. Hence, the continuous mapping theorem is applicable and $g(\mathbf{x}_T, \mathsf{w}_T) {\overset{\mathbb{P}}{\rightarrow}} g(\mathbf{x},0)= 0.$ If $\sum_{T=1}^\infty|f_t| < \infty$, then $
g_3(\mathbf{x}_T, \mathsf{w}_T) {\overset{a.s.}{\rightarrow}} g_3(\mathbf{x},0)= 0._{\qedsymbol{}}$

\subsubsection{Proof of Corollary \ref{cor:consistency}}
    Choose $f_T = \rho^{-T+2}$, which vanishes and is absolutely summable. Apply Theorem \ref{thm:consistency}.

\begin{lemma} \label{lem:a.s. results}
    Under Assumption \ref{ass:var}
    \begin{align*}
        (1-\beta^2)\beta^{2(T-2)} B_T - z_T^2 \overset{a.s.}{\rightarrow}& \, 0 \\ 
        \beta^{T-2}A_T - z_T y_T \overset{a.s.}{\rightarrow}& \, 0
    \end{align*}
\end{lemma}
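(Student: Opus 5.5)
The plan is to reduce both statements to results already established: the limits in Lemma \ref{lem:B_T}, the moment bounds in Lemmas \ref{lem:z_T_bounds}--\ref{lem:z_bounds}, and the Reduction Lemma \ref{lem:reduction_lemma}. The idea is to swap the limit $z$ for $z_T$, paying a price in the geometric rate.

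\emph{First statement.} Write
\[
(1-\beta^2)\beta^{2(T-2)}B_T - z_T^2 = \big[(1-\beta^2)\beta^{2(T-2)}B_T - z^2\big] + \big[z^2 - z_T^2\big].
\]
The first bracket is $(1-\beta^2)S_T^{(B)}$. By (\ref{eq:interim_bound}) it satisfies $\mathbb{E}|S_T^{(B)}|\le C|\beta|^T$, and in any case it tends to zero almost surely by (\ref{eq:den_conv}). The second bracket has $\mathbb{E}|z^2-z_T^2|\le K|\beta|^T$ by (\ref{eq:z_squared_diff}). This bound is summable, so Lemma \ref{lem:reduction_lemma} gives $z^2-z_T^2\overset{a.s.}{\rightarrow}0$. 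Alternatively, this follows directly from $z_T\overset{a.s.}{\rightarrow}z$ (Lemma \ref{lem:measurability}) and continuity of squaring. Adding the two brackets gives the claim.

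\emph{Second statement.} Write
\[
\beta^{T-2}A_T - z_Ty_T = \big[\beta^{T-2}A_T - y_Tz\big] + y_T(z-z_T).
\]
The first bracket tends to zero almost surely by (\ref{eq:num_conv}). The second term is the only genuinely new piece, since $y_T$ need not converge. I would apply Cauchy--Schwarz with (\ref{eq:y_L2}) and (\ref{eq:z_diff_sq}):
\[
\mathbb{E}|y_T(z-z_T)|\le \{\mathbb{E}y_T^2\,\mathbb{E}(z-z_T)^2\}^{1/2}\le K|\beta|^T .
\]
This is summable in $T$, so Lemma \ref{lem:reduction_lemma} yields $y_T(z-z_T)\overset{a.s.}{\rightarrow}0$. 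Summing the two almost-sure null sequences finishes the proof.

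\emph{Main obstacle.} The only delicate point is the second step. One cannot argue via ``bounded times vanishing'' pathwise, because $y_T$ need not converge or even be almost surely bounded. The fix is the $L^1$ bound above, where the uniform $L^2$ control of $y_T$ from Lemma \ref{lem:z_T_bounds} combines with the geometric $L^2$ rate of $z_T-z$; this is why I would use the Reduction Lemma rather than the continuous mapping theorem there. Note that Assumption \ref{ass:z_0} is not needed, since no division occurs.
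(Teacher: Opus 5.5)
Your proposal is correct and is essentially the paper's own proof. It uses the same two triangle-inequality splits: the pieces $(1-\beta^2)\beta^{2(T-2)}B_T - z^2$ and $\beta^{T-2}A_T - y_T z$ are handled by Lemma \ref{lem:B_T}, and the pieces $z^2 - z_T^2$ and $y_T(z - z_T)$ by the summable $L^1$ bounds (from (\ref{eq:z_squared_diff}), and from Cauchy--Schwarz with (\ref{eq:y_L2}) and (\ref{eq:z_diff_sq})) fed into Lemma \ref{lem:reduction_lemma}.
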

\noindent \textbf{Proof.} Under Assumption \ref{ass:var}, Lemma \ref{lem:z_T_bounds} and Lemma \ref{lem:z_bounds} apply. Therefore, there exists a $K$ such that
\begin{equation*}
    \mathbb{E}|y_T| \leq K,\quad \mathbb{E}[(z_T - z)^2] \leq K \beta^{2T}.
\end{equation*}
Utilizing Cauchy-Schwarz and the aforementioned bounds, one has that
\begin{equation*}
    \mathbb{E} |z_T y_T - z y_T| = \mathbb{E} (|y_T| |z_T - z|) \leq  \left\{ \mathbb{E} y_T^2 \mathbb{E} (z_T - z)^2  \right\}^{1/2} \leq \left\{ K^2 \beta^{2T}  \right\}^{1/2} = K |\beta|^T.
\end{equation*}
This bound is summable. Furthermore, under Assumption \ref{ass:var}, by Lemma \ref{lem:z_bounds}, there exists a finite $K$, such that $ \quad \mathbb{E}|z_T^2 - z^2 | \leq K|\beta|^{T}$, which is also summable. It follows by Lemma \ref{lem:reduction_lemma} that
\begin{equation} \label{eq:as_zero}
     |z_T^2 - z^2| \overset{a.s.}{\rightarrow}0, \quad |z_T y_T - z y_T| \overset{a.s.}{\rightarrow} 0.
\end{equation}
Via the triangle inequality one can derive the bounds
\begin{align*}
    |(1-\beta^2)\beta^{2(T-2)} B_T - z_T^2| &\leq  |(1-\beta^2)\beta^{2(T-2)} B_T - z^2| + | z_T^2 - z^2| \\
    |\beta^{T-2}A_T - y_Tz_T| &\leq |\beta^{T-2}A_T - y_Tz | + |y_Tz - y_Tz_T|.
\end{align*}
The first summand in each bound converges to zero by Theorem \ref{thm:consistency}, requiring Assumption \ref{ass:var}, whilst the second summand in each bound converges by (\ref{eq:as_zero})$._{\qedsymbol_{}}$

\subsubsection{Proof of Lemma \ref{lem:numden_joint}} \label{app:numden_joint}
Under Assumption \ref{ass:y_exists}, $(y_T, z_T) \overset{D}{\rightarrow} (y,z)$. Additionally, under Assumption \ref{ass:var}, Lemma \ref{lem:a.s. results} applies. Therefore, by Theorem 4.4 of \cite{Bil68}, the two results can be combined as
\begin{equation*}
\mathbf{y}_T :=
    \begin{pmatrix}
       y_T \\ z_T \\  ({1-\beta^2}) \beta^{2(T-2)} B_T - {z_T^2} \\ \beta^{T-2} A_T - z_T y_T
    \end{pmatrix}
    \overset{D}{\rightarrow} 
        \begin{pmatrix}
        y \\ z \\ 0 \\ 0
    \end{pmatrix} := \mathbf{y}.
\end{equation*}
The mapping $g: \mathbb{R}^4 \rightarrow \mathbb{R}^2$, $
(x_1, x_2, x_3, x_4) \mapsto g(\mathbf{x}) = (x_4 + x_2 x_1, x_3 + x_2^2) $ is continuous. An application of the continuous mapping theorem yields
\begin{equation*}
    (
        \beta^{T-2} A_T,  (1-\beta^2) \beta^{2(T-2)} B_T
    )
    = g(\mathbf{y}_T) \overset{D}{\rightarrow} g(\mathbf{y})
     = (
         zy , z^2
    )._{\qedsymbol{}}
\end{equation*}

\subsubsection{Proof of Theorem \ref{thm:and_dist}}
Under Assumptions \ref{ass:var}, \ref{ass:y_exists},  Lemma \ref{lem:numden_joint} is applicable and the terms $A_T, B_T$ converge jointly in distribution. We combine this joint convergence with the continuous mapping theorem. Once again, define $\Omega_0:= \{z \neq 0\} \cap \{ B_T \rightarrow (1-\beta^2)^{-1}z^2\}$, a probability 1 event under Assumptions \ref{ass:var}-\ref{ass:z_0}. We work on $\Omega_0$ with large $T$ to resolve division-by-zero issues.\\
\textbf{Proof of (\ref{eq:white_dist}):} Define $g (x_1, x_2) :=  x_1/x_2$. By the continuous mapping theorem
\begin{equation*}
    \frac{\rho^T A_T}{(\rho^2-1) B_T} = \frac{\beta^{T-2}A_T}{(1-\beta^2)\beta^{2(T-2)}B_T} = g \begin{pmatrix} \beta^{T-2} A_T \\ (1-\beta^2) \beta^{2(T-2)} B_T \end{pmatrix} \overset{D}{\rightarrow} g \begin{pmatrix}
        yz \\ z^2
    \end{pmatrix} = \frac{y}{z} .
\end{equation*}
\textbf{Proof of (\ref{eq:and_est})}: Define $h(x_1, x_2) := x_1/\sqrt{x_2}$. By the continuous mapping theorem
    \begin{align*}
        \frac{A_T}{\sqrt{B_T}} = \sqrt{1-\beta^2} h \begin{pmatrix}
            \beta^{T-2} A_T \\ (1-\beta^2) \beta^{2(T-2)} B_T
        \end{pmatrix} \overset{D}{\rightarrow}  \sqrt{1-\beta^2} h \begin{pmatrix}
            yz \\ z^2
        \end{pmatrix} .
    \end{align*}
On $\Omega_0$, ${z}/{|z|} = Sign(z)$. Hence $ h (yz,z^2) = y(z/{|z|}) \overset{a.s.}{=} Sign(z) \, y._{\qedsymbol{}}$

\subsubsection{Proof of Corollary \ref{cor:symm}}
The property $ (\epsilon_1, \epsilon_2, ..., \epsilon_T) \overset{d}{=} - (\epsilon_1, \epsilon_2, ..., \epsilon_T) $  immediately implies $y_T \overset{D}{=} -y_T$, as $y_T$ is an odd linear function of $(\epsilon_1, \epsilon_2, ..., \epsilon_T)$. \\ \leavevmode \\
Next, $y_T \overset{D}{\rightarrow} y$, and by the continuous mapping theorem $-y_T \overset{D}{\rightarrow} -y$. Since $y_T \overset{D}{=} -y_T$ for all $T$, then $y \overset{D}{=} -y$. \\ 
\leavevmode \\
We now demonstrate that $Sign(z)y \overset{D}{=} y$. Under Assumption \ref{ass:z_0}, $\{ z=0 \}$ is a null set. Therefore, for any $ A \in \mathcal{B}(\mathbb{R})$, by the law of total probability
\begin{align*}
    \mathbb{P} \left(  Sign(z) y \in A \right) 
    & = \mathbb{P} \left(  y \in A | \{ z > 0 \} \right) \mathbb{P} (\{ z > 0 \}) \\ & \hspace{12pt} + \mathbb{P} \left( - \, y \in A | \{z < 0 \} \right) \mathbb{P} (\{z < 0 \} ).
\end{align*}
By independence of $y$ and $z$, it follows that
\begin{equation*}
    \mathbb{P} \left(  Sign(z) y \in A \right) 
    = \mathbb{P} \left(  y \in A \right) \mathbb{P} (\{ z > 0 \}) + \mathbb{P} \left( - \, y \in A \right) \mathbb{P} (\{z < 0 \} ).
\end{equation*}
By symmetry of $y$ and $\mathbb{P}(z \notin \{0\})=1$, requiring Assumption \ref{ass:z_0}, one has that
\begin{equation*}
    \mathbb{P} \left(  Sign(z) y \in A \right) = \mathbb{P} \left(  y \in A \right).
\end{equation*}
Combine $Sign(z)y \overset{D}{=} y$ with  Theorem \ref{thm:and_dist}, requiring  Assumptions \ref{ass:var}-\ref{ass:y_exists}, to see that:
\begin{equation*}
    \frac{A_T}{\sqrt{B_T}} \overset{D}{\rightarrow} \sqrt{1-\beta} \, Sign(z) y \overset{D}{=} \sqrt{1-\beta} \, y._{\qedsymbol{}}
\end{equation*}

\subsection{Proofs of Section \ref{sec:yz}}
Below we prove the properties of $z,y$ presented in the second half of the paper.

\subsubsection{Proof of Lemma \ref{lem:SV}}
    Define $\mathcal{G} = \{ \sigma_1^2, \sigma_2^2, ...\}$ and recall the lemma assumes $x_0 \overset{a.s.}{=} 0$. Next, as the innovations satisfy $\sup_{t>0} \mathbb{E} \epsilon_t<\infty$, by Lemma \ref{lem:cauchy}, one has $z \overset{a.s.}{=} \sum_{t=1}^{\infty} \beta^{t-1} \epsilon_t$. Furthermore, as $\epsilon_t = \zeta_t \sigma_t^2$ with $\zeta_t \overset{iid}{\sim}N(0,1)$ with $\sigma(\sigma_t^2:t>0)$ and $\sigma(\zeta_t: t>0)$ independent, one has
    \begin{equation*}
        z|\mathcal{G} \sim N \Big(0, \sum_{t=1}^\infty \beta^{2(t-1)} \sigma_t^2 \Big).
    \end{equation*}
    Where the variance of the Gaussian is finite $a.s.$, as  by the monotone convergence theorem, one has
    \begin{equation*} 
        \mathbb{E} \Big[ \sum_{t=1}^\infty \beta^{2(t-1)} \sigma_t^2 \Big] = \sum_{t=1}^\infty \beta^{2(t-1)} \mathbb{E} [\sigma_t^2] \leq \frac{M^2}{1-\beta^2}.
    \end{equation*}
    The variance is also nonzero, as for any fixed $t$, $ \sigma_t^2>0$ $a.s.$ was assumed. By the law of iterated expectations, we then have
    \begin{equation*}
        \mathbb{P}(z \in \{0\}) = \mathbb{E} [\mathbb{P}(z \in \{0\}|\mathcal{G})] = 0._{\qedsymbol{}}
    \end{equation*}

\subsubsection{Proof of Lemma \ref{lem:ARMA_znonzero}}
Assumption \ref{ass:arma} states that the innovations are a stable Gaussian ARMA process. Therefore, by \cite[Theorem 3.1.1.,]{BroDav91}, requiring $\varphi(\mathsf{z}) \neq 0$ for $|z| \leq 0$, each $\epsilon_t$ admits an MA($\infty$) representation of the form
\begin{equation*}
    \epsilon_t = \sum_{j=0}^{\infty} \psi_j \zeta_{t-j}, \quad \text{ where} \quad \psi(\mathsf{z})= \sum_{j=0}^{\infty} \psi_j \mathsf{z}^j = \frac{\theta(\mathsf{z})}{\varphi(\mathsf{z})}.
\end{equation*}
Recalling that $x_0 \overset{a.s.}{=} 0$, one therefore has
\begin{equation*}
    z \overset{a.s.}{=} \sum_{t=1}^{\infty} \beta^{t-1} \epsilon_t =  \sum_{t=1}^{\infty} \beta^{t-1} \sum_{j=0}^{\infty} \psi_j \zeta_{t-j} = \zeta_0\sum_{h=0}^{\infty} \beta^{h} \psi_h + R.
\end{equation*}
Where in the final equality $\zeta_0\sum_{h=0}^{\infty} \beta^{h} \psi_h$ collects all terms containing $\zeta_0$, whilst $R$ collects all terms containing $(..., \zeta_{-1}, \zeta_1, ...)$. We wish to show that $\zeta_0$ exactly cancels all other terms with probability zero. Next, note that the leading term of $\zeta_0$ takes on the form $\sum_{h=0}^{\infty} \beta^{h} \psi_h = \psi(\beta) =  \frac{\theta(\mathsf{\beta})}{\varphi(\beta)}$. This term cannot be zero, as the polynomial in the numerator $\theta(\mathsf{z})$ has no roots for $|z|<1$. Furthermore, as $\varphi(\mathsf{z})\neq 0$ for $|z|<1$, there are no division-by-zero issues. For brevity, let $c=\sum_{h=0}^{\infty} \beta^{h} \psi_h$, by the law of iterated expectations, one has
\begin{equation*}
    \mathbb{P}(z\in \{0\}) = \mathbb{P}(\zeta_0 \in \{R/c\}) = \mathbb{E}[\mathbb{P}(\zeta_0 \in \{R/c\} | R)] = 0.
\end{equation*}
The final equality uses that $R$ and $\zeta_0$ are independent and that $\zeta_0$ is non-atomic$._{\qedsymbol{}}$

\subsubsection{Proof of Lemma \ref{lem:y_dist_stat}}
If we can demonstrate existence of $y$, a second moment of $y$ follows by  Lemma \ref{lem:finite_l2}, requiring Assumption \ref{ass:var} and existence of $y$. The rest of the proof therefore focuses on proving the existence of $y$ under stationarity and Assumption \ref{ass:var}.\\ 
\leavevmode \\
By Proposition 6.5 of \cite{Brei68}, for any single-ended stationary process $(\epsilon_t)_{t \in \mathbb{N}}$, there exists a double-ended stationary process $(\tilde{\epsilon}_t)_{t \in \mathbb{Z}}$ such that $(\epsilon_t)_{t \in \mathbb{N}} \overset{D}{=}  (\tilde{\epsilon}_t)_{t \in \mathbb{N}}$. Therefore, by stationarity of $(\tilde{\epsilon}_t)_{t \in \mathbb{Z}}$ one has that, for any $T$
\begin{equation*}
    (\epsilon_1, \epsilon_2, ..., \epsilon_T) \overset{D}{=}
    (\tilde{\epsilon}_1, \tilde{\epsilon}_2, ..., \tilde{\epsilon}_T) \overset{D}{=} (\tilde{\epsilon}_{-(T-1)}, \tilde{\epsilon}_{-(T-2)}, ..., \tilde{\epsilon}_0).
\end{equation*}
The variable $y_T = \sum_{t=1}^T \beta^{T-t} \epsilon_t$ is a linear map of $(\epsilon_t)_{t=1}^T$. It follows that
\begin{equation*}
    y_T = \sum_{t=1}^T \beta^{T-t} \epsilon_t \overset{D}{=} \sum_{t=1}^T \beta^{T-t} \tilde{\epsilon}_t \overset{D}{=} \sum_{t=0}^{T-1} \beta^t \tilde{\epsilon}_{-t} =: \tilde{y}_T.
\end{equation*}
For $k \in \mathbb{N}$, define $\tilde{\bm{\epsilon}}_k :=(..., \tilde{\epsilon}_{k-1}, \tilde{\epsilon}_{k})$. Furthermore, define the functional $\varphi$ via
\begin{equation*}
    \varphi \, \circ \, \tilde{\bm{\epsilon}}_k  := \sum_{t=0}^\infty \beta^t \tilde{\epsilon}_{k-t}.
\end{equation*}
\emph{For the rest of the proof, almost sure convergence $\overset{a.s.}{\rightarrow}$ and expectations $\mathbb{E}[\cdot]$ are taken over the measure induced by $(\tilde{\epsilon}_t)_{t \in \mathbb{Z}}.$} \\ \leavevmode
\\
Picking the candidate probability limit $y := \varphi \, \circ \, \tilde{\bm{\epsilon}}_0$, by use of the triangle inequality, Monotone Convergence Theorem and the Cauchy-Schwarz bound $\mathbb{E}|\tilde{\epsilon}_T| \leq M$ under Assumption \ref{ass:var}, one can derive
\begin{equation*}
    \mathbb{E} |\tilde{y}_T - y| = \mathbb{E} \bigg| \sum_{t=T}^\infty \beta^t \tilde{\epsilon}_{-t} \bigg| {\leq} \mathbb{E} \sum_{t=T}^\infty |\beta|^t |\tilde{\epsilon}_{-t}| =  \sum_{t=T}^\infty |\beta|^t \mathbb{E} |\tilde{\epsilon}_{-t}| = \frac{M |\beta|^{T}}{1-|\beta|}.
\end{equation*}
This bound is summable, and so by Lemma \ref{lem:reduction_lemma}, $\tilde{y}_T \overset{a.s.}{\rightarrow} y$. It follows that $y_T \overset{D}{=} \tilde{y}_T \overset{a.s.}{\rightarrow} y._{\qedsymbol{}}$
\begin{definition}
     Drawing inspiration from a truncation argument in \cite{And1959}, let $\lfloor \cdot \rfloor$ denote the floor function and define
    \begin{align*}
        z_T^* &:= \rho x_0 + \sum_{t=1}^{\lfloor\frac{1}{3}T\rfloor} \beta^{t-1} \epsilon_t, \quad 
        \tilde{z}_T := \sum_{t=\lfloor\frac{1}{3}T\rfloor+1}^{T} \beta^{t-1} \epsilon_t \\
        y_T^* &:= \sum_{t=\lfloor\frac{2}{3}T\rfloor+1}^{T} \beta^{T-t} \epsilon_t, \hspace{22pt} \tilde{y}_T := \sum_{t=1}^{\lfloor \frac{2}{3}T\rfloor} \beta^{T-t} \epsilon_t.
    \end{align*}
\end{definition}
   
\begin{lemma} \label{lem:as_0}
    Under Assumption \ref{ass:var}, $z_T^*\overset{a.s.}{\rightarrow}z$, $\tilde{z}_T \overset{a.s.}{\rightarrow} 0$ and $\tilde{y}_T \overset{a.s.}{\rightarrow} 0$. If, additionally, $y_T \overset{D}{\rightarrow} y$, then $y_T^* \overset{D}{\rightarrow} y$.
\end{lemma}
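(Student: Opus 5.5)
The plan is to reduce every claim to the Reduction Lemma \ref{lem:reduction_lemma}, through summable $L^1$ bounds built from Assumption \ref{ass:var} alone. For the last claim we then use Slutsky's theorem.

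\emph{Step 1 ($\tilde{z}_T$ and $z_T^*$).} Set $k_T:=\lfloor \tfrac13 T\rfloor$. By the triangle inequality and $\mathbb{E}|\epsilon_t|\le M$,
\[
\mathbb{E}|\tilde{z}_T| \le \sum_{t=k_T+1}^{T}|\beta|^{t-1}\,\mathbb{E}|\epsilon_t| \le \frac{M|\beta|^{k_T}}{1-|\beta|}.
\]
Since $k_T\ge T/3-1$, we have $|\beta|^{k_T}\le |\beta|^{-1}(|\beta|^{1/3})^{T}$, which is summable in $T$. Lemma \ref{lem:reduction_lemma} then gives $\tilde{z}_T\overset{a.s.}{\rightarrow}0$.

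For $z_T^*$, note from (\ref{eq:zt_def}) that $z_T^* = z_{k_T+1}$ whenever $k_T\ge1$. Lemma \ref{lem:measurability} says $z_T\overset{a.s.}{\rightarrow}z$, and $k_T+1\to\infty$, so $z^*_T\overset{a.s.}{\rightarrow}z$ follows by passing to the subsequence pathwise. No new bound is needed. Alternatively, (\ref{eq:z_absl_diff}) gives $\mathbb{E}|z_T^*-z|\le K|\beta|^{k_T+1}$, which is also summable.

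\emph{Step 2 ($\tilde{y}_T$).} Set $m_T:=\lfloor \tfrac23 T\rfloor$. Again by the triangle inequality,
\[
\mathbb{E}|\tilde{y}_T| \le M\sum_{t=1}^{m_T}|\beta|^{T-t} \le \frac{M|\beta|^{T-m_T}}{1-|\beta|}.
\]
Since $T-m_T\ge T/3$, this bound is summable, and Lemma \ref{lem:reduction_lemma} gives $\tilde{y}_T\overset{a.s.}{\rightarrow}0$.

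\emph{Step 3 ($y_T^*$).} By construction $y_T = y_T^* + \tilde{y}_T$ exactly. We have $\tilde{y}_T\overset{a.s.}{\rightarrow}0$ by Step 2, hence $\tilde{y}_T\overset{\mathbb{P}}{\rightarrow}0$. Given $y_T\overset{D}{\rightarrow}y$, Slutsky's theorem (equivalently Theorem 4.1 of \cite{Bil68}, which treats sequences at vanishing distance in probability) yields $y_T^*=y_T-\tilde{y}_T\overset{D}{\rightarrow}y$.

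There is no real obstacle in this argument. The only point needing care is the floor-function bookkeeping: we must check that $|\beta|^{k_T}$ and $|\beta|^{T-m_T}$ are bounded by constants times $(|\beta|^{1/3})^T$, which makes them geometrically summable. Everything else follows the pattern already used in Lemma \ref{lem:cauchy} and Lemma \ref{lem:y_dist_stat}.
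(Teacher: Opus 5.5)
Your proof is correct and follows the paper's argument essentially step for step. You use $z_T^*=z_{\lfloor T/3\rfloor+1}$ with Lemma \ref{lem:measurability}, the same summable $L^1$ bound $M|\beta|^{T/3}/(1-|\beta|)$ for $\tilde{y}_T$ with Lemma \ref{lem:reduction_lemma}, and a Slutsky-type argument for $y_T^*=y_T-\tilde{y}_T$ (the paper uses Theorem 4.4 of \cite{Bil68} plus continuous mapping). The only deviation is that you bound $\mathbb{E}|\tilde{z}_T|$ directly instead of writing $\tilde{z}_T$ as a difference of a.s.\ convergent sequences; this is equally valid and avoids the paper's harmless index slip ($z_T^*+\tilde{z}_T=z_{T+1}$, not $z_T$).
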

\noindent \textbf{Proof:} $z_T^* \overset{a.s.}{\rightarrow}z$ follows by noting $z_T^* = z_{\lfloor \frac{T}{3}\rfloor +1} \overset{a.s.}{\rightarrow} z$ by similar arguments to Lemma \ref{lem:measurability}, requiring Assumption \ref{ass:var}. $\tilde{z}_T \overset{a.s.}{\rightarrow} 0 $ follows by the continuous mapping theorem as $(z_T, z_T^*) \overset{a.s.}{\rightarrow} (z,z)$ and $\tilde{z}_T = z_T - z_T^*$. \\
\leavevmode \\
To prove $\tilde{y}_T \overset{a.s.}{\rightarrow} 0$, see that via the triangle inequality and Assumption \ref{ass:var}
\begin{equation*} \textstyle
    \mathbb{E}|\tilde{y}_T| = \mathbb{E}| \sum_{t=1}^{\lfloor \frac{2}{3}T  \rfloor} \beta^{T-t} \epsilon_t| \leq M  \sum_{t=1}^{\lfloor \frac{2}{3}T  \rfloor} |\beta|^{T-t} \leq \frac{M}{1-|\beta|} \beta^{T/3}.
\end{equation*}
By Lemma \ref{lem:reduction_lemma}, the moment bound implies $\tilde{y}_T \overset{a.s.}{\rightarrow} 0$. Finally, if $y_T \overset{D}{\rightarrow} y$, then by Theorem 4.4 of \cite{Bil68}, $(y_T,\tilde{y}_T) \overset{D}{\rightarrow} (y,0)$ and $y_T^* = y_T - \tilde{y}_T \overset{D}{\rightarrow} y$ follows by the continuous mapping theorem$._{\qedsymbol{}}$

\subsubsection{Proof of Theorem \ref{lem:independence}}
Define $F_{T\, (y^*, z^*)}$ as the joint CDF of $y_T^*, z_T^*$, and define $F_{T\, (y^*)}, F_{T\,(z^*)}$ as the marginals. Similarly, define $F_y, F_z$ as the marginal CDFs of $y,z$. \\ \leavevmode \\
For any $s,u \in \mathbb{R}$, via the triangle inequality, one has that
\begin{equation} \label{eq:cdfs}
 \left| F_{T\, (y^*, z^*)} - F_{y}F_z \right| \leq \left| F_{T\, (y^*, z^*)} - F_{T \, (y^*)}F_{T \, (z^*)} \right| + \left| F_{T \, (y^*)}F_{T \, (z^*)} - F_{y}F_z \right|,
\end{equation}
where joint CDFs are evaluated at $(s,u)$ and marginals at $s$ and $u$, respectively. \\ \leavevmode
\\
The first term in (\ref{eq:cdfs}) vanishes as $\{ y_T^* \leq s \} \in \sigma(\epsilon_t: t \geq \lfloor \tfrac{2}{3}T \rfloor + 1)$ and $\{ z_T^* \leq u \} \in \sigma(x_0, \epsilon_t: t \leq \lfloor \tfrac{1}{3}T \rfloor)$. As the two $\sigma$-algebras are spaced $\left( \lfloor \tfrac{2}{3}T \rfloor + 1 - \lfloor \tfrac{T}{3} \rfloor \right)$ apart, one can bound the first term by the $\alpha$-mixing coefficient, which vanishes as $T \rightarrow \infty$ under Assumption \ref{ass:alpha_mixing}. The second term converges to zero, as $y_T^* \overset{D}{\rightarrow}y$ and $z_T^* \overset{D}{\rightarrow} z$ under Assumptions \ref{ass:var} and \ref{ass:y_exists} by Lemma \ref{lem:as_0}. This establishes independence of $y,z$.\\ \leavevmode
\\
To conclude the joint convergence result, recall that by Lemma \ref{lem:as_0}, under Assumptions \ref{ass:var} and \ref{ass:y_exists}, one has $(\tilde{y}_T, \tilde{z}_T, y_T^*, z_T^*) \overset{D}{\rightarrow} (0, 0, y, z)$. This can be combined with the earlier joint convergence result of $(y_T^*, z^*_T)$ to establish convergence via the continuous mapping theorem, yielding
\begin{equation*}
    (y_T, z_T) = (y_T^* + \tilde{y}_T, z^*_T + \tilde{z}_T) = (y_T^*, z_T^*) + (\tilde{y}_T, \tilde{z}_T) \overset{D}{\rightarrow} (y,z).
\end{equation*} 
$z$ has a finite second moment by (\ref{eq:zsq_bound}) of Lemma \ref{lem:z_bounds}, requiring Assumption \ref{ass:var}. $y$ has a finite second moment by Lemma \ref{lem:finite_l2}, requiring the limiting distribution $y$ to exist and Assumption \ref{ass:var}$._{\qedsymbol{}}$

\subsection{ARMA Proofs}

\subsubsection{Proof of Lemma \ref{lem:arma}}
For innovations satisfying Assumption \ref{ass:arma}, we have that $x_0 \overset{a.s.}{=} 0$ and $y_T, z_T$ are sums of mean-zero Gaussians. Therefore, both objects have a Gaussian distribution around zero. To characterize the covariance matrix of the joint distribution, we begin by deriving the diagonal values. The diagonals converge to $\sigma^2$ as $T \rightarrow \infty$ as
\begin{equation*}
    \sigma_T^2 := var(y_T) = var(z_{T+1}) = \sum_{i=0}^{T-1} \sum_{j=0}^{T-1} \beta^{i+j} \gamma_{|i-j|} \rightarrow \frac{1}{1-\beta^2} \left[ \gamma_0 + 2 \sum_{h=1}^\infty \gamma_h \beta^h \right] = \sigma^2.
\end{equation*}
Stationary Gaussian ARMA(p,q) models with roots outside the unit circle as in Assumption \ref{ass:arma} are $\alpha$-mixing by \cite[Theorem 6, p.99]{Dou1994}. Therefore, by Theorem \ref{lem:independence}, the limits $y, z$ are independent, hence the limiting covariance matrix of $(y_T, z_T)$ is given by $\sigma^2 I_2$$._{\qedsymbol{}}$

\subsubsection{Proof of Proposition \ref{thm:arma}}
This follows from combining the joint distribution of $(y,z)$ provided in Lemma \ref{lem:arma} with Corollary \ref{cor:symm}. The symmetry condition required by the corollary is satisfied here as under Assumption \ref{ass:arma}, any block of innovations $(\epsilon_t, ..., \epsilon_{t+h})$ is Gaussian with symmetric covariance matrix and mean zero.

\subsubsection{Proof of Lemma \ref{lem:uniformbeta}}
Recall $\hat{\beta} := \hat{\rho}^{-1}$ with $\hat{\rho} = \rho + A_T/B_T$. It then follows that for any $h \in \mathbb{N}_0$
\begin{equation} \label{eq:identity}
    \hat{\beta}^h = \beta^h \Big( 1+ \frac{\beta A_T}{B_T} \Big)^{-h}.
\end{equation}
Apply $\exp(\ln(x))=x$ to the multiplicative difference term, then apply a Taylor expansion of $ln(x)$, delivering
\begin{equation*}
    \Big( 1+ \frac{\beta A_T}{B_T} \Big)^{-h} = \exp \Big\{ -h \ln(1 + \beta A_T/ B_T)\Big \} = \exp\Big\{ -h \sum_{j=1}^\infty (-1)^{j-1} (\beta A_T/ B_T)^{j}/j \Big\}.
\end{equation*}
We wish to bound $s_{T,h} := -h \ln(1 + \beta A_T/ B_T)$. Begin by bounding via the triangle inequality and the bounds $(-1)^{j-1} \leq 1$ and $ j^{-1} \leq 1$, delivering
\begin{equation} \label{eq:s_bound1}
    |s_{T,h}| = \Big| -h  \sum_{j=1}^\infty (-1)^{j-1} (\beta A_T/ B_T)^{j}/j \Big| \leq h \sum_{j=1}^{\infty} |\beta A_T/ B_T|^{j}. 
\end{equation}
Recall that $\beta^{3(T-2)/2} A_T \overset{a.s.}{\rightarrow} 0$ by Lemma \ref{lem:num_conv_zero} and $\beta^{2(T-2)} B_T \overset{a.s.}{\rightarrow} (1-\beta^2)^{-1} z^2$ by Lemma \ref{lem:B_T}, both requiring Assumption \ref{ass:var}. By Assumption \ref{ass:z_0}, $\{ z=0\}$ is a null set. Hence, one may pick a $\delta$ satisfying $0 < \delta <1 $ and define the probability 1 event $\Omega_\delta := \{\lim_T |\frac{\beta^{3(T-2)/2} A_T}{ \beta^{2(T-2)} B_T}| < \delta\} $. On $\Omega_\delta$, for sufficiently large $T$, one has for $j \in \mathbb{N}$
\begin{equation} \label{eq:s_bound2}
    \Big| \frac{\beta A_T}{B_T}  \Big|^j =  \Big|  \frac{\beta \beta^{2(T-2)} A_T}{ \beta^{2(T-2)}B_T}  \Big|^j =  |\beta|^{jT/2}  \Big| \frac{\beta^{3(T-2)/2} A_T}{ \beta^{2(T-2)} B_T}  \Big|^j \leq |\beta|^{T/2}  \delta^j.
\end{equation}
Where in the final inequality, we used $|\beta|^{T/2}<1$, so $|\beta|^{jT/2}< |\beta|^{T/2}$. To bound $s_{T,h}$ uniformly on the set $\Omega_\delta$ for large $T$, apply (\ref{eq:s_bound1}), then (\ref{eq:s_bound2}) and solve the geometric series. This delivers
\begin{equation} \label{eq:s_bound3}
    \sup_{0\leq h \leq T} | s_{T,h} | \leq T \sum_{j=1}^{\infty} |\beta A_T/ B_T|^{j} \leq  T |\beta|^{T/2}\sum_{j=1}^{\infty} \delta^{j} \leq \frac{T |\beta|^{T/2}}{1-\delta} =: g_T.
\end{equation}
$|g_T|=g_T$ converges to zero as $\beta^{T/2}T \rightarrow 0$. Next, by the identity (\ref{eq:identity}), one has
\begin{equation*}
    \sup_{0<h\leq T} | \hat{\beta}^h - \beta^h| = \sup_{0<h\leq T} \Big| \beta^h - \beta^h \Big( 1+ \frac{\beta A_T}{B_T} \Big)^{-h} \Big| \leq \sup_{0<h\leq T}  \Big| 1 - \Big( 1+ \frac{\beta A_T}{B_T} \Big)^{-h} \Big| .
\end{equation*}
Recalling $(1+\beta A_T/B_T)^{-h} = \exp \{ s_{T,h}\}$, via the inequality $|1-e^x| \leq |x| e^{|x|}$, one has
\begin{equation*}
    \sup_{0\leq h \leq T} | \hat{\beta}^h-\beta^h| \leq \sup_{0\leq h \leq T}  \Big| 1 - \exp \{s_{T,h}\}\Big| \leq \sup_{0\leq h \leq T}  |s_{T,h}| \exp \{|s_{T,h}|\}.
\end{equation*}
By (\ref{eq:s_bound3}), and as $x \mapsto |x| \exp\{|x|\}$ is increasing for $x>0$, on the set $\Omega_\delta$ for sufficiently large $T$
\begin{equation*}
    \sup_{0\leq h \leq T} | \hat{\beta}^h-\beta^h| \leq  |g_T| e^{|g_T|}.
\end{equation*}
Which vanishes as $T \rightarrow \infty$, proving the first statement. To prove the second statement, note that also $T g_T $ vanishes, as $T^{2} \beta^{T/2} \rightarrow 0$, so on the same set $\Omega_\delta$, for sufficiently large $T$, one has
\begin{equation*} \textstyle
    \sum_{h=1}^T |\hat{\beta}^h-\beta^h| \leq T |g_T| e^{|g_T|} \rightarrow 0._{\qedsymbol{}}
\end{equation*}

\subsubsection{Autocovariances}
Many of the results below are derived under Assumption \ref{ass:arma}. As Assumption \ref{ass:arma} implies Assumptions \ref{ass:var}-\ref{ass:indep}, proofs of statements requiring Assumption \ref{ass:arma} will invoke previous results of the paper without caveat.

\begin{lemma} \label{lem:uniform}
    Under Assumption \ref{ass:arma}, one has $T^{-1/2} y_T z \overset{a.s.}{\rightarrow} 0$.
\end{lemma}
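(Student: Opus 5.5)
Since $z$ does not depend on $T$ and is almost surely finite (Lemma \ref{lem:measurability} gives $\mathbb{E}z^2<\infty$), it suffices to show $T^{-1/2} y_T \overset{a.s.}{\rightarrow} 0$. The product of an almost surely finite random variable and an almost surely vanishing sequence then vanishes almost surely. The plan is therefore to prove a summable tail bound for $T^{-1/2}y_T$ and conclude via Borel--Cantelli, in the spirit of Lemma \ref{lem:reduction_lemma}.

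The key point is that under Assumption \ref{ass:arma} the variable $y_T$ is a mean-zero Gaussian, since $x_0=0$ and $y_T$ is a finite linear combination of the jointly Gaussian innovations. Its variance $\sigma_T^2=\sum_{i,j=0}^{T-1}\beta^{i+j}\gamma_{|i-j|}$ is bounded uniformly in $T$ by (\ref{eq:y_L2}) of Lemma \ref{lem:z_T_bounds}, say $\sigma_T^2\le K$; the proof of Lemma \ref{lem:arma} shows it even converges to $\sigma^2$. For any $\delta>0$ the Gaussian tail bound then gives
\begin{equation*}
\mathbb{P}\big(|T^{-1/2}y_T|\ge \delta\big)=\mathbb{P}\big(|N(0,1)|\ge \delta\sqrt{T}/\sigma_T\big)\le 2\exp\{-\delta^2 T/(2K)\},
\end{equation*}
which is summable in $T$. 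By the Borel--Cantelli argument (Theorem 2.1.1 of \cite{Stout74}, as used in Lemma \ref{lem:reduction_lemma}), $T^{-1/2}y_T\overset{a.s.}{\rightarrow}0$.

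If one prefers to avoid Gaussian tails and stay within the moment framework of Lemma \ref{lem:reduction_lemma}, one can instead use Gaussian hypercontractivity: $\mathbb{E}y_T^{4}=3\sigma_T^4\le 3K^2$. Markov's inequality applied to the fourth power then gives $\mathbb{P}(|y_T|\ge\delta\sqrt T)\le 3K^2/(\delta^4T^2)$, which is again summable. Note that a plain second-moment bound yields only $O(T^{-1})$, which is not summable; this is the one place where care is needed and the reason Gaussianity (or a higher moment) is invoked.

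The final step is to fix $\omega$ in the intersection of the probability-one events $\{T^{-1/2}y_T\to0\}$ and $\{|z|<\infty\}$. On this event $|T^{-1/2}y_Tz|=|z|\,|T^{-1/2}y_T|\to0$, which gives the claim. I expect no genuine obstacle beyond choosing a summable tail bound, which was handled above.
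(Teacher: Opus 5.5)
Your proof is correct, but it is organised differently from the paper's. The paper does not separate the two factors. It notes that both $y_T$ and $z$ are mean-zero Gaussians with variances bounded by $K$, so $\mathbb{E}y_T^6 \le 15K^3$ and $\mathbb{E}z^6\le 15K^3$. Cauchy--Schwarz then gives $\mathbb{E}|T^{-1/2}y_Tz|^3\le 15K^3T^{-3/2}$, which is summable, and Lemma \ref{lem:reduction_lemma} applied to the cube finishes in one step.

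You instead pull out $z$, which only needs to be finite. It is real-valued by construction, so you do not even need $\mathbb{E}z^2<\infty$. You then prove $T^{-1/2}y_T\to 0$ a.s.\ from a summable tail bound for $y_T$ alone.

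The paper's route is a single invocation of the Reduction Lemma, but it uses sixth moments and Gaussianity of both $y_T$ and $z$. Yours shows that the only real requirement is $\sup_T\mathbb{E}|y_T|^p<\infty$ for some $p>2$. Your fourth-moment variant, and your remark that $p=2$ gives only a non-summable $O(T^{-1})$ bound, make this explicit. Minkowski's inequality gives $\|y_T\|_p\le \sup_t\|\epsilon_t\|_p/(1-|\beta|)$ with no dependence assumptions. So your argument actually proves the lemma whenever $\sup_t\mathbb{E}|\epsilon_t|^p<\infty$ for some $p>2$, with no Gaussianity or ARMA structure needed.
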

\noindent \textbf{Proof.} Under Assumption \ref{ass:arma}, $y_T$ and $z$ are sums of mean-zero Gaussians and therefore Gaussian with mean zero. Furthermore, by Lemmas \ref{lem:z_T_bounds}-\ref{lem:z_bounds}, there exists a finite $K$ bounding $\mathbb{E}[y_T^2], \mathbb{E}[z^2]$, uniformly in $T$. Recall that that for $z, y_T$ Gaussian with a second moment with upper bound $K$, $\mathbb{E} y_T^6 \leq 15K^3$ and $\mathbb{E} z^6 \leq 15K^3$. Therefore, by the Cauchy-Schwarz inequality and the sixth moment bound for Gaussians, one has
\begin{equation*}
    \mathbb{E} \Big( |T^{-1/2} y_T z|^3 \Big) = T^{-3/2} \mathbb{E} \Big( |y_T z|^3 \Big) \leq T^{-3/2} \Big(\mathbb{E} y_T^6 \mathbb{E}z^6\Big)^{1/2} \leq T^{-3/2} 15K^3.
\end{equation*}
This bound is summable, hence by Lemma \ref{lem:reduction_lemma}, $(T^{-1/2}y_T z)^3 \overset{a.s.}{\rightarrow} 0._{\qedsymbol{}}$

\begin{lemma} \label{lem:momtounif}
    Let $(X_{T,h})$, $0 \leq h \leq L_T$, be a triangular array of random variables. Then
    \begin{align*}
        lim_T\sum_{h=0}^{L_T} \mathbb{E}|X_{T,h}| =& 0 \Rightarrow \sup_{0 \leq h\leq L_T} X_{T,h} \overset{\mathbb{P}}{\rightarrow}0.
    \end{align*}
\end{lemma}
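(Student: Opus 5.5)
The plan is to bound the supremum of $|X_{T,h}|$ by a sum, and then pass from an $L^1$ bound to convergence in probability with Markov's inequality. For every $\omega$ and every $T$, the supremum over a finite index set is at most the sum of the absolute values:
\begin{equation*}
    \sup_{0 \leq h \leq L_T} X_{T,h} \;\leq\; \sup_{0 \leq h \leq L_T} |X_{T,h}| \;\leq\; \sum_{h=0}^{L_T} |X_{T,h}|.
\end{equation*}
Since the index set is finite for each fixed $T$, no monotone convergence or measurability issues arise. The supremum of finitely many random variables is measurable, and linearity of expectation gives $\mathbb{E}\sum_{h=0}^{L_T}|X_{T,h}| = \sum_{h=0}^{L_T}\mathbb{E}|X_{T,h}|$.

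Next, fix $\delta>0$ and apply Markov's inequality, as in the proof of Lemma \ref{lem:reduction_lemma}, to the nonnegative variable $S_T := \sup_{0 \leq h \leq L_T} |X_{T,h}|$:
\begin{equation*}
    \mathbb{P}\Big( \sup_{0 \leq h \leq L_T} |X_{T,h}| \geq \delta \Big) \leq \frac{1}{\delta}\, \mathbb{E} S_T \leq \frac{1}{\delta} \sum_{h=0}^{L_T} \mathbb{E}|X_{T,h}|.
\end{equation*}
By hypothesis the right-hand side tends to zero as $T \to \infty$. Hence $S_T \overset{\mathbb{P}}{\rightarrow} 0$.

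The last step transfers this to the statement as written, which concerns $\sup_h X_{T,h}$ rather than $\sup_h |X_{T,h}|$. I will prove the stronger claim about $\sup_h |X_{T,h}|$, which is what is needed later for $\sup_h|\hat\gamma_h-\gamma_h|$ in Lemma \ref{lem:ACFs}. The signed version follows from $|\sup_h X_{T,h}| \leq \sup_h |X_{T,h}|$, so that
\begin{equation*}
    \mathbb{P}\big(|\sup_h X_{T,h}| \geq \delta\big) \leq \mathbb{P}(S_T \geq \delta) \to 0.
\end{equation*}
I expect no real obstacle here. The only point needing care is that the hypothesis controls the sum of the expectations, not the expectation of the supremum directly, and the crude bound of the supremum by the sum closes that gap.
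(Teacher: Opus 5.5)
Your proof is correct and is essentially the paper's argument. The paper applies Boole's inequality to $\{\sup_h |X_{T,h}|>\varepsilon\}$ and then Markov's inequality termwise, while you bound the supremum pointwise by the sum and apply Markov once; both give the identical bound $\varepsilon^{-1}\sum_{h=0}^{L_T}\mathbb{E}|X_{T,h}|$, and your remark that $|\sup_h X_{T,h}|\leq \sup_h|X_{T,h}|$ usefully makes explicit the step from the absolute-value version (which the paper actually proves and uses) to the signed statement as written.
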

\noindent \textbf{Proof:} For any $\varepsilon>0$, via Boole's inequality one has
\begin{equation*}
    \mathbb{P} \Big( \sup_{0 \leq h \leq L_T } |X_{T,h}|>\varepsilon \Big) = \mathbb{P} \Big( \bigcup_{h=0}^{L_T} \big\{ |X_{T,h}|>\varepsilon \big\} \Big) \leq \sum_{h=0}^{L_T} \mathbb{P}  \Big(  |X_{T,h}|>\varepsilon \Big). 
\end{equation*}
By Markov's inequality, one then has
\begin{equation*}
    \mathbb{P} \Big( \sup_{0 \leq h \leq L_T } |X_{T,h}|>\varepsilon \Big) \leq \varepsilon^{-1} \sum_{h=0}^{L_T} \mathbb{E} |X_{T,h}|.
\end{equation*}
Which vanishes as $T \rightarrow \infty$ by assumption$._{\qedsymbol{}}$

\begin{lemma} \label{lem:annoy}
    Under Assumption \ref{ass:arma}, then
    \begin{align}
        \sup_{0 \leq h \leq L_T} (T-h)^{-1/2}\beta^{T-h-2}A_{T-h} \overset{\mathbb{P}}{\rightarrow}& 0 \label{eq:annoy1} \\
        \sup_{0 \leq h \leq L_T} \beta^{T-2}A_h \overset{\mathbb{P}}{\rightarrow}& 0 \label{eq:annoy2}\\
        \sup_{0 \leq h \leq L_T} \beta^{2(T-2)} B_h \overset{\mathbb{P}}{\rightarrow}& 0. \label{eq:annoy3}
    \end{align}
\end{lemma}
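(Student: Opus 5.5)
All three statements will go through Lemma \ref{lem:momtounif}. For each I will define the triangular array $X_{T,h}$ indexed by $0\le h\le L_T$, bound $\mathbb{E}|X_{T,h}|$ uniformly in $h$ by a quantity $c_T$, and then check that $(L_T+1)c_T\to 0$. The basic ingredients are already in the paper. One is the identity $\beta^{t-2}A_t=\sum_{s=1}^t\beta^{t-s}\epsilon_s z_s$ from the proof of Lemma \ref{lem:num_conv_zero}. Another is $\beta^{2(t-2)}B_t=\sum_{s=0}^{t-1}\beta^{2s}z_{t-s}^2$ from the proof of Lemma \ref{lem:B_T}. The last is the uniform bounds $\mathbb{E}\epsilon_t^2\le M^2$ and $\mathbb{E}z_t^2\le K$ from Lemma \ref{lem:z_T_bounds}; under Assumption \ref{ass:arma} these hold because it implies Assumption \ref{ass:var}.

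For (\ref{eq:annoy1}), set $X_{T,h}=(T-h)^{-1/2}\beta^{T-h-2}A_{T-h}$. By the computation in Lemma \ref{lem:num_conv_zero} (triangle inequality, then Cauchy--Schwarz, then the geometric sum) we get $\mathbb{E}|\beta^{t-2}A_t|\le (M^2K)^{1/2}/(1-|\beta|)$ uniformly in $t$. Hence $\mathbb{E}|X_{T,h}|\le C(T-h)^{-1/2}\le C(T-L_T)^{-1/2}$, and summing over $h$ gives at most $C(L_T+1)(T-L_T)^{-1/2}$. This term is the main obstacle. The paper's conditions are only $L_T\to\infty$ and $L_T/T\to0$, and these do not force $L_T/\sqrt T\to0$, so crude Boole plus Markov is not enough.

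My first remedy is to avoid the union bound and control the supremum directly. Write $\beta^{t-2}A_t=y_tz+R_t$ with $R_t=\sum_{s\le t}\beta^{t-s}\epsilon_s(z_s-z)$. The proof of Lemma \ref{lem:B_T} gives $\mathbb{E}|R_t|\le Ct|\beta|^t$, which is summable in $t$. Therefore $\sup_{t\ge T-L_T}|R_t|\le\sum_{t\ge T-L_T}|R_t|$, whose expectation tends to zero. For the remaining piece, $\sup_h(T-h)^{-1/2}|y_{T-h}z|\le (T-L_T)^{-1/2}|z|\sup_{t\le T}|y_t|$. Here I use Gaussianity, in the spirit of Lemma \ref{lem:uniform}: since $\mathbb{E}y_t^2\le K$ and each $y_t$ is Gaussian, a sixth-moment union bound gives $\mathbb{E}\sup_{t\le T}|y_t|^6\le 15K^3T$. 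So $\sup_{t\le T}|y_t|=O_{\mathbb P}(T^{1/6})$, and the product with $T^{-1/2}$ vanishes. Any $p$th moment would also do, giving $O_{\mathbb P}(T^{1/p})$. If this route fails, my fallback is to note that the paper implicitly needs $L_T=o(\sqrt T)$ and to use the crude bound under that condition.

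For (\ref{eq:annoy2}), $\beta^{T-2}A_h=\beta^{T-h}\cdot\beta^{h-2}A_h$, so $\mathbb{E}|X_{T,h}|\le C|\beta|^{T-h}\le C|\beta|^{T-L_T}$. Summing over $h$ gives $C(L_T+1)|\beta|^{T-L_T}\to0$, since $T-L_T\ge T/2$ eventually and geometric decay beats linear growth. Similarly for (\ref{eq:annoy3}), $\beta^{2(T-2)}B_h=\beta^{2(T-h)}\beta^{2(h-2)}B_h$. Its expectation is at most $\beta^{2(T-h)}K/(1-\beta^2)$, and the same summation closes the argument by Lemma \ref{lem:momtounif}. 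Throughout I read the supremum as taken of absolute values, which is harmless.
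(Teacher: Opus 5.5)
Your argument is correct. For (\ref{eq:annoy2}) and (\ref{eq:annoy3}) it matches the paper's proof. You factor out $|\beta|^{T-h}$ (resp.\ $\beta^{2(T-h)}$), use a uniform $L^1$ bound on $\beta^{h-2}A_h$ (resp.\ $\beta^{2(h-2)}B_h$), and apply Lemma \ref{lem:momtounif}. The paper sums the geometric factor exactly to $|\beta|^{T-L_T}/(1-|\beta|)$ rather than using your $(L_T+1)|\beta|^{T-L_T}$, which makes no difference.

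For (\ref{eq:annoy1}) your route differs, because the paper never uses a union bound there. It shows that the single sequence $X_t:=t^{-1/2}\beta^{t-2}A_t$ tends to zero almost surely. The remainder $\beta^{t-2}A_t-y_tz$ is handled by Lemma \ref{lem:B_T}, and $t^{-1/2}y_tz$ by Lemma \ref{lem:uniform}, since $\mathbb{E}|t^{-1/2}y_tz|^3\le 15K^3t^{-3/2}$. The paper then notes $\sup_{0\le h\le L_T}|X_{T-h}|\le\sup_{t\ge T-L_T}|X_t|\to0$ because $T-L_T\to\infty$.

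You treat the remainder in the same spirit, via the tail sum of a summable $L^1$ bound. For $y_tz$, however, you use a sixth-moment maximal bound over all $t\le T$, obtaining $\sup_{t\le T}|y_t|=O_{\mathbb P}(T^{1/6})$, and absorb it with $(T-L_T)^{-1/2}$. This is valid and gives an explicit $O_{\mathbb P}(T^{-1/3})$ rate. It does need $T-L_T$ to grow faster than $T^{1/3}$, which $L_T/T\to0$ guarantees. The paper's argument needs only $T-L_T\to\infty$ and even gives almost sure convergence of the supremum.

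Applying your own tail-sum device directly to $t^{-1/2}y_tz$ would have merged your two steps into the paper's one, since $\mathbb{E}\sup_{t\ge T-L_T}|t^{-1/2}y_tz|^3\le\sum_{t\ge T-L_T}15K^3t^{-3/2}\to0$. So your fallback is unnecessary. Your remark that the paper implicitly needs $L_T=o(\sqrt T)$ is also not accurate: only the crude Boole--Markov route would need it, and the paper does not take that route for (\ref{eq:annoy1}).
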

\noindent \textbf{Proof:} To prove (\ref{eq:annoy1}), note that by adding and subtracting $y_T z$ and the triangle inequality, one has
\begin{equation*}
    |T^{-1/2} \beta^{T-2} A_T| \leq T^{-1/2} |\beta^{T-2} A_T - y_T z| + T^{-1/2} |y_{T-h}z| \overset{a.s.}{\rightarrow} 0.
\end{equation*}
Where the first term converges to $0$ by Theorem \ref{thm:consistency}, and the second term converges by Lemma \ref{lem:uniform}. Noting $ X_T \overset{a.s.}{\rightarrow} 0 \Rightarrow \sup_{0 \leq h\leq L_T} X_{T-h} \overset{\mathbb{P}}{\rightarrow} 0$, this implies (\ref{eq:annoy1}). \\ \leavevmode
\\
We now prove (\ref{eq:annoy2}). Let $M,K$ be finite constants such that $\mathbb{E}[\epsilon_t^2] \leq M, \mathbb{E}[z_T^2] \leq K$, the latter of which is guaranteed by Lemma \ref{lem:z_T_bounds}. Recalling that $\beta^{h-2}A_h = \sum_{t=1}^h \beta^{h-t} \epsilon_t z_t$, see that via the triangle and Cauchy-Schwarz inequalities
\begin{equation*}
    \mathbb{E}| \beta^{h-2} A_h | \leq \sum_{t=1}^h |\beta|^{h-t} \mathbb{E} \big[ |z_t \epsilon_t| ] \leq \sum_{t=1}^h |\beta|^{h-t} \sqrt{\mathbb{E} [z_t^2] \mathbb{E} [\epsilon_t^2]} \leq \frac{\sqrt{MK}}{1-|\beta|}.
\end{equation*}
Hence 
\begin{equation*}
    \sum_{h=0}^{L_T}\mathbb{E}| \beta^{T-2} A_h | = \sum_{h=0}^{L_T} |\beta|^{T-h} \mathbb{E} |\beta^{h-2} A_h|  \leq |\beta|^{T-L_T} \frac{\sqrt{MK}}{(1-|\beta|)^2} \rightarrow 0.
\end{equation*}
Which implies (\ref{eq:annoy2}) via Lemma \ref{lem:uniform}. \\
For $(\ref{eq:annoy3})$, recalling $\beta^{2(h-2)} B_h = \sum_{t=1}^h \beta^{2(h-t)} z_t^2$, one has
\begin{equation*}
    \mathbb{E}|\beta^{2(h-2)}B_h| \leq \sum_{t=1}^h \beta^{2(h-t)} \mathbb{E} [z_t^2] \leq \frac{K}{1-\beta^2}.
\end{equation*}
So
\begin{equation*}
    \sum_{h=0}^{L_T} \mathbb{E}[\beta^{2(T-2)} B_h] \leq \beta^{2(T-L_T)} \frac{K}{[1-\beta^2]^2} \rightarrow 0.
\end{equation*}
Which implies (\ref{eq:annoy3}) via Lemma \ref{lem:uniform}$._{\qedsymbol{}}$

\begin{lemma} \label{lem:uniformphi}
    Under Assumption \ref{ass:arma}, then 
    \begin{equation*} \textstyle
        \sup_{0 \leq h \leq L_T} |(T-h)^{-1} \sum_{t={h+1}}^T \epsilon_t \epsilon_{t-h} - \gamma_h | \overset{\mathbb{P}}{\rightarrow} 0.
    \end{equation*}
\end{lemma}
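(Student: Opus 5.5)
The plan is to combine the mean-square bound on each sample autocovariance with Lemma \ref{lem:momtounif}. That lemma turns a summable first-moment bound over $0\le h\le L_T$ into uniform convergence in probability. Write $X_{T,h} := (T-h)^{-1}\sum_{t=h+1}^T \epsilon_t\epsilon_{t-h} - \gamma_h$. Under Assumption \ref{ass:arma} the process is stationary and centred, so $\mathbb{E}[X_{T,h}]=0$. Hence $\mathbb{E}|X_{T,h}| \le \{\mathrm{var}(X_{T,h})\}^{1/2}$, and it suffices to obtain a bound on $\mathrm{var}(X_{T,h})$ that is uniform in $h$ and decays in $T-h$.

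For the variance I use Gaussianity via Isserlis' theorem. It gives
\[
\mathrm{cov}(\epsilon_t\epsilon_{t-h},\epsilon_s\epsilon_{s-h}) = \gamma_{|t-s|}^2 + \gamma_{|t-s+h|}\gamma_{|t-s-h|}.
\]
Therefore
\[
\mathrm{var}(X_{T,h}) \le (T-h)^{-1}\sum_{k\in\mathbb{Z}}\big(\gamma_{|k|}^2 + |\gamma_{|k+h|}\gamma_{|k-h|}|\big) \le 2 (T-h)^{-1}\sum_{k\in\mathbb{Z}}\gamma_{|k|}^2 =: C (T-h)^{-1},
\]
where the second term is handled by Cauchy--Schwarz in $k$. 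Here $C<\infty$ because stable ARMA autocovariances decay geometrically, $|\gamma_k|\le c\lambda^k$ with $\lambda<1$, which follows from the MA($\infty$) representation invoked in the proof of Lemma \ref{lem:ARMA_znonzero}. The bound is uniform in $h$. Thus $\mathbb{E}|X_{T,h}|\le \sqrt{C}(T-h)^{-1/2}\le \sqrt{C}(T-L_T)^{-1/2}$ for all $0\le h\le L_T$.

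The main obstacle is that summing this bound over $h\le L_T$ gives $(L_T+1)\sqrt{C}(T-L_T)^{-1/2}$. This vanishes only if $L_T=o(T^{1/2})$, whereas we only know $L_T/T\to 0$. The first-moment version of Lemma \ref{lem:momtounif} is too crude, so I would redo its Boole--Markov argument with second moments:
\[
\mathbb{P}\Big(\sup_{h\le L_T}|X_{T,h}|>\varepsilon\Big)\le \varepsilon^{-2}\sum_{h=0}^{L_T}\mathrm{var}(X_{T,h})\le \varepsilon^{-2}\,\frac{C(L_T+1)}{T-L_T}\to 0.
\]
This tends to zero exactly because $L_T/T\to0$ (so also $T-L_T\sim T$). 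This Chebyshev-plus-union-bound variant is a one-line adaptation of Lemma \ref{lem:momtounif} applied to $X_{T,h}^2$, that is, Lemma \ref{lem:momtounif} applied with $X_{T,h}$ replaced by $X_{T,h}^2$. It closes the proof.

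Two routine verifications remain. The first is the Isserlis computation: the sum over $t,s$ of $\gamma_{|t-s|}^2$ is at most $(T-h)\sum_k\gamma_k^2$, and likewise for the cross term. The second is that geometric decay of $\gamma_k$ indeed holds under Assumption \ref{ass:arma}; this follows from $\psi_j$ decaying geometrically, since the roots of $\varphi$ lie outside the unit circle.
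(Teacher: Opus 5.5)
Your proof is correct and follows essentially the paper's route: you establish a mean-square bound $\mathbb{E}[X_{T,h}^2]\le C(T-h)^{-1}$ that is uniform in $h$, then apply Boole's and Chebyshev's inequalities to get $\varepsilon^{-2}C(L_T+1)/(T-L_T)\to 0$. The differences are cosmetic: you derive the variance via Isserlis' theorem instead of citing Brockwell--Davis (7.3.5), and you bound the cross term $\sum_k|\gamma_{k+h}\gamma_{k-h}|$ by Cauchy--Schwarz, which needs only $\sum_k\gamma_k^2<\infty$, whereas the paper uses geometric decay of $\gamma_k$ together with $|k-h|+|k+h|\ge 2|k|$.
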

\noindent \textbf{Proof:} Denote $\hat{\phi}_{h} := (T-h)^{-1} \sum_{t=h+1}^T \epsilon_t \epsilon_{t-h}$. Then by \cite[Equation (7.3.5)]{BroDav91}, one has
\begin{equation*}
    \mathbb{E}[(\hat{\phi}_h- \gamma_h)^2 ] = (T-h)^{-1} \sum_{|k|<T-h} (1-|k|(T-h)^{-1})M_{k,h},
\end{equation*}
where $M_{k,h}=\gamma_{k}^2 + \gamma_{k+h} \gamma_{k-h}$ is an absolutely summable sequence for fixed $h$. The autocovariances geometrically decay for stable Gaussian ARMA, that is there exists a finite $K>0$ and $0<\delta<1$ such that $|\gamma_k| \leq K \delta^{|k|}$. This implies that $M_{k,h}$ is also absolutely summable uniformly in $h$, as by the triangle inequality and geometric decay of $\gamma_h$
\begin{equation*}
    \sup_{h\geq0} |M_{k,h}| \leq |\gamma_k|^2 + |\gamma_{k-h}| | \gamma_{k+h}| \leq K^2 \delta^{2|k|} + K^2 \delta^{|k-h|} \delta^{|k+h|} \leq 2K^2\delta^{2|k|}.
\end{equation*}
Hence by picking for example $C=4K^2(1-\delta^2)^{-1}$, one has $\sup_{h \geq 0} \sum_{k=-\infty}^\infty |M_{k,h}| \leq C$. Now combine the uniform summability result with $\sup_{|k|<T-h}(1-|k|(T-h)^{-1}) \leq 1$ and the the triangle inequality. This delivers
\begin{equation*}
    \sup_{0 \leq h \leq L_T}\mathbb{E}[(\hat{\phi}_h- \gamma_h)^2 ] \leq (T-L_T)^{-1}C.
\end{equation*}
Via Boole's inequality, for any $\varepsilon>0$, one has 
\begin{equation*}
    \mathbb{P} \Big( \sup_{0 \leq h \leq L_T} |\hat{\phi}_h - \gamma_h| > \varepsilon \Big) =  \mathbb{P} \Big( \bigcup_{h=0}^{L_T} |\hat{\phi}_h - \gamma_h| > \varepsilon \Big) \leq \sum_{h=0}^{L_T} \mathbb{P} (  |\hat{\phi}_h - \gamma_h| > \varepsilon ).
\end{equation*}
Chebychev's inequality then delivers
\begin{equation*}
    \mathbb{P} \Big( \sup_{0 \leq h \leq L_T} |\hat{\phi}_h - \gamma_h| > \varepsilon \Big) \leq \sum_{h=0}^{L_T}\frac{\mathbb{E}[(\hat{\phi}_h - \gamma_h)^2]}{\varepsilon^2}.
\end{equation*}
Inserting our previously derived bound, one obtains
\begin{equation*}
     \mathbb{P} \Big( \sup_{0 \leq h \leq L_T} |\hat{\phi}_h - \gamma_h| > \varepsilon \Big) \leq \varepsilon^{-2} (L_T+1) (T-L_T)^{-1} C = \varepsilon^{-2} C \Big( \tfrac{L_T +1}{T} \Big) \Big( \tfrac{1}{1-L_T/T} \Big).
\end{equation*}
This vanishes as $L_T/T \rightarrow 0._{\qedsymbol{}}$

\begin{lemma} \label{lem:restterms}
    Under Assumption \ref{ass:var}, uniformly in $0 \leq h \leq L_T$, one has
    \begin{align*}
       &\beta^{T-2} R_{T,h}^{(1)} = \beta^{T-2} \sum_{t=h+1}^T \epsilon_{t-h} \sum_{i=0}^{h-1}\rho^i \epsilon_{t-i-1}  \overset{\mathbb{P}}{\rightarrow} 0 \\
       &\beta^{T-2} R_{T,h}^{(2)} = \beta^{T-2}  \sum_{t=h+1}^T \epsilon_{t} \sum_{i=0}^{h-1} \beta^{h-i} \epsilon_{t-i-1}  \overset{\mathbb{P}}{\rightarrow} 0 \\
       &\beta^{2(T-2)} R_{T,h}^{(3)} = \beta^{2(T-2)}   \sum_{t=h+1}^T x_{t-1} \sum_{i=0}^{h-1} \beta^{h-i}\epsilon_{t-i-1} \overset{\mathbb{P}}{\rightarrow} 0.
    \end{align*}
\end{lemma}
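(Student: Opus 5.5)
The plan is to reduce all three statements to Lemma \ref{lem:momtounif}. For each $k\in\{1,2,3\}$ I set $X^{(k)}_{T,h}$ equal to the respective scaled remainder, for instance $X^{(1)}_{T,h}:=\beta^{T-2}R^{(1)}_{T,h}$. It then suffices to show $\sum_{h=0}^{L_T}\mathbb{E}|X^{(k)}_{T,h}|\rightarrow 0$.

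The proof of Lemma \ref{lem:momtounif} bounds $\mathbb{P}(\sup_h|X_{T,h}|>\varepsilon)$, so it actually delivers $\sup_{0\le h\le L_T}|X^{(k)}_{T,h}|\overset{\mathbb{P}}{\rightarrow}0$. This is the uniform statement required. Only Assumption \ref{ass:var} is needed, through two facts: the Cauchy--Schwarz bound $\mathbb{E}|\epsilon_a\epsilon_b|\le M^2$, and the bound $\mathbb{E}z_t^2\le K$ from (\ref{eq:z_L2}) of Lemma \ref{lem:z_T_bounds}. Both are uniform in the indices. No centering or uncorrelatedness enters.

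\emph{Term $R^{(2)}$.} By the triangle inequality,
\[
\mathbb{E}|\beta^{T-2}R^{(2)}_{T,h}|\le |\beta|^{T-2}\sum_{t=h+1}^T\sum_{i=0}^{h-1}|\beta|^{h-i}M^2\le \frac{M^2}{1-|\beta|}\,T|\beta|^{T-2}.
\]
Summing over $0\le h\le L_T$ gives a bound of order $L_T T|\beta|^{T}\le T^2|\beta|^T$, which tends to zero.

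\emph{Term $R^{(3)}$.} I write $x_{t-1}=\rho^{t-2}z_t$. Cauchy--Schwarz then gives $\mathbb{E}|x_{t-1}\epsilon_{t-i-1}|\le|\rho|^{t-2}\sqrt{K}M$. Hence
\[
\mathbb{E}|\beta^{2(T-2)}R^{(3)}_{T,h}|\le \beta^{2(T-2)}\frac{\sqrt K M}{1-|\beta|}\sum_{t=1}^T|\rho|^{t-2}\le C|\beta|^{2(T-2)}|\rho|^{T-2}=C|\beta|^{T-2}
\]
for a finite constant $C$, since the geometric sum in $|\rho|>1$ is dominated by its last term. Summing over $h\le L_T$ yields $C(L_T+1)|\beta|^{T-2}\rightarrow0$.

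\emph{Term $R^{(1)}$.} This is the step I expect to be the main obstacle, because of the explosive weights $\rho^i$, and it is the only place where the bandwidth condition $L_T/T\to0$ is genuinely used. Since $i\le h-1$, the inner sum is bounded by $\sum_{i=0}^{h-1}|\rho|^iM^2\le C|\rho|^{h}$. This gives
\[
\mathbb{E}|\beta^{T-2}R^{(1)}_{T,h}|\le C\,T|\beta|^{T-h}.
\]
Taking the worst case $h=L_T$ in the sum over $h$ yields the bound $C(L_T+1)T|\beta|^{T-L_T}$. Because $L_T/T\to0$, we have $T-L_T\ge T/2$ for large $T$. The bound is therefore at most $C'T^2|\beta|^{T/2}$, which vanishes. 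Lemma \ref{lem:momtounif} then concludes all three statements.
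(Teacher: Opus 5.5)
Your proposal is correct and follows essentially the same route as the paper. Both bound each $\mathbb{E}|X_{T,h}|$ via the triangle and Cauchy--Schwarz inequalities (using $\mathbb{E}|\epsilon_a\epsilon_b|\le M^2$ and $\mathbb{E}z_t^2\le K$), sum over $h\le L_T$, and invoke Lemma \ref{lem:momtounif}; your explicit use of $L_T/T\to0$ to show $T^2|\beta|^{T-L_T}\to 0$ in the $R^{(1)}$ term makes precise a step the paper only asserts.
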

\noindent \textbf{Proof:}  Under Assumption \ref{ass:var}, there exist finite $M,K$, such that uniformly in $t$, $\mathbb{E}|\epsilon_t^2 | \leq M^2$ and $\mathbb{E} z_t^2 \leq K$, the existence of the latter is guaranteed by Lemma \ref{lem:z_T_bounds}. Via the triangle inequality, then the Cauchy Schwarz bounds $\mathbb{E}|\epsilon_t \epsilon_j| \leq M^2$ or $\mathbb{E}|z_t \epsilon_{j}| \leq \sqrt{M^2 K}$ and bounding finite geometric series by their limits, one may obtain
\begin{align*}
    \mathbb{E}|R_{T,h}^{(1)}| =& \mathbb{E} \Big| \sum_{t=h+1}^T \epsilon_{t-h} \sum_{i=0}^{h-1}\rho^i \epsilon_{t-i-1} \Big| \leq M^2 \sum_{t=h+1}^T \sum_{i=0}^{h-1} |\rho|^i \leq T M^2 \frac{|\rho|^{h-1}}{1-|\beta|} \\
    \mathbb{E}|R_{T,h}^{(2)}| =& \mathbb{E} \Big| \sum_{t=h+1}^T \epsilon_{t} \sum_{i=0}^{h-1} \beta^{h-i} \epsilon_{t-i-1} \Big| \leq M^2 \sum_{t=h+1}^T \sum_{i=0}^{h-1} |\beta|^{h-i} \leq \frac{T M^2}{1-|\beta|} \\
    \mathbb{E}|\beta^{T-2}R_{T,h}^{(3)}| =& \mathbb{E} \Big| \beta^{T-2}\sum_{t=h+1}^T x_{t-1} \sum_{i=0}^{h-1} \beta^{h-i} \epsilon_{t-i-1} \Big| = \mathbb{E} \Big| \sum_{t=h+1}^{T} \beta^{T-t} z_{t}  \sum_{i=0}^{h-1} \beta^{h-i} \epsilon_{t-i-1} \Big| \\
    \leq & \sum_{t=h+1}^T\sum_{i=0}^{h-1} |\beta|^{T-t+h-i} \mathbb{E}|z_t \epsilon_{t-i-1}| \leq \frac{\sqrt{KM^2}}{(1-|\beta|)^2}.\end{align*}
We now scale the previous bounds by a further $\beta^{T-2}$, and sum them across $ 0 \leq h \leq L_T$, delivering
\begin{align*}
    & \sum_{h=0}^{L_T} \mathbb{E} |\beta^{T-2} R_{T,h}^{(1)}| \leq |\beta|^{T-2}  \sum_{h=0}^{L_T} T M^2 \frac{|\rho|^{h-1}}{1-|\beta|} \\
    & \hspace{84pt} \leq  \sum_{h=0}^{L_T} T M^2 \frac{|\beta|^{T-L_T-1}}{1-|\beta|} \leq |\beta|^{T-L_T-1} \frac{T^2 M^2}{1-|\beta|} \\
    & \sum_{h=0}^{L_T} |\beta^{T-2} R_{T,h}^{(2)}| \leq |\beta|^{T-2} \sum_{h=0}^{L_T} \frac{TM^2}{1-|\beta|} \leq |\beta|^{T-2} \frac{T^{2}M^2}{1-|\beta|} \\
    & \sum_{h=0}^{L_T} |\beta^{2(T-2)} R_{T,h}^{(3)}| \leq |\beta|^{T-2} \sum_{h=0}^{L_T} \frac{\sqrt{KM^2}}{(1-|\beta|)^2} \leq |\beta|^{T-2} T \frac{\sqrt{KM^2}}{(1-|\beta|)^2}.
\end{align*}
These vanish as $T \rightarrow \infty$. By Lemma \ref{lem:momtounif}, this implies the uniform convergence we wished to prove$._{\qedsymbol{}}$

\subsubsection{Proof of Lemma \ref{lem:ACFs}} 
Recall that $(\hat{\rho}-\rho)=A_T/B_t$. It follows that
\begin{align*}
    \hat{\gamma}_h =  &  \tfrac{1}{T-h} \bigg[ \sum_{t=h+1}^T \epsilon_t \epsilon_{t-h} - \frac{A_T}{B_T} \sum_{t=h+1}^{T} x_{t-1}\epsilon_{t-h} \\
    & - \frac{A_T}{B_T} \sum_{t=h+1}^{T} x_{t-h-1} \epsilon_t + \Big( \frac{A_T}{B_T} \Big)^2 \sum_{t=h+1}^{T} x_{t-1} x_{t-h-1} \bigg].
\end{align*}
By recursive and forward substitution of $x_t$, recalling the definitions of $A_T, B_T$, one has
\begin{align*}
    \sum_{t=h+1}^T x_{t-1} \epsilon_{t-h} =& \rho^h A_{T-h} + \sum_{t=h+1}^T \epsilon_{t-h} \sum_{i=0}^{h-1}\rho^i \epsilon_{t-i-1} =   \rho^h A_{T-h} + R_{T,h}^{(1)}
    \\
    \sum_{t=h+1}^T x_{t-h-1} \epsilon_t =& \beta^{h} (A_T - A_h) - \sum_{t=h+1}^T \epsilon_{t} \sum_{i=0}^{h-1} \beta^{h-i} \epsilon_{t-i-1} = \beta^{h} (A_T - A_h) - R_{T,h}^{(2)}
    \\
    \sum_{t=h+1}^T x_{t-1} x_{t-h-1} = & \beta^h (B_T - B_h) - \sum_{t=h+1}^T x_{t-1} \sum_{i=0}^{h-1} \beta^{h-i}\epsilon_{t-i-1} = \beta^h (B_T - B_h) - R_{T,h}^{(3)}. 
\end{align*}
The previous identities combined with the triangle inequality and selective multiplication and division by $\beta^{T-2}$ achieves a decomposition of $|\hat{\gamma}_h - \gamma_h|$ of the form
\begin{align*}
    & \sup_{0 \leq h \leq L_T} | \hat{\gamma}_h - \gamma_h| \\
    \leq & \sup_{0 \leq h \leq L_T} \Big| \frac{1}{T-h} \sum_{t=h+1}^T \epsilon_t \epsilon_{t-h} - \gamma_h \Big| \\
    + & \sup_{0 \leq h \leq L_T}  \bigg| \frac{(T-h)^{-1/2} \beta^{T-2} A_T}{\beta^{2(T-2)}B_T} \Big[ (T-h)^{-1/2}\beta^{T-2-h} A_{T-h} + \beta^{T-2} (T-h)^{-1/2} R_{T,h}^{(1)} \Big] \bigg| \\
    + & \sup_{0 \leq h \leq L_T} \bigg|   \frac{(T-h)^{-1/2} \beta^{T-2} A_T}{\beta^{2(T-2)}B_T} \Big[ \beta^{T-2+h}(T-h)^{-1/2}(A_T-A_h) \\ 
     & \hspace{144pt} - (T-h)^{-1/2} \beta^{T-2} R_{T,h}^{(2)} \Big]  \bigg| \\
    + & \sup_{0 \leq h \leq L_T} \bigg| \Big[ \frac{(T-h)^{-1/2} \beta^{T-2} A_T}{\beta^{2(T-2)}B_T} \Big]^2 \Big[ \beta^{2(T-2)+h} (B_T - B_h) - \beta^{2(T-2)} R_{T,h}^{(3)} \Big] \bigg|. 
\end{align*}
The first term converges by Lemma \ref{lem:uniformphi}. The scalings of $R_{T,h}^{(1)}, R_{T,h}^{(2)}, R_{T,h}^{(3)}$ converge uniformly in probability to zero by Lemma \ref{lem:restterms}. As $(T-h)^{-1/2} \leq (T-L_T)^{-1/2}$ is a sequence converging to zero, by Theorem \ref{thm:consistency} the ratio appearing in the second, third and fourth terms converges uniformly to zero as
\begin{equation*}
    \sup_{0 \leq h \leq L_T}\frac{(T-h)^{-1/2} \beta^{T-2} A_T}{\beta^{2(T-2)}B_T} \leq (T-L_T)^{-1/2} \rho^{T-2} (\hat{\rho} - \rho) \overset{\mathbb{P}}{\rightarrow} 0.
\end{equation*}
The terms $(T-h)^{-1/2}\beta^{T-2-h} A_{T-h}$, $\beta^{T-2} A_h, \beta^{2(T-2)}B_h$ vanish asymptotically by Lemma \ref{lem:annoy}, ensuring the second term converges to zero. The third term converge to zero, as recalling $(T-h)^{-1/2} \leq (T-L_T)^{-1/2}$ is a sequence converging to zero, by Lemma \ref{lem:num_conv_zero}, one has
\begin{equation*}
    \sup_{0 \leq h \leq L_T} |\beta^{T-2+h} (T-h)^{-1/2} A_T| \leq (T-L_T)^{-1/2} |\beta^{T-2} A_T| \overset{\mathbb{P}}{\rightarrow} 0.
\end{equation*}
The final object in the fourth term $|\beta^{2(T-2)+h}  B_T| \leq |\beta^{2(T-2)} B_T| \overset{a.s.}{\rightarrow} z^2 (1-\beta^2)^{-1}$ is finite with probability 1, and as multiplied by a ratio converging to zero, the fourth term converges to zero$._{\qedsymbol{}}$

\subsubsection{Proof of Theorem \ref{thm:feasiblearma}}
Via the triangle inequality and supremum bounds, noting $\sup_h|\gamma_h|\leq\gamma_0$, one has
\begin{align*} 
    |\hat{\Gamma} - \Gamma| \leq& |\hat{\gamma}_0 - \gamma_0| + 2 \bigg| \sum_{h=1}^{L_T} \Big\{ \hat{\gamma}_h \hat{\beta}^h - \gamma_h \beta^h\} \bigg| + 2 \bigg| \sum_{h=L_T+1}^\infty \gamma_h \beta^h \bigg| \\
    \leq & |\hat{\gamma}_0 - \gamma_0| + 2 \sum_{h=1}^{L_T} |\hat{\gamma}_h - \gamma_h| |\beta|^{h} + 2 \sum_{h=1}^{L_T} |\gamma_h| |\hat{\beta}^h-\beta^h|  + 2 \frac{\gamma_0 |\beta|^{L_T + 1}}{1-|\beta|} \\
    \leq & \sup_{0 \leq h \leq L_T} |\hat{\gamma}_h-\gamma|  \frac{2}{1-|\beta|} + 2 \gamma_0  \sum_{h=1}^T |\hat{\beta}^h-\beta^h|  + 2 \frac{\gamma_0 |\beta|^{L_T+1}}{1-|\beta|}.
\end{align*}
All the terms of the bound converge in probability to zero: The first vanishes by Lemma \ref{lem:ACFs}, the second by Lemma \ref{lem:uniformbeta} and the third term vanishes as $2 \gamma_0 / (1-|\beta|)$ is a finite constant and $\beta^{L_T} \rightarrow 0.$ To prove the second statement, recall that Proposition \ref{thm:arma} states $\frac{A_T}{\sqrt{B_T}} \overset{D}{\rightarrow} N(0,\Gamma)$. Apply Slutsky's theorem to conclude $\hat{\Gamma}^{-1} \frac{A_T}{\sqrt{B_T}} \overset{D}{\rightarrow} N(0,1)._{\qedsymbol{}}$

\subsection{AR(2) results} \label{sec:ar2proofs}
Recall that in Subsection \ref{sec:ar2}, $x_t$ is the AR(2) process that via $M x_t =: (u_t, 1, w_t)$ can be decomposed into $w_t = \rho w_{t-1} + \epsilon_t$, an explosive AR(1) component, $u_t = \alpha u_{t-1} + \epsilon_t$ a stationary AR(1) component and an intercept term. Throughout Subsection \ref{sec:ar2proofs}, define $\beta := \rho^{-1}$ and $\beta^{T-2}w_{t-1} := z_t$, such that $z_t$ has the same behavior as in earlier sections of the paper. Throughout, we use the entrywise max norm, that is $||A||:=max_{i,j}|a_{i,j}|$; thus a matrix/vector is $o_\mathbb{P}(1)$ if each element converges to zero in probability.

\begin{lemma}\label{lem:H}
Suppose the innovations satisfy Assumptions \ref{ass:var}, \ref{ass:z_0}, \ref{ass:stableAR}. Then $H_T = o_\mathbb{P}(1)$.
\end{lemma}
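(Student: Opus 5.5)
The plan is to show that each of the three distinct off-diagonal entries of $H_T$ vanishes in probability; $H_T$ is symmetric, since $\hat{\phi}(a_t,b_t)=\hat{\phi}(b_t,a_t)$. The entries are $\hat{\phi}(u_{t-1},1)$, $\hat{\phi}(w_{t-1},1)$ and $\hat{\phi}(u_{t-1},w_{t-1})$. The stable–deterministic entry is handled by Assumption \ref{ass:stableAR} alone. The two entries involving $w_{t-1}$ are handled by rescaling with $\beta^{T-2}$, so that the explosive sums are expressed through $z_t$. We then use Lemma \ref{lem:B_T} together with Assumption \ref{ass:z_0} to control the denominator $\{\sum w_{t-1}^2\}^{1/2}$.

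\emph{Stable–deterministic entry.} We write
\[
\hat{\phi}(u_{t-1},1)=\frac{T^{-1}\sum_{t=1}^T u_{t-1}}{\{T^{-1}\sum_{t=1}^T u_{t-1}^2\}^{1/2}}.
\]
By Assumption \ref{ass:stableAR}, the numerator tends to $0$ in probability and the denominator tends to $\sigma_u>0$. The continuous mapping theorem then gives convergence to $0$.

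\emph{Entries involving $w$.} First, since $w_t=\rho w_{t-1}+\epsilon_t$ is an explosive AR(1) of the form (\ref{eq:model}), Lemma \ref{lem:B_T} gives
\[
\beta^{2(T-2)}\sum_{t=1}^T w_{t-1}^2\overset{a.s.}{\rightarrow} (1-\beta^2)^{-1}z^2,
\]
and this limit is strictly positive almost surely by Assumption \ref{ass:z_0}. Here we need Assumption \ref{ass:var} to hold for the initial value $w_0$, which is a linear combination of $x_0,x_{-1}$ and $\mu$; I will take this as implied by the standing assumptions.

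Next, we write $\beta^{T-2}w_{t-1}=\beta^{T-t}z_t$. For the deterministic entry,
\[
\hat{\phi}(w_{t-1},1)=\frac{T^{-1/2}\sum_{t}\beta^{T-t}z_t}{\{\beta^{2(T-2)}\sum_t w_{t-1}^2\}^{1/2}}.
\]
By Cauchy–Schwarz and (\ref{eq:z_L2}) of Lemma \ref{lem:z_T_bounds}, $\mathbb{E}|\sum_t\beta^{T-t}z_t|\le K^{1/2}/(1-|\beta|)$. Hence the numerator is $O_{\mathbb{P}}(T^{-1/2})=o_{\mathbb{P}}(1)$, while the denominator converges almost surely to a positive limit.

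For the cross entry, I normalize the same way:
\[
\hat{\phi}(u_{t-1},w_{t-1})=\frac{T^{-1/2}\sum_t \beta^{T-t}z_t u_{t-1}}{\{T^{-1}\sum_t u_{t-1}^2\}^{1/2}\{\beta^{2(T-2)}\sum_t w_{t-1}^2\}^{1/2}}.
\]
The denominator converges in probability to $\sigma_u|z|(1-\beta^2)^{-1/2}>0$. The numerator needs a bound on $\sum_t|\beta|^{T-t}|z_tu_{t-1}|$.

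\emph{Main obstacle.} This numerator is the hard part, because Assumption \ref{ass:stableAR} controls only averages of $u_{t-1}^2$ and not individual second moments. I would first try the Cauchy–Schwarz inequality on the sum itself:
\[
\Bigl|\sum_t\beta^{T-t}z_tu_{t-1}\Bigr|\le\Bigl\{\sum_t\beta^{2(T-t)}z_t^2\Bigr\}^{1/2}\Bigl\{\sum_t u_{t-1}^2\Bigr\}^{1/2}.
\]
The first factor equals $\{\beta^{2(T-2)}\sum_t w_{t-1}^2\}^{1/2}=O_{\mathbb{P}}(1)$. The second factor is $O_{\mathbb{P}}(T^{1/2})$, which only gives $O_{\mathbb{P}}(1)$ after dividing by $T^{1/2}$, so it is not sufficient on its own. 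To improve it, I would use the geometric weights more carefully. Fix $m$ and split the sum at $t=T-m$. The tail $t\le T-m$ is at most $|\beta|^{m}$ times an $O_{\mathbb{P}}(T^{1/2})$ quantity; more sharply, it is bounded by $\{\sum_{t\le T-m}\beta^{2(T-t)}z_t^2\}^{1/2}\{\sum u_{t-1}^2\}^{1/2}$. The last $m$ terms are bounded by $\sup_t|z_t|\cdot\sum_{t>T-m}|u_{t-1}|$. Because $\max_{t\le T}|u_{t-1}|=o_{\mathbb{P}}(T^{1/2})$ whenever $T^{-1}\sum u_{t-1}^2$ converges (the last summand $u_{T-1}^2/T$ is a difference of two convergent averages), this block is $o_{\mathbb{P}}(T^{1/2})$. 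Letting $m\to\infty$ slowly then gives a numerator of order $o_{\mathbb{P}}(1)$.

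If this route fails, the fallback is to note that $u_t$ is a geometric filter of the $\epsilon_t$. Then $\mathbb{E}u_t^2\le M^2(1-\alpha)^{-2}$ uniformly, where the initial values are absorbed using Assumption \ref{ass:var}. With that bound, Cauchy–Schwarz gives $\mathbb{E}|\sum_t\beta^{T-t}z_tu_{t-1}|=O(1)$ directly, and the numerator is $O_{\mathbb{P}}(T^{-1/2})$.
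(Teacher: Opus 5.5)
Your proposal is correct. For $\hat{\phi}(u_{t-1},1)$ and $\hat{\phi}(w_{t-1},1)$ it follows the paper's argument. The first entry is handled by continuous mapping. The second uses rescaling by $\beta^{T-2}$, the bound $\mathbb{E}|\sum_t\beta^{T-t}z_t|\le K^{1/2}/(1-|\beta|)$, and the almost surely positive limit of $\beta^{2(T-2)}\sum_t w_{t-1}^2$ from Lemma \ref{lem:B_T} and Assumption \ref{ass:z_0}.

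The genuine difference is in the cross entry $\hat{\phi}(u_{t-1},w_{t-1})$. The paper bounds the numerator in $L^1$, via $\mathbb{E}|T^{-1/2}\sum_t\beta^{T-t}z_tu_{t-1}|\le T^{-1/2}\sum_t|\beta|^{T-t}\{\mathbb{E}z_t^2\,\mathbb{E}u_{t-1}^2\}^{1/2}$. It reads $\mathbb{E}[u_{t-1}^2]=\sigma_u^2$ for every $t$ off the way Assumption \ref{ass:stableAR} is phrased. That is essentially your fallback. You instead obtain the uniform moment bound from the AR(1) filter, which needs $\mathbb{E}u_0^2<\infty$; applying Lemma \ref{lem:B_T} to $w_t$ silently requires the same kind of condition on $w_0$. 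Your primary route, pathwise Cauchy--Schwarz after splitting off the last $m$ summands, uses only the in-probability limit of $T^{-1}\sum_t u_{t-1}^2$ and no moments of individual $u_t$. It is therefore more general than the paper's bound. It is also longer, and it gives only $o_{\mathbb{P}}(1)$ rather than the paper's explicit $O_{\mathbb{P}}(T^{-1/2})$ rate.

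One step in your primary route needs tightening. Your parenthetical (the last summand is a difference of two convergent averages) only gives $u_{T-j}^2/T\overset{\mathbb{P}}{\rightarrow}0$ for each \emph{fixed} $j$. It does not establish $\max_{t\le T}|u_{t-1}|=o_{\mathbb{P}}(T^{1/2})$, so it does not justify letting $m=m_T\to\infty$ with $T$. That maximal claim can be proved using monotonicity of the partial sums of $u_{t-1}^2$, but it needs its own argument, and you do not need it. Fix $m$ and split the sum into two blocks.
\begin{itemize}
\item The block $t>T-m$, divided by $T^{1/2}$, is at most $\sup_t|z_t|\cdot T^{-1/2}\sum_{t>T-m}|u_{t-1}|=o_{\mathbb{P}}(1)$, since $\sup_t|z_t|<\infty$ almost surely by Lemma \ref{lem:measurability}.
\item The block $t\le T-m$, divided by $T^{1/2}$, is at most $|\beta|^m\{\sum_{t\le T-m}\beta^{2(T-m-t)}z_t^2\}^{1/2}\{T^{-1}\sum_t u_{t-1}^2\}^{1/2}$. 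By (\ref{eq:z_L2}), the middle factor has second moment at most $K/(1-\beta^2)$ uniformly in $T$ and $m$.
\end{itemize}
Hence $\limsup_T\mathbb{P}(|T^{-1/2}\sum_t\beta^{T-t}z_tu_{t-1}|>\varepsilon)$ is bounded by a quantity that tends to zero as $m\to\infty$. That is exactly the required $o_{\mathbb{P}}(1)$.
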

\noindent \textbf{Proof} Recall that $H_T$ consists of a zero diagonal and correlations $\phi(\cdot, \cdot)$ on the off-diagonal. This proof examines those off-diagonal elements. For $\hat{\phi}(u_{t-1},1)$, Assumption \ref{ass:stableAR} and the continuous mapping theorem immediately imply
\begin{equation*}
    \hat{\phi}(u_{t-1},1) = T^{-1/2} \bigg[T^{-1}\sum_{t=1}^T u_{t-1} \bigg]\bigg[T^{-1}\sum_{t=1}^T u_{t-1}^2 \bigg]^{-1/2} \overset{\mathbb{P}}{\rightarrow} 0.
\end{equation*}
For $\hat{\phi}(u_{t-1},w_{t-1})$, examine the covariance in the numerator $\sum_{t=1}^T w_{t-1} u_{t-1}$. Note that via the identity $z_t = \beta^{t-2}w_{t-1}$
\begin{equation*}
    T^{-1/2} \beta^{T-2} \sum_{t=1}^T w_{t-1} u_{t-1} = T^{-1/2} \sum_{t=1}^T \beta^{T-t} z_t u_{t-1}.
\end{equation*}
Applying the Cauchy-Schwarz inequality and the bounds $\sup_t \mathbb{E}z_t^2 \leq K$ from Lemma \ref{lem:z_T_bounds}, requiring Assumption \ref{ass:var} and $\mathbb{E}[u_t^2]=\sigma_u^2$ from Assumption \ref{ass:stableAR}, one then finds as $T \rightarrow \infty$
\begin{equation*}
    \mathbb{E} \Big| T^{-1/2}\beta^{T-2} \sum_{t=1}^T w_{t-1} u_{t-1} \Big| \leq T^{-1/2}  \sum_{t=1}^T |\beta|^{T-t} \{ \mathbb{E} z_{t}^2 \mathbb{E} u_{t-1}^2 \}^{1/2} \leq T^{-1/2} \frac{\sqrt{ K \sigma_u^2}}{1-|\beta|} \rightarrow 0.
\end{equation*}
Hence $T^{-1/2} \beta^{T-2} \sum_{t=1}^T w_{t-1} u_{t-1} \overset{\mathbb{P}}{\rightarrow} 0$. Recall that by Theorem \ref{thm:and_dist}, requiring Assumption \ref{ass:var}, that $\beta^{2(T-2)}\sum_{t=1}^Tw_{t-1}^2 \overset{a.s.}{\rightarrow} (1-\beta^2)^{-1}z^2$ and that Assumption \ref{ass:z_0} implies $\{z\in \{0\} \}$ is a null set. Furthermore, recall that $T^{-1} \sum_{t=1}^T u_t^2 \overset{\mathbb{P}}{\rightarrow} \sigma_u^2$ by Assumption \ref{ass:stableAR}. It follows that
\begin{equation*}
    \hat{\phi}(u_{t-1},w_{t-1}) = \bigg[ \beta^{T-2}T^{-1/2} \sum_{t=1}^T w_{t-1} u_{t-1} \bigg] \bigg[ \beta^{2(T-2)} \sum_{t=1}^T w_{t-1}^2 T^{-1}\sum_{t=1}^T u_{t-1}^2\bigg]^{-1/2} \overset{\mathbb{P}}{\rightarrow} 0.
\end{equation*}
For $\hat{\phi}(1,w_t) \overset{\mathbb{P}}{\rightarrow} 0$, again examine the covariance in the numerator, recalling the identity $\beta^{t-2} w_{t-1} = z_t$. This yields
\begin{equation*} \textstyle
    T^{-1/2} \beta^{T-2} \sum_{t=1}^T w_{t-1} = T^{-1/2} \sum_{t=1}^T \beta^{T-t} z_t.
\end{equation*}
Recalling $\sup_t \mathbb{E}z_t^2 \leq K$, via the triangle and Cauchy-Schwarz inequalities, as $T \rightarrow \infty$
\begin{equation*} \textstyle
    \mathbb{E} \big| T^{-1/2} \beta^{T-2} \sum_{t=1}^T w_{t-1}| \leq T^{-1/2} \sum_{t=1}^T |\beta|^{T-t} \{\mathbb{E}z_t^2\}^{1/2} \leq T^{-1/2}\sqrt{K}(1-|\beta|)^{-1} \rightarrow 0.
\end{equation*}
Hence, $T^{-1/2} \beta^{T-2} \sum_{t=1}^T w_{t-1} \overset{\mathbb{P}}{\rightarrow} 0$. As before, the denominator of $\hat{\phi}(1,w_{t-1})$ converges by Theorem \ref{thm:and_dist} to $(1-\beta^2)^{-1}z^2$, which is non-zero with probability $1$ by Assumption \ref{ass:z_0}. We therefore find
\begin{equation*}
    \hat{\phi}(1,w_{t-1}) = \Big[ T^{-1/2} \beta^{T-2} \sum_{t=1}^T w_{t-1} \Big] \Big[ \beta^{2(T-2)} \sum_{t=1}^T w_{t-1}^2 \Big]^{1/2} \overset{\mathbb{P}}{\rightarrow} 0._{\qedsymbol{}}
\end{equation*}

\subsubsection{Proof of Proposition \ref{prop:consistency}}
By Assumption \ref{ass:stableAR} and the continuous mapping theorem, one can establish
\begin{equation}\label{eq:AR1-2}
    \frac{T^{-1}\sum_{t=1}^T u_{t-1} \epsilon_t}{ T^{-1} \sum_{t=1}^Tu_{t-1}^2 } \overset{\mathbb{P}}{\rightarrow} \frac{\mathbb{E}[u_{t-1} \epsilon_t]}{\mathbb{E}[u_{t-1}^2]} = 0, \quad \frac{\sum_{t=1}^T \epsilon_t}{T} \overset{\mathbb{P}}{\rightarrow} \mathbb{E}[\epsilon_t] =0.
\end{equation}
Next, recall that $w_t$ is an explosive AR(1) with innovations satisfying Assumptions \ref{ass:var}-\ref{ass:z_0}. Hence, Corollary \ref{cor:consistency} applies. This delivers
\begin{equation} \label{eq:ar3}
    \frac{\sum_{t=1}^T w_{t-1} \epsilon_t}{\sum_{t=1}^T w_{t-1}^2} \overset{a.s.}{\rightarrow} 0.
\end{equation}
As Assumptions \ref{ass:var}, \ref{ass:z_0}, \ref{ass:stableAR} hold, Lemma \ref{lem:H} applies and $H_T=o_\mathbb{P}(1)$. We therefore have that
\begin{equation*}
    (\hat{\theta} - \theta) = M' (Y'Y)^{-1}Y'\epsilon = M'   \left[ D_T (I_3 + o_\mathbb{P}(1)) D_T\right]^{-1}  \sum_{t=1}^T \begin{pmatrix}
        u_{t-1} \epsilon_t \\
        \epsilon_t \\
        w_{t-1} \epsilon_t
    \end{pmatrix}.
\end{equation*}
By $(\ref{eq:AR1-2})$-$(\ref{eq:ar3})$, requiring Assumptions \ref{ass:var}-\ref{ass:z_0} and \ref{ass:stableAR} it follows that
\begin{equation*}
    (\hat{\theta}-\theta) = M' (I_3+o_\mathbb{P}(1))^{-1} \begin{pmatrix}
        \sum_{t=1}^T u_{t-1} \epsilon_t / \sum_{t=1}^T u_{t-1}^2 \\
        \sum_{t=1}^T\epsilon_t / T \\
        \sum_{t=1}^T w_{t-1} \epsilon_t / \sum_{t=1}^T w_{t-1}^2
    \end{pmatrix} \overset{\mathbb{P}}{\rightarrow} 0._{\qedsymbol{ }}
\end{equation*}

\newpage
\printbibliography
\end{document}